\documentclass[10pt,aps,pra,twocolumn,floatfix,nofootinbib,superscriptaddress,longbibliography]{revtex4-2}
\usepackage{placeins}

\usepackage{bm,graphicx,mathrsfs,amsmath,amssymb,mathtools,makecell,bbm, amsthm,dsfont,color,nicefrac,framed,enumitem,tikz,physics,wrapfig,amsfonts,tcolorbox,times,txfonts,lipsum,nicematrix}
\usepackage{float}
\usepackage{comment}

\renewcommand{\ket}[1]{\lvert #1\rangle}

\newcommand{\iden}{\mathbbm{1}}
\newcommand{\id}{\mathbbm{1}}
\newcommand{\Wtwo}{\mathcal{W}_2}
\newcommand{\Qbr}{\mathcal{Q}_\mathrm{br}}

\definecolor{alecolor}{RGB}{110,190,110}      % verde um pouco mais escuro
\definecolor{changescolor}{RGB}{0,120,0}

\definecolor{mycitecolor}{RGB}{204,170,0}     % citações em amarelo
\definecolor{mylinkcolor}{RGB}{110,190,110}   % verde mais escuro
\definecolor{myurlcolor}{RGB}{110,190,110}    % URLs em verde mais escuro

\usepackage[
    colorlinks=true,
    citecolor=mylinkcolor,
    linkcolor=mycitecolor,
    urlcolor=myurlcolor
]{hyperref}
\usepackage{ragged2e}
\definecolor{equationcolor}{RGB}{154,135,198}
\usetikzlibrary{arrows.meta}
\usetikzlibrary{positioning}
\usetikzlibrary{calc}

\newtheorem{theorem}{Theorem}
\newtheorem{definition}[theorem]{Definition}
\newtheorem{corollary}[theorem]{Corollary}
\newtheorem{lemma}[theorem]{Lemma}
\newtheorem{proposition}[theorem]{Proposition}
\theoremstyle{remark}
\newtheorem{remark}[theorem]{Remark}
\newcommand{\WQ}{W_{Q}}
\newcommand{\Hmin}{H_{\min}}
\newcommand{\B}{\mathcal{B}}
\newcommand{\corr}[3]{\langle B_{#1}C_{#2}\rangle^{[#3]}}
\newcommand{\avg}[2]{\langle #1\rangle^{[#2]}}
\providecommand{\tr}{\operatorname{tr}}

\begin{document}
\title{Randomness Certification and Trade-offs in the Prepare-and-Broadcast Scenario}

\author{Tailan S. Sarubi}
\affiliation{Physics Department, Federal University of Rio Grande do Norte, Natal, 59072-970, Rio Grande do Norte, Brazil}
\affiliation{International Institute of Physics, Federal University of Rio Grande do Norte, 59078-970, Natal, Brazil}
\author{Moisés Alves}
\affiliation{Physics Department, Federal University of Rio Grande do Norte, Natal, 59072-970, Rio Grande do Norte, Brazil}
\affiliation{International Institute of Physics, Federal University of Rio Grande do Norte, 59078-970, Natal, Brazil}
\affiliation{QuIIN - Quantum Industrial Innovation, EMBRAPII CIMATEC Competence Center in Quantum Technologies, SENAI CIMATEC, Av. Orlando Gomes 1845, 41650-010, Salvador, BA, Brazil. }

\author{Vinícius F. Alves}
\affiliation{Physics Department, Federal University of Rio Grande do Norte, Natal, 59072-970, Rio Grande do Norte, Brazil}
\affiliation{International Institute of Physics, Federal University of Rio Grande do Norte, 59078-970, Natal, Brazil}
\author{Santiago Zamora}
\affiliation{International Institute of Physics, Federal University of Rio Grande do Norte, 59078-970, Natal, Brazil}
\author{Diogo H. G. Duarte}
\affiliation{Physics Department, Federal University of Rio Grande do Norte, Natal, 59072-970, Rio Grande do Norte, Brazil}
\affiliation{International Institute of Physics, Federal University of Rio Grande do Norte, 59078-970, Natal, Brazil}
\author{A.~de~Oliveira~Junior}
\affiliation{Center for Macroscopic Quantum States bigQ, Department of Physics,
Technical University of Denmark, Fysikvej 307, 2800 Kgs. Lyngby, Denmark}
\author{Rafael Chaves}
\affiliation{International Institute of Physics, Federal University of Rio Grande do Norte, 59078-970, Natal, Brazil}

\begin{abstract}
We investigate the prepare-and-broadcast scenario, a multipartite extension of dimension-constrained prepare-and-measure experiments in which a quantum system is distributed to multiple receivers. We derive fundamental trade-offs between prepare-and-measure witnesses, Bell nonlocality, and quantum random access code performance. We further develop a semi-device-independent randomness certification framework based on prepare-and-broadcast witnesses, showing that the maximal quantum violation certifies two bits of joint randomness, exceeding the limit achievable from the CHSH inequality while remaining robust to noise. Finally, we show that the prepare-and-broadcast scenario naturally accommodates stronger adversarial models in which the eavesdropper retains a quantum system correlated with the measurement device, providing a natural framework for semi-device-independent randomness certification against quantum side information.
\end{abstract}

\maketitle

\section{Introduction}

Quantum communication exploits the properties of quantum systems to accomplish information-processing tasks beyond the capabilities of classical communication. A defining feature of quantum information is that unknown quantum states cannot be perfectly copied \cite{scarani2005quantum}. While this limitation underlies important applications such as secure quantum communication, it also fundamentally constrains how the information encoded in a single quantum system can be distributed among multiple users. Understanding the capabilities and limitations imposed by this restriction is therefore a central question in quantum communication \cite{fan2014quantum}.

Dimension-constrained prepare-and-measure (PAM) scenarios have proved to be a particularly successful framework for addressing such questions \cite{bohr2026quantum}. In these scenarios, a preparation device encodes information into a physical system of bounded dimension that is transmitted to a measurement device. Comparing the correlations achievable with classical and quantum systems under the same communication constraint has led to fundamental results on dimension witnesses ~\cite{gallego2010device,tavakoli2015quantum,chaves2018causal,Poderini2020}, quantum random access codes (QRACs) ~\cite{Ambainis2002,Nayak1999,Ambainis2008QRAC,pawlowski2010entanglement}, entanglement certification~\cite{tavakoli2018semi,moreno2021semi,pauwels2022entanglement,vieira2023interplays}, non-stabilizerness certification~\cite{zamora2025semi}, and semi-device-independent information processing \cite{pawlowski2009information,pawlowski2011semi,li2012semi,woodhead2014imperfections,lunghi2015self,passaro2015optimal,chaves2015information,tavakoli2021correlations,alves2026semi}. Owing to its minimal physical assumptions, the PAM scenario has also become a standard framework for certifying nonclassicality and randomness. However, because in standard PAM scenarios the communicated system is delivered to a single receiver, its framework does not address how the information carried by a quantum system can be distributed among multiple observers.

To address this limitation, the prepare-and-broadcast (PAB) scenario was introduced \cite{wang2019characterising,ioannou2022receiver,jia2025characterizing,sarubi2026prepare} as a natural multipartite extension of the PAM framework. Instead of transmitting the communicated system directly to a single receiver, it is processed by a broadcast channel whose outputs are distributed to multiple observers. This combines features of prepare-and-measure and Bell scenarios within a single causal structure: each receiver defines a marginal prepare-and-measure experiment, while their joint conditional statistics exhibit Bell correlations. Early formulations of the PAB scenario relied on partial knowledge of the prepared states \cite{wang2019characterising,ioannou2022receiver,jia2025characterizing}, whereas more recently a fully semi-device-independent formulation was introduced in which the communication dimension is the only assumption \cite{sarubi2026prepare}. Within this framework, the classical, quantum, and nonsignaling correlation sets were characterized, and certification methods and activation phenomena were proposed.

Having established the framework, a natural next step is to understand how quantum communication advantages can be distributed. Since the information encoded in a single quantum system must now be shared among multiple receivers, one expects fundamental trade-offs imposed by quantum mechanics. Moreover, the coexistence of prepare-and-measure and Bell correlations within the same experiment suggests new connections between different manifestations of nonclassicality. In this work, we investigate these trade-offs and develop randomness certification in the prepare-and-broadcast scenario.

Our first objective is to characterize how quantum communication advantages can be distributed in the prepare-and-broadcast scenario. To this end, we investigate trade-offs between violations of prepare-and-measure witnesses in the marginal behaviors, Bell inequality violations in the conditional correlations, and the prepare-and-measure advantages attainable by different receivers. While different manifestations of nonclassicality can coexist, we show that they obey quantitative trade-offs. In particular, since the four-preparation witness is directly related to the success probability of the $2\rightarrow1$ quantum random access code \cite{Ambainis2008QRAC}, these trade-offs admit a natural operational interpretation as limitations on distributing a quantum communication advantage. Interestingly, the optimal boundary coincides with that of asymmetric phase-covariant quantum cloning~\cite{Bruss2000,DAriano2003}, establishing a direct connection between prepare-and-broadcast communication and optimal quantum cloning.

We also investigate randomness certification in the prepare-and-broadcast scenario. Assuming bounded communication dimension and considering different models of an eavesdropper, we develop a semi-device-independent randomness certification protocol based on prepare-and-broadcast witnesses. Using semidefinite-programming relaxations, we bound the adversary's guessing probability and quantify the amount of certifiable randomness as a function of the observed witness value. Remarkably, for the prepare-and-broadcast witness considered here, the maximal quantum violation certifies two bits of joint output randomness, exceeding the maximum of approximately $1.23$ bits obtainable from the paradigmatic CHSH inequality \cite{Pironio2010}. Moreover, this maximal randomness can be certified robustly, tolerating significantly higher levels of noise than alternative protocols capable of certifying two bits of randomness, such as those based on tilted Bell inequalities \cite{acin2012randomness}.

The paper is organized as follows. In Sec. \ref{sec:thepampab}, we introduce the PAM and PAB scenarios. In particular, we derive a necessary and sufficient criterion for nonclassicality in the four-preparation PAM scenario associated with the $2\rightarrow1$ random access code, generalizing the results of Ref. \cite{Poderini2020}. In Sec. \ref{sec:broadcast_qrac_tradeoff}, we investigate trade-offs between violations of PAM witnesses and Bell inequalities, and establish their connection with the optimal trade-off for the $2\rightarrow1$ QRAC obtained from asymmetric phase-covariant quantum cloning. In Sec. \ref{sec:randomness}, we study semi-device-independent randomness certification in both the PAM and PAB scenarios under the assumption of classical side information. We then employ the PAB scenario to strengthen the adversarial model in Sec. \ref{sec:qeve} by allowing the eavesdropper to retain a quantum system correlated with the measurement device. Finally, in Sec. \ref{sec:discussion}, we discuss our findings, while Appendices \ref{app:A}, \ref{app:support_function_lower_bounds} and \ref{app:endpoint} contain technical details and proofs.

\section{The prepare-and-measure and prepare-and-broadcast scenarios}\label{sec:thepampab}

The prepare-and-measure scenario, illustrated in Fig.~\ref{fig:PAM_scenario}, consists of a preparation device and a measurement device \cite{gallego2010device}. In each run, the preparation device receives an input $x\in\mathcal X$ and encodes it into a physical system, which is sent to the measurement device. The latter receives an input $y\in\mathcal Y$ and produces an output $b\in\mathcal B$, giving rise to the conditional probability distribution $p(b|x,y)$.

In a classical realization, the communicated system is represented by a message $m\in\mathcal M$, while in the quantum case it is represented by a quantum state $\rho_x$. We also allow the devices to share classical randomness described by a hidden variable $\lambda$, assumed to be independent of the freely chosen inputs. The corresponding classical behaviors admit the decomposition
\begin{equation}
p(b|x,y)=\sum_{\lambda,m} p(\lambda)\, p(m|x,\lambda)\, p(b|m,y,\lambda).
\label{eq:classical_pam}
\end{equation}

Without further restrictions, this model is trivial, since the preparation device could encode the entire input $x$ into the communicated message. Throughout this work we therefore impose a dimension constraint, requiring $|\mathcal M| = d < |\mathcal X|$. For fixed input and output alphabets, the resulting classical behaviors form a convex polytope whose facets define classical dimension witnesses.

In the quantum model, the communicated message is replaced by a $d$-dimensional quantum system. For each input $x$, the preparation device emits a state $\rho_x\in\mathcal D(\mathbb C^d)$, while for each measurement choice $y$ the measurement device performs a POVM $\{M_{b|y}\}_b$. The observed statistics are then given by Born's rule, $p(b|x,y)=\mathrm{tr}\!\left(\rho_xM_{b|y}\right)$. We focus on the binary PAM scenarios with $|\mathcal M|=|\mathcal Y|=|\mathcal B|=2$, and three or four preparations, that is, $|\mathcal X|=3,4$. Given a set of qubit preparations, the relevant question is whether there exist binary measurements for which the resulting behavior lies outside the classical binary-message polytope. Equivalently, we ask whether the preparations admit a classical binary-message model.

\begin{figure}[t]
\centering
\includegraphics{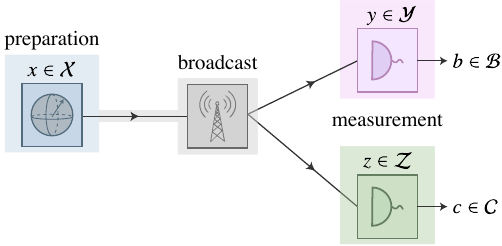}
\caption{\textbf{PAB scenario.} In each run, the preparation device receives an input $x\in\mathcal X$, prepares a quantum state that undergoes a broadcasting map that distributes it to two spatially separated measurement devices, receiving inputs $y\in\mathcal Y$ and $z\in\mathcal Z$ and producing local outputs $b\in\mathcal B$ and $c\in\mathcal C$, respectively. The PAB scenario is thus described by a probability distribution $p(b,c\vert x,y,z)$. Notice that if one of the parties is traced out, for instance, Charlie, one recovers the usual PAM scenario, described by a probability distribution $p(b\vert x,y)$}
\label{fig:PAM_scenario}
\end{figure}

For three preparations, the classical polytope is characterized (up to relabelings) by the facet
\begin{equation}
S_{3}=E_{00}+E_{01}+E_{10}-E_{11}-E_{20}\le3,
\label{eq:S3}
\end{equation}
where $E_{xy}=p(0|x,y)-p(1|x,y)$.

Optimizing over Bob's measurements, Ref.~\cite{Poderini2020} showed that a triple of qubit preparations is classically compatible if and only if
\begin{equation}
\max_{(a,b,c)\in\{(0,1,2),(0,2,1),(2,1,0)\}} \left( \|\vec r_a+\vec r_b-\vec r_c\|_2 + \|\vec r_a-\vec r_b\|_2 \right) \le3,
\label{eq:criterion3prep}
\end{equation}
where $r_i$ stands for the Bloch vector representing the prepared quantum state.

For four preparations, the classical polytope contains, besides embedded copies of Eq.~\eqref{eq:S3}, the genuinely four-preparation facet
\begin{equation}
S_{4} = E_{00}+E_{01}+E_{10}-E_{11} -E_{20}+E_{21} -E_{30}-E_{31} \le4.
\label{eq:S4}
\end{equation}
Our first result is to extend the characterization of Ref.~\cite{Poderini2020} to this scenario. Defining
\begin{equation}
\Delta_{ij|kl} = \| \vec r_i+\vec r_j-\vec r_k-\vec r_l \|_2,
\end{equation}
and optimizing Eq.~\eqref{eq:S4} over Bob's measurements yields the independent conditions
\begin{align}
\Delta_{01|23}+\Delta_{02|13}&\le4,\nonumber\\
\Delta_{01|23}+\Delta_{12|03}&\le4,\label{eq:criterion4prepA}\\
\Delta_{02|13}+\Delta_{12|03}&\le4,
\end{align}
which correspond to the inequivalent relabelings of the four-preparation facet (see Appendix~\ref{app:A}).

In addition, every subset of three preparations must satisfy the three-preparation criterion~\eqref{eq:criterion3prep},
\begin{equation}
\max_{\substack{ a,b,c\in\{0,1,2,3\}\\ a,b,c\ \mathrm{distinct} }} \left( \|\vec r_a+\vec r_b-\vec r_c\|_2 + \|\vec r_a-\vec r_b\|_2 \right) \le3.
\label{eq:criterion4prepB}
\end{equation}
Together, Eqs.~\eqref{eq:criterion4prepA} and \eqref{eq:criterion4prepB} provide a necessary and sufficient geometric criterion for classical compatibility in the four-preparation binary PAM scenario.

The prepare-and-broadcast scenario \cite{sarubi2026prepare}, also illustrated in Fig.~\ref{fig:PAM_scenario}, extends the standard prepare-and-measure setting by replacing the single receiver with several receivers. Throughout this work, we consider the simplest nontrivial case of two receivers, Bob and Charlie. Alice receives an input $x$, prepares a $d$-dimensional physical system, and sends it to a broadcasting device. The broadcaster processes the incoming system and distributes two outputs to Bob and Charlie, who independently choose measurement settings $y$ and $z$ and produce outcomes $b$ and $c$. The observed behavior is therefore described by the conditional probabilities $p(b,c|x,y,z)$.

The PAB scenario naturally combines features of prepare-and-measure and Bell experiments. By marginalizing over one receiver, for example, $p(b|x,y)=\sum_c p(b,c|x,y,z)$, one recovers an ordinary PAM experiment. Conversely, if the preparation input is fixed, the remaining correlations between Bob and Charlie correspond to a standard Bell scenario. Thus, a single PAB experiment simultaneously gives access to communication witnesses through the marginal behaviors and Bell nonlocality through the conditional joint correlations.

As in Ref.~\cite{sarubi2026prepare}, by using two labels, we distinguish different classes of resources according to the nature of Alice's communication and the broadcasting device. The first label specifies whether Alice sends a classical ($\mathcal C$) or quantum ($\mathcal Q$) system, while the second specifies whether the broadcaster is classical, quantum, or nonsignalling. The fully classical model ($\mathcal{CC}$) is therefore described by
\begin{equation}
p_{\mathcal{CC}}(b,c|x,y,z) = \sum_{m,\lambda} p(\lambda) D_\lambda^A(m|x) D_\lambda^B(b|y,m) D_\lambda^C(c|z,m),
\label{eq:cc_model}
\end{equation}
where $m$ is a classical message of bounded cardinality $|m|=d$, $\lambda$ denotes shared randomness, and the response functions are deterministic without loss of generality.

Replacing the classical message by a quantum state and allowing for a quantum broadcaster yields the fully quantum model ($\mathcal{QQ}$),
\begin{equation}
p_{\mathcal{QQ}}(b,c|x,y,z) = \sum_\lambda p(\lambda) \, \mathrm{Tr} \!\left[ \mathcal T_\lambda(\rho_{x,\lambda}) \left( M^B_{b|y}\otimes M^C_{c|z} \right) \right],
\label{eq:qq_model}
\end{equation}
where $\rho_{x,\lambda}$ is a $d$-dimensional quantum state and $\mathcal T_\lambda$ is a quantum broadcast channel. Hybrid models, such as $\mathcal{CQ}$ and $\mathcal{CNS}$, are obtained by allowing the broadcaster to distribute quantum or nonsignalling correlations while Alice is restricted to classical communication. Their precise definitions and properties can be found in Ref.~\cite{sarubi2026prepare}.

To our goals in Sec. \ref{sec:randomness} regarding the certification of randomness, we will consider the PAM inequalities \eqref{eq:S3} and \eqref{eq:S4} as well as a PAB inequality derived in  Ref.~\cite{sarubi2025pab}, given by
\begin{align}
W_{\mathcal{PAB}} = {}& 2\langle B_0C_0\rangle^{[0]}+2\langle B_1C_1\rangle^{[0]} +2\langle B_0C_1\rangle^{[1]} \nonumber\\ & -2\langle B_1C_0\rangle^{[1]}-\langle B_0C_0\rangle^{[2]} -\langle B_0C_1\rangle^{[2]} \nonumber\\ &+\langle B_1C_0\rangle^{[2]} -\langle B_1C_1\rangle^{[2]} \le_{CC} 6 \le_{QQ} 8+2\sqrt{2},
\label{eq:CC2}
\end{align}
where the subscripts $CC$ and $QQ$ refer to the maximum values achieved in purely classical or fully quantum realization of the PAB scenario. 

\section{Prepare and Broadcast trade-offs}
\label{sec:broadcast_qrac_tradeoff}

\begin{figure*}[t!]
\includegraphics{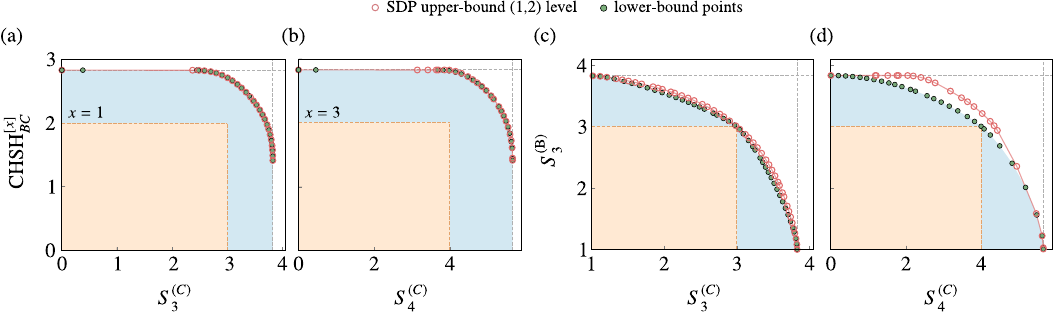} 
\caption{\textbf{Trade-offs between PAM and Bell witnesses in the PAB scenario.} Panels (a) and (b) show the trade-off between Charlie's marginal PAM witness and the CHSH value evaluated by Bob and Charlie conditional on Alice's preparation input: (a) $\mathrm{CHSH}_{BC}^{[1]}$ versus $S_3^{(C)}$, and (b) $\mathrm{CHSH}_{BC}^{[3]}$ versus $S_4^{(C)}$. For $S_3$, the choice $x=1$ is representative of the equivalence class $x\in\{0,1\}$. For $S_4$, all four conditioning inputs are equivalent under relabelings, and $x=3$ is chosen as a representative. Panels (c) and (d) show marginal receiver--receiver trade-offs: (c) $S_3^{(B)}$ versus $S_3^{(C)}$, and (d) $S_3^{(B)}$ versus $S_4^{(C)}$.  The pale-orange rectangles mark the regions in which both plotted quantities obey their classical bounds, namely $\mathrm{CHSH}\leq 2$, $S_3\leq 3$, and $S_4\leq 4$, as applicable. The pale-blue areas are the portions of the regions certified by the explicit qubit strategies that lie outside the corresponding classical rectangles. The unshaded regions between the green lower bounds and red SDP upper bounds in panels (c) and (d) represent the remaining numerical uncertainty in the optimal trade-off. In panels (a) and (b), the achievable region includes points at which both the marginal PAM inequality and the conditioned CHSH inequality are violated. In panel (d), the gap between the lower and upper bounds near $(S_4^{(C)},S_3^{(B)})=(4,3)$ means that simultaneous violations are not excluded by the present relaxation.  Grey dashed lines mark the standalone quantum maxima of the individual witnesses: $2\sqrt{2}$ for CHSH, $1+2\sqrt{2}$ for $S_3$, and $4\sqrt{2}$ for $S_4$. Green filled circles are lower bounds obtained from explicit qubit strategies, while red open circles are SDP upper bounds from the $(1,2)$ level of the SDP relaxation proposed in Ref.~\cite{sarubi2026prepare}.}
\label{fig:combined_tradeoffs}
\end{figure*}

As discussed above, the PAB scenario contains both PAM and Bell scenarios as special cases. This makes it natural to ask how the corresponding forms of nonclassicality constrain each other. In this section, we study trade-offs between violations of marginal PAM witnesses and violations of Bell inequalities between Bob and Charlie. Operationally, this amounts to asking how the information carried by a single qubit can be distributed between two receivers through a broadcasting channel. 

Explicit qubit strategies give lower bounds, obtained with the support-function method described in Appendix.~\ref{app:support_function_lower_bounds},  while the SDP relaxation introduced in Ref. \cite{sarubi2026prepare} gives upper bounds. When the two coincide within numerical precision, we regard the corresponding boundary as certified within the relaxation used here. We consider two types of trade-offs. First, we compare marginal PAM witnesses with conditioned Bell violations between Bob and Charlie. We find that PAM and Bell violations can coexist, but only up to a point; namely, increasing the value of the PAM witness reduces the maximal Bell violation that can be achieved. Second, we compare the PAM witnesses obtained from Bob's and Charlie's marginal behaviors. In this case, the trade-off is stronger. A violation of the relevant PAM witness by one receiver prevents a violation of the corresponding marginal PAM witness by the other.

We begin with the trade-off between PAM and Bell inequalities. For a fixed preparation input $x$, the behavior $p(b,c\vert x,y,z)$ defines a standard bipartite Bell scenario between Bob and Charlie. We denote by $\mathrm{CHSH}^{[x]}$ the CHSH expression evaluated on this conditional behavior. Since Bob and Charlie each have two dichotomic measurements, up to relabelings, the CHSH inequality is the only nontrivial Bell inequality in this conditional scenario~\cite{clauser1969proposed}.

For the $S_3$ witness, the inputs do not all play the same role. This gives two inequivalent classes of curves: one for $x=0,1$ and another for $x=2$. For simplicity, in Fig.~\ref{fig:combined_tradeoffs} we focus on the case $x=0$. For $S_4$, the four inputs are treated equivalently up to relabelings of the facet, and the corresponding conditioned CHSH trade-offs coincide. The curves show that it is possible to violate a marginal PAM witness and a conditioned CHSH inequality at the same time. From an operational point of view, this means that the same broadcast realization can provide a quantum advantage in a PAM-type task and nonclassical correlations between Bob and Charlie. For example, the PAM task can be interpreted as a random access code \cite{Ambainis2002,Nayak1999,Ambainis2008QRAC}, while Bell violations are linked to advantages in distributed information-processing tasks \cite{brukner2004bell}. However, the coexistence is limited. In all cases, we find a threshold value of the PAM witness beyond which the conditioned CHSH inequality can no longer be violated.

We then consider the trade-off between the marginal PAM witnesses of Bob and Charlie. Here the behavior is different. As shown in Fig.~\ref{fig:combined_tradeoffs}, if Bob violates the relevant $S_3$ or $S_4$ witness, then Charlie's corresponding marginal witness remains below its classical bound, and conversely. Thus, while the broadcast system can distribute some information to both receivers, the part of the information responsible for a PAM advantage cannot be copied into two simultaneous marginal PAM violations.

In the next subsection, we analyze this effect in more detail for the $S_4$ witness. This case has a direct interpretation in terms of the success probability of the $2\to1$ QRAC and connects naturally with asymmetric phase-covariant cloning.

\subsection{Broadcasting, random access codes and phase-covariant cloners}

Random access codes (RAC) provide a standard operational way of quantifying how well classical information can be encoded into a dimension-limited physical system and later partially decoded. In the usual $2\to1$ quantum random access code, Alice encodes two bits $x=(x_0,x_1)$ into a single quantum system and sends it to a receiver. The receiver is asked to guess one of the two bits. The figure of merit is the average success probability, namely the probability that the receiver correctly outputs the requested bit, averaged over all possible inputs $x\in\{(0,0),(0,1),(1,0),(1,1)\}$ and over the two possible decoding tasks $y\in\{0,1\}$. A classical bit reaches the optimal success probability $3/4$, while a qubit reaches $1/2+1/(2\sqrt2)$~\cite{Ambainis2002,Nayak1999,Ambainis2008QRAC}.

In the PAB scenario, the natural question is how this qubit advantage can be distributed between two receivers. We label Alice's four preparations by $x\in\{0,1,2,3\}$ and identify each label with a two-bit string $x\leftrightarrow(x_0,x_1)$: $0\leftrightarrow(0,0)$, $1\leftrightarrow(0,1)$, $2\leftrightarrow(1,0)$, $3\leftrightarrow(1,1)$. Bob receives a query $y\in\{0,1\}$ and tries to output the bit $b=x_y$, while Charlie receives a query $z\in\{0,1\}$ and tries to output $c=x_z$. In the quantum broadcasting model of Eq.~\eqref{eq:qq_model}, Alice's qubit message is mapped to a bipartite state $\rho_x^{BC}=\mathcal T(\rho_x)$ which Bob and Charlie measure locally. For Bob, the $2\to1$ quantum random access code score is
\begin{align}
P_B &= \frac18 \sum_{x=0}^3\sum_{y=0}^1  p(b=x_y\vert x,y) \nonumber\\ &= \frac12 + \frac{1}{16} \sum_{x,y} (-1)^{x_y} \langle B_y\rangle^{[x]},
\label{eq:qrac_PB_marginal}
\end{align}
where $p_B(b\vert x,y)$ is Bob's marginal behavior. Charlie's score $P_C$ is defined analogously, replacing $B_y$, $b$, and $y$ by $C_z$, $c$, and $z$ respectively. Comparing Eq.~\eqref{eq:qrac_PB_marginal} with the definition of the $S_4$ witness gives
\begin{align}
    P_B&=\frac12+\frac{S_4^{(B)}}{16},\\
    P_C&=\frac12+\frac{S_4^{(C)}}{16}.
    \label{eq:s4_qrac_probability_relation}
\end{align}
Thus, the marginal $S_4$ witnesses have a direct QRAC interpretation. Values above $S_4=4$ correspond to success probabilities above the optimal classical RAC value $\tfrac34$. Classically, a single bit can be copied and distributed to both receivers. Therefore Bob and Charlie can simultaneously reach $P_B=P_C=\tfrac34$, or equivalently $S_4^{(B)}=S_4^{(C)}=4$. The question is whether the quantum advantage above this copyable classical point can also be shared by the two marginals.

The boundary shown in Fig.~\ref{fig:broadcast_qrac_tradeoff} answers this question. It is obtained with the support-function method of Appendix~\ref{app:support_function_lower_bounds}. The plotted curve combines explicit lower-bound strategies with the corresponding upper bounds from the SDP relaxation of Ref.~\cite{sarubi2026prepare}. The two agree within numerical precision and give
\begin{equation}
\left(P_B-\frac12\right)^2 + \left(P_C-\frac12\right)^2 = \frac18.
    \label{eq:phase_covariant_qrac_curve}
\end{equation}
Equivalently, $\big(S_4^{(B)}\big)^2+\big(S_4^{(C)}\big)^2=32$. The symmetric point of Eq.~\eqref{eq:phase_covariant_qrac_curve} is exactly the classical copyable point, $P_B=P_C=\frac34$. Thus, the quantum advantage of the $2\to1$ QRAC is not shareable in this broadcast setting. If Bob goes above the classical value, Charlie must go below it, and vice versa. At the endpoint $P_B=\frac12+\frac{1}{2\sqrt2}$,  Bob reaches the optimal qubit QRAC value, while Eq.~\eqref{eq:phase_covariant_qrac_curve} forces $P_C=\tfrac12$. In terms of the PAM witnesses, this corresponds to $S_4^{(B)}=4\sqrt2$ and $S_4^{(C)}=0$. 

\begin{figure}
    \centering
    \includegraphics{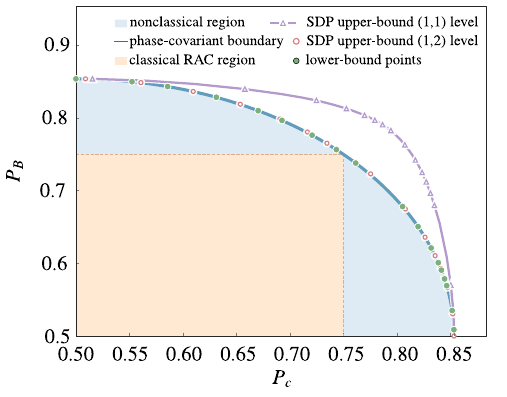}
    \caption{\textbf{Marginal QRAC trade-off in the PAB scenario.} The curve shows the possible pairs of success probabilities $(P_B,P_C)$ for Bob and Charlie in the broadcast $2\to1$ QRAC. The inner approximation obtained from explicit strategies through the support-function method coincides, up to numerical precision, with the outer approximation given by the $(1,2)$ level of the SDP relaxation. The boundary is the same as the trade-off for optimal asymmetric phase-covariant cloning: if one receiver beats the classical QRAC value $\tfrac34$, the other must fall below it.}
    \label{fig:broadcast_qrac_tradeoff}
\end{figure}

A closely related curve was obtained in the two-receiver prepare-and-measure network of Ref.~\cite{wang2019characterising}. The physical mechanism is the same: a nonorthogonal qubit encoding cannot be copied perfectly, so the useful information has to be split between the two receivers. The viewpoint, however, is different. In Ref.~\cite{wang2019characterising}, the characterization is formulated in terms of the Gram matrix of the preparations, so the inner products between the encoded states are part of the input data. Here, the curve appears inside the PAB model, where the broadcasting map is an explicit part of the realization and is optimized together with the preparations and measurements. Moreover, the feasible strategies are matched by SDP upper bounds, so the boundary is not only a lower-bound curve but is also certified by the relaxation used here.

This should also be distinguished from the sequential QRAC trade-off of Ref.~\cite{Mohan2019}. In the sequential case, Bob can use an unsharp measurement to extract partial information while leaving a useful post-measurement system for Charlie. In the broadcast case, the trade-off is not an information-disturbance effect in time. The qubit is distributed into two spatial shares in a single broadcasting step, and the two marginals must be compatible with the same broadcast map.

The same curve also has a natural interpretation in terms of cloning. The optimal $2\to1$ QRAC preparations can be chosen, up to a global unitary, as four equatorial states $\rho_{x_0x_1} \!=\! \frac12 \left( \iden+\vec r_{x_0x_1}\cdot\vec\sigma \right)$ with $\vec r_{x_0x_1} \!=\! \frac{1}{\sqrt2}((-1)^{x_0}, (-1)^{x_1}, 0)$.  These states form a square on a great circle of the Bloch sphere \cite{Ambainis2008QRAC}. 
Hence, in the broadcast version of the QRAC, one is not trying to distribute arbitrary qubit information, but only the equatorial information encoded by this code. This is precisely the setting of phase-covariant cloning, where a single copy of an unknown qubit restricted to the equator of the Bloch sphere is transformed into two approximate copies, each retaining as much of the original phase information as allowed by quantum mechanics~\cite{Bruss2000,DAriano2003}. In the asymmetric case, Bob's and Charlie's marginals are described by shrinking factors $\eta_B$ and $\eta_C$. For optimal asymmetric phase-covariant qubit cloners, these satisfy
\begin{equation}
    \eta_B^2+\eta_C^2\leq1,
    \label{eq:phase_covariant_shrinking_tradeoff}
\end{equation}
with equality on the optimal boundary \cite{Rezakhani2005AsymmetricPhaseCovariant,Chen2007AsymmetricPhaseCovariant}. This boundary can be realized, for instance, by the isometry $U_\alpha\ket{0}=\ket{00}$ and $U_\alpha\ket{1}=\cos\alpha\ket{10}+ \sin\alpha\ket{01}$ up to local unitaries. For equatorial inputs, this gives $\eta_B=\cos\alpha$ and $\eta_C=\sin\alpha$. Since the QRAC measurements can be chosen along the same equatorial directions, the shrinking factors simply rescale the QRAC bias: 
\begin{align}
P_B&=\frac12+\frac{\eta_B}{2\sqrt2},\\
P_C&=\frac12+\frac{\eta_C}{2\sqrt2}.
\end{align}
Therefore Eq.~\eqref{eq:phase_covariant_shrinking_tradeoff} maps exactly to Eq.~\eqref{eq:phase_covariant_qrac_curve}.

This explains why the agreement with the cloning boundary is not only a numerical coincidence. In the PAB optimization, the preparations, the isometry, and the measurements are all free. Nevertheless, the best strategies behave as if they broadcast the equatorial QRAC code through an optimal asymmetric phase-covariant cloner. In this sense, the QRAC task itself selects the relevant part of the qubit information. The trade-off is therefore not the generic no-cloning limitation for arbitrary qubit states, but the sharper limitation associated with distributing the equatorial information that gives the QRAC advantage. This is why the symmetric point remains exactly at the classical copyable value $P_B=P_C=\tfrac34$, while any genuine quantum advantage on one marginal necessarily comes at the cost of losing it on the other.

Finally, the same broadcasting scenario can be viewed in terms of effective quantum instruments. From Charlie's perspective, a fixed broadcasting isometry $U$ together with Bob's projective measurement $\qty{P^B_{b\vert y}}_b$ induces the instrument
\begin{equation}
\mathcal E_{b\vert y}(\rho)= \tr_B\!\left[(P^B_{b\vert y}\otimes\iden_C)U\rho U^\dagger(P^B_{b\vert y}\otimes\iden_C)\right].
\end{equation}
In this picture, the structure of the optimal trade-off suggests a close connection between marginal QRAC broadcasting and optimal asymmetric phase-covariant cloning transformations, in a way analogous to the connection between sequential QRACs and optimal quantum instruments~\cite{Mohan2019}.

\section{RANDOMNESS CERTIFICATION IN PREPARE-AND-MEASURE AND PREPARE-AND-BROADCAST SCENARIOS}
\label{sec:randomness}

In this section, we investigate semi-device-independent randomness certification by employing a semidefinite programming (SDP) approach inspired by the Navascués--Pironio--Acín and Navascués--Vértesi hierarchies~\cite{NPA2007,NPA2008,NV2015} to quantify the amount of certifiable randomness as a function of the prepare-and-measure or prepare-and-broadcast witnesses. Section~\ref{sec:rand-framework} sets up the adversarial model and the moment relaxation shared by all cases. The remaining subsections then apply this framework to the different scenarios considered. In Sec.~\ref{sec:pam-randomness} we treat the PAM witnesses \eqref{eq:S3} and \eqref{eq:S4}, and, while variants of the $S_4$ witness associated with quantum random access codes have been extensively studied for randomness certification, we show that the $S_3$ witness is in fact the more effective randomness witness, certifying up to one bit of randomness. Section~\ref{sec:pab-cl-randomness} turns to the broadcast witness $W^{(2)}_{\mathrm{CC}}$, whose maximal violation certifies the full two bits of joint randomness of the output pair, and Sec.~\ref{sec:qeve} strengthens the adversarial model by granting Eve a quantum system correlated with the devices. The support-function machinery common to all cases is collected in Appendix~\ref{app:randomness-support}.
 
\subsection{General framework}
\label{sec:rand-framework}
 
All certificates in this section share one adversarial model and one relaxation, which we set up once here and then apply for the different prepare-and-measure witnesses (Sec.~\ref{sec:pam-randomness}), the broadcast witness (Sec.~\ref{sec:pab-cl-randomness}), and a quantum adversary (Sec.~\ref{sec:qeve}). The only physical assumption is a bound on the dimension of the communicated system, here a qubit, while the broadcasting channel and the measurement operators remain otherwise uncharacterized. The question is whether the observed behavior alone guarantees that the generation outcomes remain unpredictable to an adversary, Eve, who may have correlated the devices through side information.
 
We first make the classical adversary precise, in the broadcast setting that contains the single-receiver case as the marginal over one output. Denote the set of quantum correlations generated in a PAB scenario by $\Qbr$,  then \begin{definition}[classical adversary]\label{rand:def:eve} A classical adversary is a random variable $\Lambda$ with distribution $\{q_\lambda\}$, statistically independent of the inputs, $q(\lambda|x,y,z)=q(\lambda)$, together with branches $p_\lambda\in\Qbr$ entering the decomposition $p=\sum_\lambda q_\lambda p_\lambda$, such that each branch specifies a complete PAB strategy, fixed before the inputs are chosen, with a qubit message, an arbitrary broadcast channel, arbitrary binary POVMs, and finite-dimensional receiver Hilbert spaces in tensor product, with no input- or output-dependent postselection.
\end{definition}
 
The single-receiver adversary of Sec.~\ref{sec:pam-randomness} is the special case in which the broadcast channel is replaced by a single qubit measurement device and $p_\lambda$ is a prepare-and-measure behavior. Refining a branch into a convex decomposition fixed before the inputs produces another adversary of the same class, a fact used repeatedly below.
 
Since $\lambda$ is known to Eve, unpredictability must be assessed branch by branch rather than at the level of the averaged behavior, and therefore the figure of merit is not the maximum output probability computed from the observed behavior. Fix a generation input $s$ and an observed witness value $W[p]\ge w$, where $W$ denotes whichever linear witness is in use. On each round Eve announces her best guess for the generation output, and given $\lambda$, her optimal guess is the most likely value under $p_\lambda$. Her guessing probability, maximized over all decompositions compatible with the observed witness value, is then
\begin{align}
G(R|\Lambda,s;w)= \sup_{\{q_\lambda,p_\lambda\}} &\sum_\lambda q_\lambda \max_{o}p_\lambda(o|s) \label{eq:guessing-general-framework}\\ \text{s.t.}\quad &p_\lambda\in\Qbr,\quad q_\lambda\geq0,\quad\sum_\lambda q_\lambda=1, \nonumber\\ &\sum_\lambda q_\lambda W[p_\lambda]\geq w, \nonumber
\end{align}
where $o\in \mathcal O_R$ denotes the outcome associated with $R \in\{BC,B,C\}$. The certified min-entropy is then $H_{\min}(R|\Lambda,s)=-\log_2 G(R|\Lambda,s;w)$. 
 
Two obstacles prevent a direct semidefinite relaxation of Eq.~\eqref{eq:guessing-general-framework}. The number of values of $\lambda$ can in principle be unbounded, and the inner maximization over the outcome is nonlinear in the decomposition. Both are removed by grouping the branches according to Eve's guess. Writing $\mathcal G$ for the set of possible guesses $g$ and $g_\lambda\in\mathcal G$ for an optimal guess of branch $\lambda$, $g_\lambda\in \text{argmax}_{o\in\mathcal{O}_R}\; p_\lambda(o|s)$, the subnormalized behaviors
\begin{equation}
\widetilde p_g(o'|s') =\sum_{\lambda:g_\lambda=g}q_\lambda\,p_\lambda(o'|s') 
\label{eq:subnormalized-framework}
\end{equation}
carry the weights $r_g=\sum_{\lambda:g_\lambda=g}q_\lambda$, with $\sum_g r_g=1$. Because the objective function evaluates each branch only at its own guess and $W$ is linear, both the guessing probability and the witness value of any decomposition are reproduced exactly by the grouped form, so, with $\operatorname{cone}(\Qbr)$ denoting the set of nonnegative multiples of quantum behaviors, we obtain 
\begin{align}
G(R|\Lambda,s;w)= \sup_{\{\widetilde p_g\}} &\sum_{g\in\mathcal G}\widetilde p_g(g|s) \label{eq:guessing-cone-framework}\\ \nonumber \text{s.t.}\quad &\widetilde p_g\in\operatorname{cone}(\Qbr), \quad \sum_g r_g=1, \\ &  \sum_g W[\widetilde p_g]\geq w. \nonumber
\end{align}
The reduction is lossless, since any grouped decomposition arises from an adversarial strategy and conversely. Marginal randomness of a single receiver uses $\mathcal G=\{0,1\}$ and two branches, joint randomness of the pair uses $\mathcal G=\{0,1\}^2$ and four.  
 
The set $\Qbr$ is not itself semidefinite representable, so we relax each branch by a finite-dimensional moment relaxation, inspired by the Navascu\'es--Pironio--Ac\'in and Navascu\'es--V\'ertesi hierarchies~\cite{NPA2007,NPA2008,NV2015} but parametrizing the preparation amplitudes explicitly and imposing the lifted moment constraints directly, with independent truncation levels for preparations and measurement words. We label its levels $L(q,t)$, with $q$ the scalar degree retained for the preparation amplitudes and $t$ the maximal length of the measurement words, and we write $L\in\mathcal S_{q,t}(r)$ for a moment functional whose lifted moment matrix is positive semidefinite, satisfies the linear identities of the preparation normalization and the binary measurement algebra, and whose associated behavior has total weight $r$ at every input, the case $r=1$ recovering the normalized relaxation. Both the probabilities and the witness value are affine in the moments, so replacing each branch of Eq.~\eqref{eq:guessing-cone-framework} by a functional $L_g\in\mathcal S_{q,t}(r_g)$, can be cast as a semidefinite program:
\begin{align}
G_{q,t}(R|\Lambda,s;w)= \max_{\{L_g\}} &\sum_{g\in\mathcal G}p_{L_g}(g|s) \label{eq:guessing-sdp-framework}\\ \nonumber \text{s.t.}\quad &L_g\in\mathcal S_{q,t}(r_g), \quad \sum_g r_g=1, \\ & \sum_g W[L_g]\geq w, \nonumber
\end{align}
which can only increase the optimum. Hence $G\le G_{q,t}$ and the certified min-entropy $H_{\min}(R|\Lambda,s)\ge-\log_2 G_{q,t}(R|\Lambda,s;w)$, a bound that is nontrivial whenever the observed witness value is incompatible with any decomposition in which Eve predicts the generation outcome perfectly. Solving Eq.~\eqref{eq:guessing-sdp-framework} pointwise in $w$ is possible but wasteful. In practice we reconstruct the whole curve at once from the support functions of the individual branches, a construction we carry out in Appendix~\ref{app:support_function_lower_bounds} and use throughout this section. The resulting envelope is a conservative upper bound on Eve's guessing probability, valid for every observed witness value and computed from single-branch semidefinite programs.
 
\subsection{Randomness certification in the PAM scenario with classical side information}
\label{sec:pam-randomness}
We apply now the framework of Sec.~\ref{sec:rand-framework} for a single receiver, so the adversary of Definition~\ref{rand:def:eve} reduces to a classical variable $\lambda$ selecting a qubit prepare-and-measure branch $p_\lambda$, the guess set is $\mathcal G=\{0,1\}$, and the certified quantity is the min-entropy of Bob's single outcome $b$ at a fixed generation input $s_{\star}=(x^{\star},y^{\star})$, bounded above by one bit. 
 
Our certificates are built on the witnesses $S_3$ and $S_4$ introduced in Sec.~\ref{sec:thepampab}, whose classical bounds are $S_3^{\mathrm C}=3$ and $S_4^{\mathrm C}=4$ and whose maximal qubit values are $S_3^{\mathrm Q}=1+2\sqrt{2}$ and $S_4^{\mathrm Q}=4\sqrt{2}$. Throughout this subsection, the generation input is fixed to $(x^{\star},y^{\star})=(2,1)$ for two reasons: all inputs choices lead to the same results for $S_4$, and $(x^{\star},y^{\star})=(2,1)$ is the optimal input at which the qubit realization attaining $S_3^{\mathrm Q}$ produces an unbiased outcome, so that one full bit is in principle available. We compare two uses of the observed data, one in which Eve is constrained only by the value of the relevant witness, and one in which she is constrained by the complete observed behavior. Since the latter retains at least as much information as the scalar witness value, it can only maintain or strengthen the randomness certificate.
 
For the witness-only certificate, specializing Eq.~\eqref{eq:guessing-cone-framework} to two branches $\widetilde p_g(b|x,y)$, $g\in\{0,1\}$, of weights $r_g=\sum_b\widetilde p_g(b|x,y)$, gives
\begin{align}
G^{\mathrm{wit}}_{k}(B|\Lambda,s_{\star};s) = \sup_{\{\widetilde p_g\}} \quad& \widetilde p_0(0|s_{\star}) + \widetilde p_1(1|s_{\star}) \nonumber\\ \mathrm{s.t.}\quad & \widetilde p_g \in \operatorname{cone} \bigl(\mathcal Q^{\mathrm{PAM}}_{2}\bigr), \nonumber\\ & r_0+r_1=1, \:\:\: \sum_{g=0}^{1}S_k[\widetilde p_g]\geq s,
\label{eq:pam-witness-two-branch}
\end{align}
where $\mathcal Q^{\mathrm{PAM}}_{2}$ denotes the set of behaviors generated by qubit communication and $S_k[p]\geq s$ is the observed witness constraint, $k\in\{3,4\}$.
 
A witness value compresses the experimental data into a single scalar, and behaviors sharing the same value of $S_k$ may nevertheless allow different guessing probabilities. To quantify what is gained by keeping the full statistics, let $p^{(k)}_{\mathrm{opt}}(b|x,y)$ be a qubit behavior attaining $S_k^{\mathrm Q}$, and consider the noisy family
\begin{equation}
p^{(k)}_{v}(b|x,y) = v\,p^{(k)}_{\mathrm{opt}}(b|x,y) + (1-v)\frac{1}{2},
\label{eq:pam-noisy-full-behavior}
\end{equation}
where $0\leq v\leq1$, for which $S_k[p^{(k)}_{v}]=vS_k^{\mathrm Q}$, since white noise has vanishing correlators. Constraining Eve to reproduce the complete behavior leads to
\begin{align}
G^{\mathrm{dist}}_{k}(B|\Lambda,s_{\star};v) = \sup_{\{\widetilde p_g\}} \quad& \widetilde p_0(0|s_{\star}) + \widetilde p_1(1|s_{\star}) \nonumber\\ \mathrm{s.t.}\quad & \widetilde p_g \in \operatorname{cone} \bigl(\mathcal Q^{\mathrm{PAM}}_{2}\bigr), \nonumber\\ & \sum_{g=0}^{1} \widetilde p_g(b|x,y) = p^{(k)}_{v}(b|x,y) \nonumber\\ & \hspace{7.5em} \forall\,b,x,y.
\label{eq:pam-full-behavior-program}
\end{align}
The complete-behavior constraint implies the corresponding witness constraint, while the converse fails in general, so 
\begin{equation}
\begin{split}
G^{\mathrm{dist}}_{k}(B|\Lambda,s_{\star};v) \leq G^{\mathrm{wit}}_{k}\bigl(B|\Lambda,s_{\star};vS_k^{\mathrm Q}\bigr),
\end{split}
\label{eq:pam-dist-vs-wit}
\end{equation}
and hence $H^{\mathrm{dist}}_{\min,k}\geq H^{\mathrm{wit}}_{\min,k}$. The gap between the two curves measures the randomness advantage obtained by retaining the complete statistics rather than compressing them into a single witness value.
 
We relax each subnormalized branch at the level $L(1,2)$ of the moment relaxation of Sec.~\ref{sec:rand-framework}. Here the measurement operators are represented by the word set $\mathcal S_2=\{\id,B_0,B_1,B_0B_1,B_1B_0\}$ with $B_y^\dagger=B_y$ and $B_y^2=\id$, the communicated dimension is fixed to $d=2$, and the common unitary freedom of the qubit input space allows the gauge $\ket{\psi_0}=|0\rangle$ without restricting the set of physical behaviors. Each guess $g\in\{0,1\}$ is assigned an independent moment matrix $\Gamma_g$, whose rows and columns are indexed by preparation monomials $m,n$, input-basis labels $i,j\in\{0,1\}$, and measurement words $u,v\in\mathcal S_2$, with entries
\begin{equation}
\Gamma_g \bigl[(m,i,u),(n,j,v)\bigr] = L_g\!\left( m^\ast n\, \langle i|u^\dagger v|j\rangle \right),
\label{eq:pam-randomness-moment-matrix}
\end{equation}
where $\Gamma_g\succeq0$ and $L_g$ denotes the subnormalized moment functional associated with the branch in which Eve guesses $g$, and the preparation-normalization conditions together with the measurement identities enter as linear relations among the entries. Subnormalization is encoded in the zeroth-order block, $\Gamma_g\bigl[(1,i,\id),(1,j,\id)\bigr]$ where $r_g\geq0$, and $r_0+r_1=1$, so that the identity block is normalized to the total weight of the corresponding adversarial branch rather than to unity. Both the probabilities $p_{L_g}(b|x,y)$ and the witness values $S_k[L_g]$ are affine in the entries of $\Gamma_g$, and at the chosen level the $S_3$ and $S_4$ relaxations involve moment matrices of order $90$ and $130$, respectively, for each of Eve's two guesses. Every physical qubit PAM strategy generates a feasible moment matrix, whereas the converse need not hold at any finite level, so the relaxation yields the upper bound $G_k\leq G_k^{L(1,2)}$ and the certified min-entropy
\begin{equation}
H_{\min,k}(B|\Lambda,s_{\star})
\geq
-\log_2
G_k^{L(1,2)}(B|\Lambda,s_{\star}).
\label{eq:pam-L12-minentropy}
\end{equation}
For the witness-only curves, this bound is reconstructed from an adaptively refined family of support-function semidefinite programs, while the fixed-distribution curves are obtained by optimizing the two subnormalized moment matrices simultaneously under the complete-behavior constraints of Eq.~\eqref{eq:pam-full-behavior-program}. All reported curves follow directly from these semidefinite programs.  

\begin{figure}
    \centering
    \includegraphics{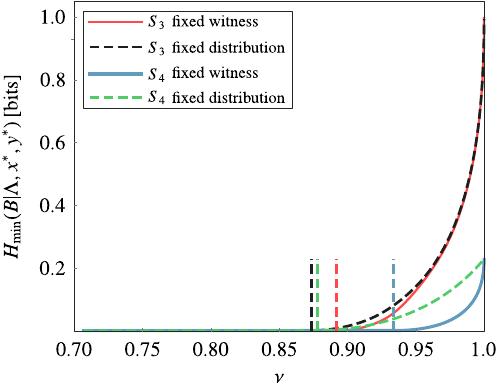}
 \caption{\textbf{Randomness from the $S_3$ and $S_4$ witnesses in the PAM scenario.} Certified min-entropy of Bob's outcome at the generation input $(x^\star,y^\star)=(2,1)$ as a function of the visibility $v$ for the witnesses $S_3$ and $S_4$, at the $L(1,2)$ relaxation level. For each witness, we show the witness-only certificate of Eq.~\eqref{eq:pam-witness-two-branch} and the fixed-distribution certificate of Eq.~\eqref{eq:pam-full-behavior-program}. For $S_3$, the fixed-distribution certificate reaches one full bit at $v=1$, whereas for $S_4$ the bias of the optimal realization at the same generation input limits the certifiable randomness to approximately $0.228$ bits. The vertical lines mark the minimum visibility required for each curve to certify nonzero randomness.}
    \label{fig:pam-randomness}
\end{figure}
 
Figure~\ref{fig:pam-randomness} collects the resulting certificates as functions of the visibility $v$. Two features stand out. First, for $S_3$, the fixed-distribution certificate reaches one full bit at $v=1$, the algebraic maximum for a binary outcome, in agreement with the unbiased marginal of the optimal realization at the input $(2,1)$. Second, away from the endpoint, the two protocols separate, with the fixed-distribution curve lying above the witness-only envelope over a wide range of visibilities and becoming nontrivial at a lower visibility threshold. The witness value is therefore a valid but strictly lossy summary of the data for randomness purposes. The witness $S_4$ shows the complementary lesson. At the same generation input, the realization attaining $S_4^{\mathrm Q}$ is biased, with $E_{21}=\tfrac{1}{\sqrt{2}}$, which limits the certifiable randomness at $v=1$ to $-\log_2[\tfrac12(1+\tfrac{1}{\sqrt{2}})]\simeq0.228$ bits. A witness with more preparations is not automatically more useful for randomness at a given input, and the choice of generation input matters as much as the choice of witness.

\subsection{Randomness certification in the PAB scenario with classical side information}
\label{sec:pab-cl-randomness}
 
In device-independent Bell experiments, nonlocal correlations certify randomness without requiring trust in the internal functioning of the measurement devices~\cite{Pironio2010}. For two binary outputs, the joint output contains at most two bits of randomness. Achieving this maximum, however, is highly nontrivial in the standard bipartite Bell scenario~\cite{Acin2016,Wooltorton2022}. Tight analytic trade-offs between the CHSH value and global randomness have been derived~\cite{Wooltorton2022}, showing that two bits of global randomness can be certified only by tailored families of Bell expressions, and only for CHSH values in the interval $(2,3\sqrt3/2]$, while at the Tsirelson value of the CHSH inequality, one can certify at most $-\log_2[(1+1/\sqrt2)/4]\simeq1.228$ bits of min-entropy for the joint output pair. Complementary analyses have shown that using the full observed distribution, rather than compressing the data into a single Bell value, can strictly improve the certified randomness bound~\cite{NietoSilleras2014,Bancal2014}. These results motivate two principles that structure this section, and that Sec.~\ref{sec:pam-randomness} has already displayed in the single-receiver setting. The witness that is optimal for detecting nonclassicality need not be optimal for certifying randomness, and a scalar witness value may be a suboptimal summary of the experimental data. In particular, we will analytically prove that the PAB witness $W_{\mathcal{PAB}}$ of Eq.~\eqref{eq:CC2}, introduced in Ref.~\cite{sarubi2026prepare}, enables the certification of the full two bits of joint randomness within the prepare-and-broadcast scenario.
  
We apply the framework of Sec.~\ref{sec:rand-framework} for the two receivers, so the adversary is that of Definition~\ref{rand:def:eve}, the guess set is $\mathcal{G}_{BC}=\{(0,0),(0,1),(1,0),(1,1)\}$, and the certified quantity is the joint min-entropy of the output pair $(B,C)$. No coherent quantum memory is granted to Eve in this subsection, a restriction lifted in Sec.~\ref{sec:qeve}. The witness is $W_{\mathcal{PAB}}$ of Eq.~\eqref{eq:CC2}, hereafter denoted $W$, with classical bound $W_C=6$ and maximal value $W_Q=8+2\sqrt2\simeq10.83$ over qubit PAB strategies. At the maximal violation, and for suitable generation inputs, the certificate reaches the two-bit endpoint in the following operational sense. The explicit qubit strategy produces a uniform output pair at the generation input, while the semidefinite upper bound on Eve's guessing probability gives $G(BC|\Lambda,s;W_Q)\leq\tfrac14$ up to the reported numerical precision, so the certified min-entropy is two bits within this classical-side-information model. Theorem~\ref{rand:thm:main} below turns this numerical endpoint into an exact analytic statement.
 
The physical set $\Qbr$ entering the certification consists of the qubit PAB behaviors
\begin{equation}
p(b,c|x,y,z) =\tr\!\left[\Lambda(\rho_x)\left(B_{b|y}\otimes C_{c|z}\right)\right], 
\label{eq:pab-physical-set}
\end{equation}
with $\rho_x\in\mathcal{D}(\mathbb{C}^2)$, a broadcast channel $\Lambda$, and local POVMs for Bob and Charlie. After Stinespring and Naimark dilations~\cite{Stinespring1955,Naimark1940}, every such behavior admits the exact projective rewriting
\begin{equation}
p(b,c|x,y,z) =\langle\psi_x|V^\dagger B_{b|y}C_{c|z}V|\psi_x\rangle, 
\label{eq:pab-dilated-form}
\end{equation}
with $\ket{\psi_x}\in\mathbb{C}^2$, an isometry $V:\mathbb{C}^2\to\mathcal{K}$, and Bob's and Charlie's projectors commuting on the dilation space. This dilated form is the representation relaxed by the moment construction of Sec.~\ref{sec:rand-framework}, now built on the binary broadcast algebra $B_y^2=C_z^2=\id$, $[B_y,C_z]=0$. The guessing problem Eqs.~\eqref{eq:guessing-general-framework}--\eqref{eq:guessing-sdp-framework} then applies verbatim with $\max_{b,c}$ in place of the generic outcome maximization and $W$ the linear functional of Eq.~\eqref{eq:CC2}, giving the four-branch semidefinite program $G_{q,t}(BC|\Lambda,s;w)$ and the certified bound $H_{\min}(BC|\Lambda,s)\ge-\log_2 G_{q,t}(BC|\Lambda,s;w)$. The marginal randomness of a single receiver follows with two branches, the objective using $p_{L_g}(g|x^*,y^*)=\sum_c p_{L_g}(g,c|x^*,y^*,z^*)$, independent of $z^*$ for nonsignaling behaviors.
 
Figure~\ref{fig:pab-all-inputs} shows the resulting certificates for every generation input directly as functions of the observed witness value $W_{\mathcal{PAB}}$. The certified randomness depends strongly on the input used for generation. For suitable inputs, among them $s=(1,0,0)$, the joint certificate grows monotonically with $W_{\mathcal{PAB}}$  and approaches two bits as the witness approaches its maximal quantum value $W_Q$, up to the reported numerical precision. Other inputs saturate at or below one bit, in agreement with the correlations imposed by the optimal realization at those inputs. As in the single-receiver case, we also evaluate the fixed-distribution protocol along the white-noise family $p_v=v\,p_{\mathrm{opt}}+(1-v)p_{\mathrm{unif}}$. Since the uniform behavior has vanishing correlators, $W_{\mathcal{PAB}}[p_v]=vW_Q$. Thus, although the physical noisy family is parametrized by the visibility $v$, the horizontal coordinate in Fig.~\ref{fig:pab-all-inputs} is the corresponding observed witness value $W_{\mathcal{PAB}}=vW_Q$.
 
\begin{figure}[t]
    \centering
    \includegraphics{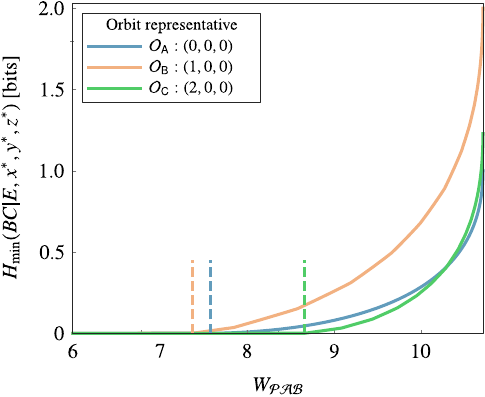}
    \caption{\textbf{Joint randomness for every generation input in the PAB scenario.}Joint randomness certification from the witness $W_{\mathcal{PAB}}$ of Eq.~\eqref{eq:CC2} for all twelve generation inputs $(x,y,z)$. The curves show the certified min-entropy of the output pair $(B,C)$ as a function of the observed witness value $W_{\mathrm{PAB}}$, reconstructed from the support-function envelope of Eq.~\eqref{eq:support-envelope} at the $L(1,2)$ level. For suitable generation inputs, the certificate approaches two bits as $W_{\mathrm{PAB}}$ approaches its maximal quantum value $W_Q=8+2\sqrt2$, the algebraic maximum for a pair of binary outputs. Under the symmetry of Corollary~\ref{rand:cor:orbit}, the twelve curves collapse onto three, one per orbit $\mathcal O_A$, $\mathcal O_B$, and $\mathcal O_C$, and only the orbit $\mathcal O_B$, which contains the generation input $s^\star=(1,0,0)$ of Theorem~\ref{rand:thm:main}, reaches two bits.}
    \label{fig:pab-all-inputs}
\end{figure}
 
The endpoint visible in Fig.~\ref{fig:pab-all-inputs} is not only numerical. For the generation input $s^*=(1,0,0)$ it is exact, and the symmetry of the witness extends the statement to an entire orbit of inputs.
 
\begin{theorem}[two bits at the maximal violation]\label{rand:thm:main}
Let $s^*=(1,0,0)$ and let $G$ be the guessing probability \eqref{eq:guessing-general-framework} against the adversaries of Definition~\ref{rand:def:eve}. Then
\begin{align}\label{rand:eq:endpoint}
G(BC|\Lambda,s^*;\WQ)&=\tfrac14, \\
\Hmin(BC|\Lambda,s^*)&=2 ,
\end{align}
both exactly. Moreover, every adversarial branch of nonzero weight produces the uniform distribution on the output pair at $s^*$. 
\end{theorem}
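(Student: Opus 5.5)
The plan is to reduce the statement to a rigidity property of single qubit PAB strategies attaining $\WQ$, and then read off the uniformity of the output pair at $s^*$ from that rigidity.

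\textbf{Step 1: reduction to individual branches.} The lower bound $G(BC|\Lambda,s^*;\WQ)\ge\tfrac14$ is trivial, since any guess succeeds with probability at least $\tfrac14$ on a distribution over four outcomes. Feasibility of the constraint is witnessed by the explicit qubit strategy of Eq.~\eqref{eq:CC2}. For the upper bound, take any decomposition $p=\sum_\lambda q_\lambda p_\lambda$ as in Definition~\ref{rand:def:eve} with $\sum_\lambda q_\lambda W[p_\lambda]\ge\WQ$. Each $p_\lambda\in\Qbr$ satisfies $W[p_\lambda]\le\WQ$, so every branch of nonzero weight must itself attain $W[p_\lambda]=\WQ$. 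It therefore suffices to prove the following: every qubit PAB strategy reaching $\WQ$ yields $p(b,c|1,0,0)=\tfrac14$ for all $b,c$. Then $\max_{b,c}p_\lambda(b,c|s^*)=\tfrac14$ in every branch, so $G=\tfrac14$ and $\Hmin=2$, and the final claim of the theorem follows at the same time.

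\textbf{Step 2: saturation structure.} I will work in the dilated form~\eqref{eq:pab-dilated-form}, writing $\ket{\phi_x}=V\ket{\psi_x}$ and using the dichotomic observables $B_y,C_z$ with $[B_y,C_z]=0$. The witness splits into three blocks.
\begin{itemize}
\item The $x=0$ block, $2\langle B_0C_0\rangle+2\langle B_1C_1\rangle$, is at most $4$.
\item The $x=1$ block, $2\langle B_0C_1\rangle-2\langle B_1C_0\rangle$, is at most $4$.
\item The $x=2$ block is a relabelled CHSH expression, so it is at most $2\sqrt2$.
\end{itemize}
These bounds sum to $8+2\sqrt2=\WQ$, so each block must be saturated. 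Saturation of the $x=1$ block gives the stabilizer relations
\[
B_0C_1\ket{\phi_1}=\ket{\phi_1},\qquad B_1C_0\ket{\phi_1}=-\ket{\phi_1}.
\]
The analogous relations $B_0C_0\ket{\phi_0}=B_1C_1\ket{\phi_0}=\ket{\phi_0}$ hold for $x=0$. For $x=2$, the standard CHSH sum-of-squares decomposition yields the Tsirelson rigidity relations $\{B_0,B_1\}\ket{\phi_2}=0$ and $\{C_0,C_1\}\ket{\phi_2}=0$, together with the linear relations expressing $C_z\ket{\phi_2}$ through $(B_0\pm B_1)\ket{\phi_2}/\sqrt2$.

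\textbf{Step 3: transporting anticommutation to $\ket{\phi_1}$ (the main obstacle).} To get uniformity I need the anticommutation relations to act on $\ket{\phi_1}$, not only on $\ket{\phi_2}$. Granting them, the computation is short:
\[
\langle B_0\rangle^{[1]}=\langle\phi_1|B_0(-B_1C_0)|\phi_1\rangle=\langle\phi_1|(-B_1C_0)B_0|\phi_1\rangle,
\]
and anticommutation makes these two expressions negatives of each other, so $\langle B_0\rangle^{[1]}=0$. The same argument gives $\langle C_0\rangle^{[1]}=0$. For the correlator,
\[
\langle B_0C_0\rangle^{[1]}=\langle\phi_1|(B_0C_1)B_0C_0|\phi_1\rangle=\langle C_1C_0\rangle^{[1]},
\]
and this vanishes by the analogous manipulation with $C$-anticommutation. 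All three correlators are zero, which is exactly uniformity of $p(b,c|1,0,0)$.

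The difficulty is that Tsirelson rigidity only controls the support of $\ket{\phi_2}$. The only link between the three preparations is that $\ket{\phi_0},\ket{\phi_1},\ket{\phi_2}$ all lie in the two-dimensional subspace $V\mathbb C^2$. My approach is as follows.
\begin{enumerate}
\item Write $\ket{\psi_2}=\alpha\ket{\psi_0}+\beta\ket{\psi_1}$ in the qubit space.
\item Use the stabilizer relations for $x=0,1$ to show that the operators $B_y$ and $C_z$ map $V\mathbb C^2$ into a small invariant subspace. The key relations are $C_0\ket{\phi_0}=B_0\ket{\phi_0}$ and $C_1\ket{\phi_0}=B_1\ket{\phi_0}$, and similarly for $\ket{\phi_1}$.
\item Within that subspace, show that $\{B_0,B_1\}$ vanishes on all of $V\mathbb C^2$, unless $\ket{\phi_2}$ is confined to a single stabilized line. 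That confinement would contradict the $2\sqrt2$ value of the $x=2$ block.
\end{enumerate}

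If this direct operator argument becomes messy, I would fall back on an explicit sum-of-squares certificate for the four-branch program~\eqref{eq:guessing-sdp-framework}. Concretely, I would find a dual vector showing
\[
\tfrac14\,\id-\Pi_{g}+\mu\,(W-\WQ\,\id)\succeq0
\]
on the $L(1,2)$ moment cone, for each guess $g$, and rationalise the multipliers from the numerical solution. This gives $G\le\tfrac14$ directly, and branchwise uniformity then follows from complementary slackness.
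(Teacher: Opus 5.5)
\textbf{The gap is the transfer step.} Your Steps 1--2 and the closing computation of Step 3 are sound and match the paper. The block bounds force every branch of nonzero weight to saturate all three blocks. Saturation gives the stabilizer relations on $\ket{\phi_0},\ket{\phi_1}$ and, via the CHSH sum of squares, $\{B_0,B_1\}\ket{\phi_2}=\{C_0,C_1\}\ket{\phi_2}=0$. Once both anticommutators annihilate $\ket{\phi_1}$, your conjugation trick kills $\avg{B_0}{1}$, $\avg{C_0}{1}$ and $\corr{0}{0}{1}$; the last is both real and purely imaginary.

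What is missing is the transfer from $\ket{\phi_2}$ to $\ket{\phi_1}$, which is the real content of the theorem. Your item~3 states the right target but gives no argument, and the mechanism in item~2 does not deliver it. The relations $C_0\ket{\phi_0}=B_0\ket{\phi_0}$, etc., only show that the cyclic space of $\ket{\phi_0},\ket{\phi_1}$ under all $B_y,C_z$ equals their cyclic space under Bob's algebra. Nothing bounds its dimension: $\ket{\phi_0}$ may a priori be spread over many Jordan blocks of $(B_0,B_1)$. Smallness is a consequence of the anticommutation you want, not a route to it.

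The missing ingredients are these.
\begin{itemize}
\item[(i)] $\{B_0,B_1\}$ commutes with $B_0$, $B_1$ and every $C_z$, hence with all four stabilizers $B_yC_z$. The paper instead uses the projector $P_{\mathrm{bad}}$ onto joint Jordan blocks where the anticommutators do not both vanish, which has the same property.
\item[(ii)] The four relations $B_0C_0v=B_1C_1v=B_0C_1v=v$, $B_1C_0v=-v$ force $C_0v=0$, hence $v=0$. In particular $\ket{\phi_0},\ket{\phi_1}$ are independent and span the code space.
\item[(iii)] Writing $\ket{\phi_2}=\alpha\ket{\phi_0}+\beta\ket{\phi_1}$, one has $\alpha\beta\neq0$, because $\B$ has nonpositive expectation on $\ket{\phi_0}$ and on $\ket{\phi_1}$.
\end{itemize}
Apply $\{B_0,B_1\}$ (or $P_{\mathrm{bad}}$) to this decomposition to get $\alpha w_0+\beta w_1=0$, where by (i) each $w_x$ inherits the stabilizer relations of $\ket{\phi_x}$. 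Then $w_0\propto w_1$ satisfies all four relations and vanishes by (ii); the same holds for Charlie. This rank-two transfer is exactly what the paper's endpoint-rigidity proposition supplies.

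\textbf{Your fallback cannot replace it.} First, the sign is wrong: a Lagrangian bound needs $\tfrac14\id-\Pi_g+\mu(\WQ\id-W)\succeq0$. More importantly, no finite multiplier exists, because the guessing probability leaves $\tfrac14$ like the square root of the deficit. In the paper's endpoint strategy, replace $\ket{\phi_1}$ by $\cos\epsilon\ket{\phi_1}+\sin\epsilon\ket{\chi}$ with $\ket{\chi}=\sqrt2\ket{\phi_0}-\ket{\phi_1}$, a unit vector in the code space orthogonal to $\ket{\phi_1}$. This gives $W_{\mathcal{PAB}}=\WQ-8\sin^2\epsilon$ but $\corr{0}{0}{1}=\sin2\epsilon$ with vanishing marginals at $s^*$. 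Hence $\max_{b,c}p(b,c|s^*)=\tfrac14(1+\sin2\epsilon)$, which exceeds any bound $\tfrac14+8\mu\sin^2\epsilon$ for small $\epsilon$. The value $\tfrac14$ is therefore reached only as $\mu\to\infty$. Finitely many rationalized multipliers cannot certify it, and complementary slackness is unavailable.

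\textbf{Mixed preparations.} You only treat pure preparations. Mixed $\rho_x$ need either the paper's componentwise argument, in which each pure component of $\rho_x$ lies in the top eigenspace of $V^\dagger M_xV$ and saturates its block, or a refinement of the branch into an input-independent mixture of pure-preparation strategies.
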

 
\begin{corollary}[the four-input orbit]\label{rand:cor:orbit}
The relabelings of the PAB experiment that leave the coefficient tensor of $W_{\mathcal{PAB}}$ invariant form a group of order $16$, whose action partitions the twelve generation inputs into three orbits of size four, 
\begin{equation}\label{rand:eq:orbits}
\begin{aligned}
O_A&=\{(0,0,0),(0,1,1),(1,0,1),(1,1,0)\},\\
O_B&=\{(0,0,1),(0,1,0),(1,0,0),(1,1,1)\},\\
O_C&=\{(2,0,0),(2,0,1),(2,1,0),(2,1,1)\}.
\end{aligned}
\end{equation}
Guessing-probability curves coincide along each orbit, $G(BC|\Lambda,s;w)=G(BC|\Lambda,s';w)$ for $s,s'$ in the same orbit and every $w\le\WQ$, so Theorem~\ref{rand:thm:main} holds verbatim for every $s\in O_B$. At the maximal violation the inputs in $O_A$ instead display perfect correlation of the output pair on every branch, so at most one bit of joint randomness is available there, while for $O_C$ the certification remains numerical. 
\end{corollary}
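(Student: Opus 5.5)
The plan has three parts. First, build the symmetry group explicitly and show that it acts on the guessing problem by bijections. Second, compute the orbits. Third, treat $O_A$ separately by a saturation argument that uses only the block structure of $W_{\mathcal{PAB}}$.

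\textbf{Symmetry group and invariance of $G$.} The admissible relabelings are generated by the following operations:
\begin{itemize}
\item permutations of the preparation label $x$;
\item permutations of Bob's setting $y$ and of Charlie's setting $z$;
\item the party exchange $B\leftrightarrow C$;
\item setting-dependent output flips $b\mapsto b\oplus 1$ at fixed $y$, and $c\mapsto c\oplus 1$ at fixed $z$;
\item preparation-dependent flips of $b$, or of $c$, at a fixed $x$.
\end{itemize}
Each operation maps a qubit PAB strategy to another qubit PAB strategy, so it preserves $\Qbr$ and the class of Definition~\ref{rand:def:eve}:
\begin{itemize}
\item relabeling preparations permutes the $\rho_x$;
\item output flips replace $B_y$ by $-B_y$ (respectively $C_z$ by $-C_z$);
\item an $x$-dependent flip is implemented by composing the broadcast channel with a local unitary on Bob's output that is controlled by a classical flag appended to $\rho_x$;
\item the party exchange composes the broadcast channel with a SWAP.
\end{itemize}
The $x$-dependent flip looks as if it needs extra dimension. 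I would avoid that by noting that the flips which actually survive in the invariance group act only on correlator signs within one $x$-block. Such a flip can then be absorbed by relabeling the pair $(\rho_x,\text{block})$, since the preparations are arbitrary. If this absorption fails, I would simply drop these generators.

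Next, I would write the coefficient tensor $w_{x,y,z}$ of Eq.~\eqref{eq:CC2} in its three blocks: $x=0$ (diagonal, weight $2$), $x=1$ (off-diagonal with signs $+2,-2$), and $x=2$ (CHSH-type, weight $1$). The preparation weights differ between blocks, and only $x=0$ has diagonal support, so any admissible $x$-permutation must fix $x=2$. It also cannot exchange $0\leftrightarrow 1$ unless this is accompanied by a swap of $z$ that turns diagonal support into off-diagonal support. With this in hand, I would enumerate the stabilizer of $w$ inside the finite group generated above. This is a short finite check that can be done by hand or by a computer, and it should give exactly $16$ elements.

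Each element $\sigma$ induces a bijection $s\mapsto\sigma(s)$ of generation inputs and a bijection of guesses $g\mapsto\sigma(g)$ on $\mathcal G_{BC}$. Applying $\sigma$ branchwise to any feasible decomposition in Eq.~\eqref{eq:guessing-general-framework} preserves three things: the weights $q_\lambda$, the constraint $\sum_\lambda q_\lambda W[p_\lambda]\ge w$ (because $W\circ\sigma=W$), and $\max_o p_\lambda(o|s)$, which is carried to $\max_o(\sigma p_\lambda)(o|\sigma(s))$. This gives $G(BC|\Lambda,s;w)\le G(BC|\Lambda,\sigma(s);w)$, and applying $\sigma^{-1}$ gives equality. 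Theorem~\ref{rand:thm:main} at $s^*=(1,0,0)$ then transfers to all of $O_B$.

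\textbf{Orbits.} I would compute the images of the twelve inputs under the generators directly. Since $x=2$ is fixed, $O_C$ is an orbit on its own, and the $y$, $z$ swaps act transitively on it. For $x\in\{0,1\}$, the relevant invariant is the parity $x\oplus y\oplus z$. $O_A$ consists of the inputs on the support of the blocks, namely $(0,0,0)$, $(0,1,1)$, $(1,0,1)$, $(1,1,0)$, and $O_B$ is the complement. I would check that the group preserves this parity and acts transitively on each class. Orbit sizes $4+4+4$ are then consistent with $|{\rm group}|=16$ and stabilizers of order $4$.

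\textbf{Perfect correlation on $O_A$.} Since $\WQ$ is the maximum over $\Qbr$, an average value $\WQ$ forces $W[p_\lambda]=\WQ$ on every branch of nonzero weight. On the other hand, the three blocks are bounded separately:
\begin{itemize}
\item the $x=0$ block satisfies $2\langle B_0C_0\rangle^{[0]}+2\langle B_1C_1\rangle^{[0]}\le 4$;
\item the $x=1$ block satisfies $2\langle B_0C_1\rangle^{[1]}-2\langle B_1C_0\rangle^{[1]}\le 4$;
\item the $x=2$ block is a CHSH expression, bounded by $2\sqrt2$ via Tsirelson's bound applied to the dilated commuting form Eq.~\eqref{eq:pab-dilated-form}.
\end{itemize}
The sum of these three bounds is exactly $8+2\sqrt2=\WQ$, so every block is saturated on every branch. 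In particular $\langle B_0C_0\rangle^{[0]}=\langle B_1C_1\rangle^{[0]}=\langle B_0C_1\rangle^{[1]}=1$ and $\langle B_1C_0\rangle^{[1]}=-1$. These are precisely the inputs of $O_A$, so on each branch $c$ is a deterministic function of $b$ there. Eve can therefore guess the pair with probability at least $\max_b p_\lambda(b|s)\ge\tfrac12$, which caps the joint min-entropy at one bit. For $O_C$ the statement only asserts that the certification is numerical, so nothing needs to be proved there.

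The main obstacle is the group-enumeration step: establishing that the invariance group has order exactly $16$ and that each generator is physically realizable within the qubit-message model. The $x$-dependent flips and the $0\leftrightarrow1$ exchange are the delicate cases, and I would handle them by explicit computation rather than by abstract argument.
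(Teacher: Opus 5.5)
Your proposal follows the paper's proof. You enumerate the relabelings that fix the coefficient tensor $\gamma$, transport adversarial decompositions branchwise to get equal guessing probabilities along each orbit, and read off perfect (anti)correlation on $O_A$ from the separate saturation of the three blocks.

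The one fix is that you should take your fallback and drop the preparation-dependent output flips from the generating set. The absorption argument does not work. Flipping $b$ at a single $x$ negates $\langle B_y\rangle^{[x]}$ and $\langle B_yC_z\rangle^{[x]}$ at that preparation only. A fixed channel with $x$-independent measurements acting on a qubit message cannot reproduce this in general; your own implementation needs a classical flag, i.e.\ a message of dimension larger than $2$. Keeping these flips also breaks the count. Flipping both $b$ and $c$ at one fixed $x$ leaves every coefficient of $\gamma$ unchanged, so together with the $16$ genuine relabelings these flips generate a stabilizer of order at least $64$.

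Dropping them costs nothing, because they fix every generation input and so affect neither the orbits nor the invariance argument. The finite check over the $768$ candidates $(\pi,\sigma_B,\sigma_C,u,v;\text{swap})$ then returns exactly the paper's $16$ elements.
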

 
The proof is given in Appendix~\ref{app:endpoint}, and every step is exact. Stinespring and Naimark dilations reproduce the observed statistics exactly, so every branch may be taken in the projective isometric form of Eq.~\eqref{eq:pab-dilated-form}, with the three dilated preparations confined to the two-dimensional code subspace. The maximal violation saturates the three preparation blocks of $W_{\mathcal{PAB}}$ separately, which a sum-of-squares identity converts into algebraic relations among the dilated observables, and a Jordan-block analysis transfers the vanishing of two anticommutator defects through the rank-two constraint imposed by the qubit message. The dimension bound thus plays the structural role that entanglement rigidity plays in Bell self-testing~\cite{vsupic2020self}. A structural reason for the two-bit value is visible already in Eq.~\eqref{eq:CC2}, the generation correlator $\langle B_0C_0\rangle^{[1]}$ does not appear in $W_{\mathcal{PAB}}$, so the witness can be saturated while the generation statistics remain completely unbiased, and the saturation of the three blocks forces exactly that.
 
A structural feature of the maximal violation deserves emphasis before any comparison is drawn. At $W_{\mathcal{PAB}}=W_{\mathcal{PAB}}^{\mathrm Q}$, the conditional behavior at the generation input is not merely unbiased but Bell local. Corollary~\ref{rand:cor:local} shows that all four single-party marginals vanish and that, at $x=1$, one has $\corr{0}{1}{1}=1$, $\corr{1}{0}{1}=-1$, and $\corr{0}{0}{1}=\corr{1}{1}{1}=0$. Every CHSH relabeling therefore has modulus at most the local bound $2$, and the slice is reproduced by an explicit local hidden-variable model. Thus, the full two bits of joint randomness are certified at an input whose conditional Bell behavior is local; the certification comes instead from the qubit dimension constraint and from the correlations involving the other preparation inputs.

Figure~\ref{fig:pab-w2-vs-chsh} should therefore be read as a comparison between two separately optimized white-noise benchmarks, not as two analyses of the same data. The PAB curves are generated from
\begin{equation}
p_v^{W}
=
v\,p_{\mathrm{opt}}^{W}
+
(1-v)p_{\mathrm{unif}},
\end{equation}
where $p_{\mathrm{opt}}^{W}$ attains $W_{\mathcal{PAB}}^{\mathrm Q}$, whereas the CHSH curves use
\begin{equation}
p_v^{\mathrm{CHSH}}
=
v\,p_{\mathrm{opt}}^{\mathrm{CHSH}}
+
(1-v)p_{\mathrm{unif}},
\end{equation}
with $p_{\mathrm{opt}}^{\mathrm{CHSH}}$ maximally violating CHSH. This distinction is necessary because CHSH remains within its local bound along the maximal-$W_{\mathcal{PAB}}$ family and hence gives no CHSH-based certificate there. In both cases the observed statistic scales linearly with $v$,
\begin{equation}
W_{\mathcal{PAB}}[p_v^{W}]
=
v\,W_{\mathcal{PAB}}^{\mathrm Q},
\qquad
\mathrm{CHSH}[p_v^{\mathrm{CHSH}}]
=
v\,\mathrm{CHSH}_{\mathrm Q},
\end{equation}
so the common horizontal axis represents the same normalized white-noise mixing parameter, although not the same experiment. The PAB certificates are computed against classical side information with the qubit-bounded $L(1,2)$ relaxation of Sec.~\ref{sec:rand-framework}; the CHSH curves use the standard dimension-unbounded NPA relaxation with operator words of length at most two~\cite{Acin2016,Wooltorton2022}. In the fixed-distribution variant, Eve is required to reproduce the full behavior defining each family: $p_v^{W}(b,c|x,y,z)$ for all three preparation inputs in the PAB case, and the complete Bell behavior $p_v^{\mathrm{CHSH}}(b,c|y,z)$ in the CHSH case.

\begin{figure}[t]
    \centering
    \includegraphics{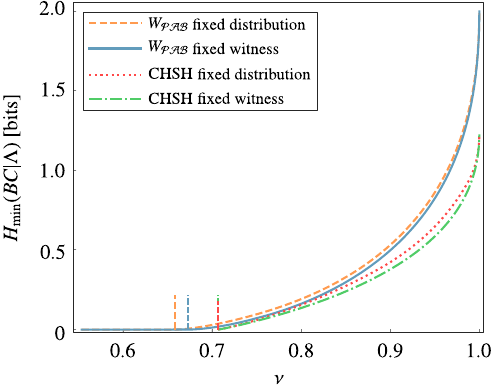}
    \caption{\textbf{Broadcast witness against the CHSH benchmark.}
    Certified joint min-entropy versus the normalized white-noise mixing
    parameter $v$ for two separately optimized benchmark families. The
    $W_{\mathcal{PAB}}$ curves are evaluated at the qubit-bounded $L(1,2)$
    relaxation, while the CHSH curves use the dimension-unbounded NPA
    relaxation with operator words of length at most two. For each benchmark
    we show a witness-only certificate and a fixed-distribution certificate.
    The vertical lines mark the onset of the certificates, at
    $v\simeq0.659$ and $v\simeq0.674$ for $W_{\mathcal{PAB}}$, and near
    $v=1/\sqrt2\simeq0.707$ for CHSH. At the relaxation levels considered,
    the PAB curves remain above their corresponding CHSH benchmarks
    throughout the CHSH-nonlocal range and reach two bits at $v=1$, compared
    with approximately $1.228$ bits for maximal CHSH violation.}
    \label{fig:pab-w2-vs-chsh}
\end{figure}

The advantage of the PAB construction is then visible both at threshold and throughout the noisy regime. Its fixed-distribution and witness-only certificates become nontrivial at $v\simeq0.659$ and $v\simeq0.674$, respectively, both below the CHSH locality threshold $1/\sqrt2\simeq0.707$. Moreover, at the relaxation levels considered here, each PAB curve remains above its CHSH counterpart over the entire visibility range in which the CHSH family is nonlocal. The gap persists up to the ideal point, where $W_{\mathcal{PAB}}$ certifies two bits, compared with approximately $1.228$ bits for maximal CHSH violation~\cite{Wooltorton2022}. This enhanced robustness reflects the fact that $W_{\mathcal{PAB}}$ combines information from all three preparation inputs, while CHSH probes only a single conditional Bell behavior; in addition, maximal CHSH violation fixes the magnitude of every tested correlator to $1/\sqrt2$, whereas the correlator at the PAB generation setting does not enter the witness. For both benchmarks, retaining the complete noisy behavior yields a stronger certified entropy bound than retaining only the scalar witness value.

\subsection{Quantum Eve}
\label{sec:qeve}
 
The preceding analyses restrict Eve to classical side information, represented by a variable $\lambda$ that selects the strategy implemented in each round. We now strengthen the adversarial model and allow Eve to retain a quantum system correlated with Bob's device. To isolate the effect of this stronger form of side information, we work with the three-preparation marginal of the PAB scenario, characterized by the witness $S_3$. In the broadcasting architecture, the second output of the channel would be delivered to Charlie. Here Eve controls the channel and keeps that entire complementary output, while only Bob's marginal behavior is observed. The structural difference from the classical analysis is that Eve's information is now retained coherently, so the guessing problem no longer splits into guess-labeled subnormalized branches, as made precise below.
 
Alice prepares a qubit state $|\psi_x\rangle\in\mathbb C^2$, and the adversarial channel is represented by an isometry
\begin{equation}
U:\mathbb C^2\longrightarrow \mathcal H_B\otimes\mathcal H_E.
\label{eq:qeve-isometry}
\end{equation}
Bob receives the subsystem in $\mathcal H_B$ and performs a binary measurement labeled by $y$, while Eve stores the subsystem in $\mathcal H_E$. Once the generation input $s_\star=(x^\star,y^\star)$ is publicly revealed, Eve performs the binary measurement that maximizes her probability of predicting Bob's outcome. Since each optimization fixes one generation input, a single dichotomic observable $E_\star$ suffices to represent her measurement. The isometry in Eq.~\eqref{eq:qeve-isometry} describes Eve's control of the channel and carries no additional purification assumption on Alice's source. The input system remains a qubit, while the output dimensions of Bob and Eve are unrestricted. After a Naimark dilation, Bob's and Eve's binary measurements may be represented by Hermitian observables satisfying
\begin{equation}
\begin{gathered}
B_y^\dagger=B_y,
\qquad
E_\star^\dagger=E_\star,\\
B_y^2=E_\star^2=\id,
\qquad
[B_y,E_\star]=0,
\end{gathered}
\label{eq:qeve-measurement-algebra}
\end{equation}
the commutation relation encoding that Bob and Eve act on distinct outputs of the isometry.
 
Writing $\langle O\rangle_x=\langle\psi_x|U^\dagger O U|\psi_x\rangle$ for the expectation values at a fixed preparation $x$, the joint Bob--Eve statistics read
\begin{align}
p(b,e|x,y;s_\star) = \frac{1}{4} \Big[ 1 &+(-1)^b\langle B_y\rangle_x +(-1)^e\langle E_\star\rangle_x \nonumber\\ &+(-1)^{b+e}\langle B_yE_\star\rangle_x \Big],
\label{eq:qeve-joint-probabilities}
\end{align}
and marginalizing over Eve returns the observed PAM behavior,
\begin{equation}
p_B(b|x,y) = \sum_e p(b,e|x,y;s_\star) = \frac{1}{2} \left[ 1+(-1)^b\langle B_y\rangle_x \right].
\label{eq:qeve-bob-marginal}
\end{equation}
The observed witness is therefore unchanged,
\begin{equation}
S_3 = \langle B_0\rangle_0 +\langle B_1\rangle_0 +\langle B_0\rangle_1 -\langle B_1\rangle_1 -\langle B_0\rangle_2,
\label{eq:qeve-S3}
\end{equation}
with classical bound $S_3^{\mathrm C}=3$ and maximal qubit value $S_3^{\mathrm Q}=1+2\sqrt2$. 
 
At the generation input, Eve succeeds whenever her outcome coincides with Bob's. Her quantum guessing probability is therefore
\begin{align}
G^{\mathrm Q}(B|E,s_\star;s)
=
\sup\quad&
\sum_{b=0}^{1}
p(b,e=b|x^\star,y^\star;s_\star)
\nonumber\\
\mathrm{s.t.}\quad&
S_3[p_B]\geq s,
\nonumber\\
&
\dim\mathcal H_{\mathrm{in}}=2,
\quad
\text{Eqs.~\eqref{eq:qeve-isometry}--\eqref{eq:qeve-measurement-algebra}},
\label{eq:qeve-guessing-problem}
\end{align}
whose objective reduces, in terms of dichotomic observables, to
\begin{equation}
G^{\mathrm Q}(B|E,s_\star;s)
=
\sup\,
\frac{1}{2}
\left[
1+
\langle B_{y^\star}E_\star\rangle_{x^\star}
\right],
\label{eq:qeve-guessing-correlator}
\end{equation}
with conditional min-entropy
\begin{equation}
H_{\min}(B|E,s_\star)
=
-\log_2 G^{\mathrm Q}(B|E,s_\star).
\label{eq:qeve-minentropy}
\end{equation}
In contrast with the classical-side-information problem, the optimization in Eq.~\eqref{eq:qeve-guessing-problem} does not split into subnormalized branches labeled by a predetermined guess. A guess exists only once Eve's quantum system is measured, after the generation input is revealed, so the entire optimization is carried by a single normalized Bob--Eve moment matrix \cite{sarubi2025pab}.
 
\begin{figure*}[!t]
    \centering
    \includegraphics{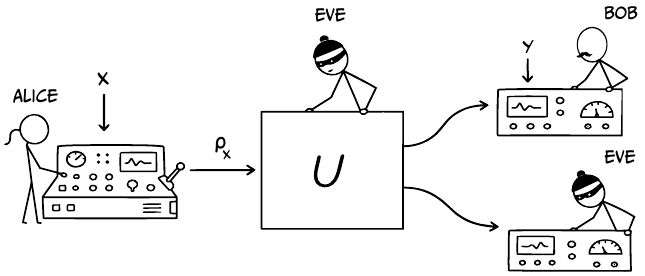}
    \caption{\textbf{Quantum-side-information architecture for the marginal
    prepare-and-measure benchmark}. Alice prepares a qubit state
    $\rho_x$, and Eve implements an arbitrary isometry
    $U:\mathbb{C}^2\to\mathcal{H}_B\otimes\mathcal{H}_E$. Bob receives one
    output and performs the measurement $B_y$, while Eve retains the entire
    complementary output and measures it only after the fixed generation input
    $(x^*,y^*)$ is revealed. In the original broadcasting interpretation, the
    second output is the arm that would otherwise be assigned to Charlie.}
    \label{fig:quantum-broadcast}
\end{figure*}
 
We bound Eq.~\eqref{eq:qeve-guessing-problem} with the same finite-dimensional moment relaxation used in the preceding subsections, now applied to the commuting Bob--Eve operator algebra. For measurement words $u,v$, define the frame moments 
\begin{equation}
\eta^{ij}_{u,v} = \langle i| U^\dagger u^\dagger v U |j\rangle, \qquad i,j\in\{0,1\},
\label{eq:qeve-frame-moments}
\end{equation}
into which the isometry is absorbed, so that it never needs to be explicitly parametrized. With preparation monomials $m,n$ retained up to
scalar degree $q=1$, the moment matrix has entries
\begin{equation}
\Gamma^{(\ell)} \bigl[(m,i,u),(n,j,v)\bigr] = L^{(\ell)} \left( m^\ast n\,\eta^{ij}_{u,v} \right),
\label{eq:qeve-moment-matrix}
\end{equation}
where $\Gamma^{(\ell)}\succeq0$. Preparation normalization, the qubit input dimension, and the operator relations of Eq.~\eqref{eq:qeve-measurement-algebra} enter as linear identities among the entries of $\Gamma^{(\ell)}$, and the unitary freedom of the input qubit again allows the gauge $|\psi_0\rangle=|0\rangle$ without restricting the physical set.
 
We compare two nested operator levels. The augmented first level, denoted $L(1,1)+BE$, uses the word set
\begin{equation}
\mathcal W_{1}^{BE} = \{ \id, B_0,B_1, E_\star, B_0E_\star, B_1E_\star \},
\label{eq:qeve-level-one-words}
\end{equation}
where the mixed words $B_yE_\star$ must be included because the
objective in Eq.~\eqref{eq:qeve-guessing-correlator} depends directly on
the Bob--Eve correlator. The second level, denoted $L(1,2)$, contains
every reduced commuting word of total length at most two,
\begin{equation}
\mathcal W_{2}^{BE} = \mathcal W_{1}^{BE} \cup \{ B_0B_1,B_1B_0 \}.
\label{eq:qeve-level-two-words}
\end{equation}
After the input gauge is fixed, these levels produce moment matrices of order $132$ and $176$, respectively, for each fixed generation input. Since every physical Bob--Eve realization generates feasible moments, each level is an outer approximation of the physical set and yields, for $\ell\in\{L(1,1)+BE,\,L(1,2)\}$, 
\begin{equation}
G^{\mathrm Q}(B|E,s_\star;s) \leq G_{\ell}^{\mathrm Q}(B|E,s_\star;s),
\label{eq:qeve-relaxed-bound}
\end{equation}
with certified entropy
\begin{equation}
H_{\min}^{\mathrm Q,\ell}(B|E,s_\star;s) = -\log_2 G_{\ell}^{\mathrm Q}(B|E,s_\star;s).
\label{eq:qeve-relaxed-entropy}
\end{equation}
Since the second word set contains the first, the $L(1,2)$ relaxation is formally at least as tight as the augmented first level. 
 
As in the classical-side-information analysis, we parametrize the witness target by the visibility $v=s/S_3^{\mathrm Q}$, or equivalently $s=vS_3^{\mathrm Q}$. Along the white-noise family $p_v=v\,p_{\mathrm{opt}}+(1-v)p_{\mathrm{unif}}$, this parameter coincides with the physical mixing visibility because $S_3[p_v]=vS_3^{\mathrm Q}$. We reconstruct the certified curve $G_{\ell}^{\mathrm Q}(B|E,s_\star;v)$ from the centered single-branch support function of Appendix~\ref{app:randomness-support}, Eqs.~\eqref{eq:qeve-centered-support}-- \eqref{eq:qeve-support-envelope}.
 
To compare quantum and classical side information on an equal footing, we evaluate three certificates, the quantum Eve at level $L(1,1)+BE$, the quantum Eve at level $L(1,2)$, and the classical side information at the PAM level $L(1,2)$, using the same witness $S_3$, the same normalized violation, and the same fixed generation input. The calculation is repeated independently for the six pairs $(x^\star,y^\star)\in\{0,1,2\}\times\{0,1\}$, with a separate observable $E_\star$ optimized for each pair, reflecting the fact that the public generation input is known before Eve performs her measurement.
 
For the exact physical models, classical side information is a special case of quantum side information, since the classical label can be encoded into orthogonal states of Eve's system. Hence,
\begin{align}
G^{\mathrm{cl}}(B|\Lambda,s_\star;s) & \leq G^{\mathrm Q}(B|E,s_\star;s), \\
H_{\min}^{\mathrm Q} &\leq H_{\min}^{\mathrm{cl}}.
\label{eq:qeve-physical-ordering}
\end{align}
The plotted quantities, however, are obtained from finite and different outer approximations, and a numerical separation between the classical and quantum curves does not, by itself, prove that the physical gap is strict. The informative comparison is therefore the one between the two quantum levels. Stability of the quantum certificate under the level increase supports the reading of a genuine side-information gap, whereas a substantial reduction of the separation indicates that part of it originates in the looseness of the lower relaxation.
 
\begin{figure}[t]
    \centering
    \includegraphics{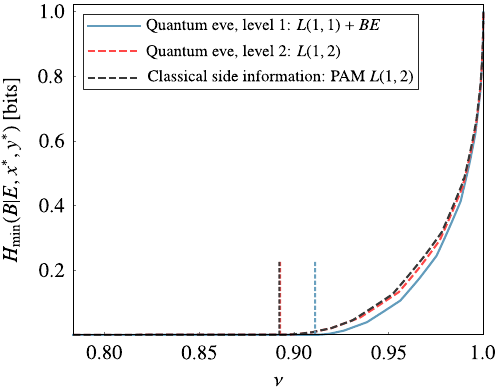}
    \caption{ \textbf{Classical and quantum side information at the generation input $(x^\star,y^\star)=(2,1)$.} Randomness certification from the witness $S_3$ in the prepare-and-measure architecture illustrated in Fig.~\ref{fig:quantum-broadcast}. The certified conditional min-entropy at the fixed generation input $(x^\star,y^\star)=(2,1)$ is shown as a function of the visibility $v$. The solid curves correspond to quantum side information at levels $L(1,1)+BE$ and $L(1,2)$, while the dashed curve corresponds to classical side information at the PAM level $L(1,2)$. The vertical lines mark the minimum visibility required for the corresponding certificate to become nontrivial.}
    \label{fig:s3-quantum-eve-all-inputs}
\end{figure}
 
The six generation inputs already play different roles in the reference qubit realization attaining $S_3^{\mathrm Q}$, since this single strategy fixes the bare output statistics of Bob's device before any adversary enters the picture, and because conditioning on Eve's system can only reduce Bob's uncertainty, this bare marginal upper bounds whatever entropy a certificate could ever report at that input. For $(x^\star,y^\star)\in\{(0,0),(0,1),(1,0),(1,1)\}$, the bare marginal satisfies $\max_b p_B(b|x^\star,y^\star)=\tfrac12(1+1/\sqrt2)$, the same bias as the optimal single-qubit QRAC, so that no certificate at those inputs can exceed $-\log_2[\tfrac12(1+1/\sqrt2)]\simeq0.2284$ bits, while the same realization renders the input $(2,0)$ deterministic, ruling out any randomness there, and leaves $(2,1)$ exactly uniform, the only one of the six inputs at which a full bit is available in principle.

In Fig.~\ref{fig:s3-quantum-eve-all-inputs} we focus on the generation input $(x^\star,y^\star)=(2,1)$, for which the reference realization allows the largest amount of randomness. At the $L(1,2)$ relaxation level, the classical- and quantum-side-information certificates are numerically close. Since the two curves are obtained from different finite outer approximations, this proximity should not be interpreted as a quantitative characterization of the gap between the exact physical models. The stability of the quantum certificate under the increase from $L(1,1)+BE$ to $L(1,2)$ is the more informative numerical comparison.

\section{Conclusions and outlook}
\label{sec:discussion}

We have further developed the prepare-and-broadcast scenario \cite{sarubi2026prepare} as a framework for studying quantum communication under dimension constraints. Beyond its original formulation, we have shown that it naturally captures the interplay between several manifestations of nonclassicality that cannot be addressed within standard prepare-and-measure experiments. In particular, we established quantitative trade-offs between prepare-and-measure witnesses, Bell nonlocality, and the communication advantages attainable by different receivers. The connection with asymmetric phase-covariant quantum cloning shows that these limitations are not merely properties of a particular communication task, but reflect fundamental restrictions on how the information encoded in a single quantum system can be distributed.

We also demonstrated that the prepare-and-broadcast scenario provides a natural setting for semi-device-independent randomness certification. The simultaneous access to marginal prepare-and-measure behaviors and multipartite correlations enables protocols that certify larger amounts of randomness than standard Bell-based approaches while remaining robust to noise. Moreover, the broadcast structure naturally accommodates stronger adversarial models, allowing an eavesdropper to retain quantum side information correlated with the measurement devices without abandoning the semi-device-independent paradigm.

Our results suggest several directions for future work. An immediate challenge is to extend the prepare-and-broadcast framework beyond communication-dimension constraints to alternative physical assumptions, such as bounded energy \cite{van2017semi}, imperfect dimension \cite{pauwels2022almost}, state fidelity \cite{tavakoli2021semi}, or information-theoretic constraints \cite{chaves2015device,tavakoli2020informationally}. It would also be interesting to investigate prepare-and-broadcast networks with more than two receivers, where new forms of monogamy and information-sharing trade-offs may emerge. Finally, the combination of communication constraints, broadcasting, and randomness certification opens the possibility of developing semi-device-independent cryptographic protocols and other information-processing tasks with no counterpart in the standard prepare-and-measure framework.

\acknowledgements
We acknowledge financial support from the EU Horizon Europe (QSNP, grant no. 101114043), the Danish National Research Foundation grant bigQ (DNRF 142), the Simons Foundation (Grant No. 1023171, R.C.), a guest professorship from the Otto M\o nsted Foundation, the Brazilian National Council for Scientific and Technological Development (CNPq, Grants No. 403181/2024-0 and 301687/2025-0), the National Institute of Science and Technology for Applied Quantum Computing through CNPq process No. 408884/2024-0, the Financiadora de Estudos e Projetos (Grant No. 1699/24 IIF-FINEP), SENAI CIMATEC/EMBRAPII through the Competence Center in Quantum Technologies -- Quantum Industrial Innovation (QuIIN), under the project ``Design e arquiteturas de Protocolos CV-QKD'' with resources from the MCTI PPI IoT/Manufatura 4.0, Grant No. 053/2023, a guest professorship from the Otto M\o{}nsted Foundation, and the Coordenação de Aperfeiçoamento de Pessoal de Nível Superior -- Brasil (CAPES) -- Finance Code 001. We also thank the High-Performance Computing Center (NPAD) at UFRN for providing computational resources. 

\textbf{ Code availability.} The code used to generate the results in this paper is available on Github: \href{https://github.com/Sarubi22/Randomness-Certification-and-Trade-offs-in-the-Prepare-and-Broadcast-Scenario}{https://github.com/Sarubi22/Randomness-Certification-and-Trade-offs-in-the-Prepare-and-Broadcast-Scenario}
\newpage

\bibliography{2-references}

@article{Jordan1875,
  author = {Jordan, Camille},
  title = {Essai sur la g{\^e}om{\^e}trie {\`a} $n$ dimensions},
  journal = {Bulletin de la Soci{\'e}t{\'e} Math{\'e}matique de France},
  year = {1875},
  volume = {3},
  pages = {103--174},
  doi = {10.24033/bsmf.90},
  url = {http://www.numdam.org/articles/10.24033/bsmf.90/}
}

@article{Masanes2006,
  author  = {Masanes, Llu{\'i}s},
  title   = {Asymptotic violation of {Bell} inequalities and distillability},
  journal = {Phys. Rev. Lett},
  volume  = {97},
  number  = {5},
  pages   = {050503},
  year    = {2006},
  doi     = {10.1103/PhysRevLett.97.050503},
  eprint  = {quant-ph/0512153},
  archivePrefix = {arXiv},
  primaryClass  = {quant-ph}
}

@article{scarani2005quantum,
  title = {Quantum Cloning},
  author = {Scarani, Valerio and Iblisdir, Sofyan and Gisin, Nicolas and Ac\'{\i}n, Antonio},
  journal = {Rev. Mod. Phys.},
  volume = {77},
  issue = {4},
  pages = {1225--1256},
  year = {2005},
  month = {Nov},
  publisher = {American Physical Society},
  doi = {10.1103/RevModPhys.77.1225},
  url = {https://link.aps.org/doi/10.1103/RevModPhys.77.1225}
}

@misc{sarubi2026prepare,
      title={The Prepare and Broadcast Scenario}, 
      author={Tailan S. Sarubi and Moisés Alves and Santiago Zamora and Vinícius F. Alves and A. de Oliveira Junior and Rafael Chaves},
      year={2026},
      eprint={2606.28632},
      archivePrefix={arXiv},
      primaryClass={quant-ph},
      url={https://arxiv.org/abs/2606.28632}, 
}

@article{fan2014quantum,
  title = {Quantum Cloning Machines and the Applications},
  author = {Fan, Heng and Wang, Yi-Nan and Jing, Li and Yue, Jie-Dong and Shi, Han-Duo and Zhang, Yong-Liang and Mu, Liang-Zhu},
  journal = {Physics Reports},
  volume = {544},
  issue = {3},
  pages = {241--322},
  year = {2014},
  month = {Sep},
  publisher = {Elsevier},
  issn = {0370-1573},
  doi = {10.1016/j.physrep.2014.06.004},
  url = {https://www.sciencedirect.com/science/article/pii/S0370157314002099}
}

@article{Pironio2010,
  title = {Random Numbers Certified by {Bell}'s Theorem},
  author = {Pironio, S. and Ac{\'i}n, A. and Massar, S. and Boyer de la Giroday, A. and Matsukevich, D. N. and Maunz, P. and Olmschenk, S. and Hayes, D. and Luo, L. and Manning, T. A. and Monroe, C.},
  journal = {Nature},
  volume = {464},
  issue = {7291},
  pages = {1021--1024},
  year = {2010},
  month = {Apr},
  publisher = {Nature Publishing Group},
  doi = {10.1038/nature09008},
  url = {https://doi.org/10.1038/nature09008}
}

@article{acin2012randomness,
  title={Randomness versus nonlocality and entanglement},
  author={Ac{\'\i}n, Antonio and Massar, Serge and Pironio, Stefano},
  journal={Phys. Rev. Lett},
  volume={108},
  number={10},
  pages={100402},
  year={2012},
  publisher={APS}
}

@article{Acin2016,
  title = {Optimal randomness certification from one entangled bit},
  author = {Ac\'{\i}n, Antonio and Pironio, Stefano and V\'ertesi, Tam\'as and Wittek, Peter},
  journal = {Phys. Rev. A},
  volume = {93},
  issue = {4},
  pages = {040102(R)},
  numpages = {5},
  year = {2016},
  month = {Apr},
  publisher = {American Physical Society},
  doi = {10.1103/PhysRevA.93.040102},
  url = {https://link.aps.org/doi/10.1103/PhysRevA.93.040102}
}

@article{Wooltorton2022,
  title = {Tight Analytic Bound on the Trade-Off between Device-Independent Randomness and Nonlocality},
  author = {Wooltorton, Lewis and Brown, Peter and Colbeck, Roger},
  journal = {Phys. Rev. Lett.},
  volume = {129},
  issue = {15},
  pages = {150403},
  numpages = {6},
  year = {2022},
  month = {Oct},
  publisher = {American Physical Society},
  doi = {10.1103/PhysRevLett.129.150403},
  url = {https://link.aps.org/doi/10.1103/PhysRevLett.129.150403}
}

@article{NietoSilleras2014,
doi = {10.1088/1367-2630/16/1/013035},
url = {https://doi.org/10.1088/1367-2630/16/1/013035},
year = {2014},
month = {jan},
publisher = {IOP Publishing},
volume = {16},
number = {1},
pages = {013035},
author = {Nieto-Silleras, O and Pironio, S and Silman, J},
title = {Using complete measurement statistics for optimal device-independent randomness evaluation},
journal = {New Journal of Physics},
}

@article{Bancal2014,
  title = {More Randomness from the Same Data},
  author = {Bancal, J.-D. and Sheridan, L. and Scarani, V.},
  journal = {New J. Phys.},
  volume = {16},
  issue = {3},
  pages = {033011},
  year = {2014},
  month = {Mar},
  publisher = {IOP Publishing},
  doi = {10.1088/1367-2630/16/3/033011},
  url = {https://doi.org/10.1088/1367-2630/16/3/033011}
}

@article{NPA2007,
  title = {Bounding the Set of Quantum Correlations},
  author = {Navascu\'es, Miguel and Pironio, Stefano and Ac\'{\i}n, Antonio},
  journal = {Phys. Rev. Lett.},
  volume = {98},
  issue = {1},
  pages = {010401},
  numpages = {4},
  year = {2007},
  month = {Jan},
  publisher = {American Physical Society},
  doi = {10.1103/PhysRevLett.98.010401},
  url = {https://link.aps.org/doi/10.1103/PhysRevLett.98.010401}
}

@article{NPA2008,
title = {A convergent hierarchy of semidefinite programs characterizing the set of quantum correlations},
author = {Navascués, Miguel and Pironio, Stefano and Acín, Antonio},
journal = {New Journal of Physics},
volume = {10},
pages = {073013},
year = {2008},
month = {jul},
publisher = {},
doi = {10.1088/1367-2630/10/7/073013},
url = {https://doi.org/10.1088/1367-2630/10/7/073013}
}

@article{NV2015,
  title = {Bounding the Set of Finite Dimensional Quantum Correlations},
  author = {Navascu\'es, Miguel and V\'ertesi, Tam\'as},
  journal = {Phys. Rev. Lett.},
  volume = {115},
  issue = {2},
  pages = {020501},
  numpages = {5},
  year = {2015},
  month = {Jul},
  publisher = {American Physical Society},
  doi = {10.1103/PhysRevLett.115.020501},
  url = {https://link.aps.org/doi/10.1103/PhysRevLett.115.020501}
}

@article{Stinespring1955,
  title = {Positive functions on  {$C^*$}-algebras},
  author = {Stinespring, W. F.},
  journal = {Proceedings of the American Mathematical Society},
  volume = {6},
  pages = {211--216},
  year = {1955},
  doi = {10.1090/s0002-9939-1955-0069403-4},
  url = {https://pubs.ams.org/journals/proc/1955-006-02/S0002-9939-1955-0069403-4}
}

@article{Naimark1940,
  author  = {Naimark, M. A.},
  title   = {Spectral functions of a symmetric operator},
  journal = {Izvestiya Akademii Nauk SSSR, Seriya Matematicheskaya},
  volume  = {4},
  number  = {3},
  pages   = {277--318},
  year    = {1940},
  url     = {http://mi.mathnet.ru/eng/im3893}
}

@article{brukner2004bell,
  title = {Bell's Inequalities and Quantum Communication Complexity},
  author = {Brukner, {\v C}aslav and {\.Z}ukowski, Marek and Pan, Jian-Wei and Zeilinger, Anton},
  journal = {Phys. Rev. Lett},
  volume = {92},
  number = {12},
  pages = {127901},
  year = {2004},
  month = {Mar},
  publisher = {American Physical Society},
  doi = {10.1103/PhysRevLett.92.127901},
  url = {https://link.aps.org/doi/10.1103/PhysRevLett.92.127901}
}

@article{Poderini2020,
  title = {Criteria for nonclassicality in the prepare-and-measure scenario},
  author = {Poderini, Davide and Brito, Samura\'{\i} and Nery, Ranieri and Sciarrino, Fabio and Chaves, Rafael},
  journal = {Phys. Rev. Res.},
  volume = {2},
  issue = {4},
  pages = {043106},
  numpages = {13},
  year = {2020},
  month = {Oct},
  publisher = {American Physical Society},
  doi = {10.1103/PhysRevResearch.2.043106},
  url = {https://link.aps.org/doi/10.1103/PhysRevResearch.2.043106}
}

@article{chaves2018causal,
  title = {Causal Modeling the Delayed-Choice Experiment},
  author = {Chaves, Rafael and Lemos, Gabriela Barreto and Pienaar, Jacques},
  journal = {Phys. Rev. Lett.},
  volume = {120},
  issue = {19},
  pages = {190401},
  numpages = {6},
  year = {2018},
  month = {May},
  publisher = {American Physical Society},
  doi = {10.1103/PhysRevLett.120.190401},
  url = {https://link.aps.org/doi/10.1103/PhysRevLett.120.190401}
}

@article{li2012semi,
  title = {Semi-device-independent randomness certification using $n\ensuremath{\rightarrow}1$ quantum random access codes},
  author = {Li, Hong-Wei and Paw\l{}owski, Marcin and Yin, Zhen-Qiang and Guo, Guang-Can and Han, Zheng-Fu},
  journal = {Phys. Rev. A},
  volume = {85},
  issue = {5},
  pages = {052308},
  numpages = {4},
  year = {2012},
  month = {May},
  publisher = {American Physical Society},
  doi = {10.1103/PhysRevA.85.052308},
  url = {https://link.aps.org/doi/10.1103/PhysRevA.85.052308}
}

@article{alves2026semi,
  title = {Semi-device-independent randomness certification on discretized continuous-variable platforms},
  author = {Alves, Mois\'es and Sena, Vitor L. and Zamora, Santiago and Sarubi, Tailan S. and de Oliveira Junior, A. and Tacla, Alexandre B. and Chaves, Rafael},
  journal = {Phys. Rev. A},
  volume = {113},
  issue = {4},
  pages = {042418},
  numpages = {17},
  year = {2026},
  month = {Apr},
  publisher = {American Physical Society},
  doi = {10.1103/z3s5-bhb7},
  url = {https://link.aps.org/doi/10.1103/z3s5-bhb7}
}

@article{pawlowski2011semi,
  title = {Semi-device-independent security of one-way quantum key distribution},
  author = {Paw\l{}owski, Marcin and Brunner, Nicolas},
  journal = {Phys. Rev. A},
  volume = {84},
  issue = {1},
  pages = {010302(R)},
  numpages = {4},
  year = {2011},
  month = {Jul},
  publisher = {American Physical Society},
  doi = {10.1103/PhysRevA.84.010302},
  url = {https://link.aps.org/doi/10.1103/PhysRevA.84.010302}
}

@article{vsupic2020self,
  title = {Self-Testing of Quantum Systems: A Review},
  author = {{\v{S}}upi{\'c}, Ivan and Bowles, Joseph},
  journal = {Quantum},
  volume = {4},
  pages = {337},
  year = {2020},
  month = {Sep},
  publisher = {Verein zur F{\"o}rderung des Open Access Publizierens in den Quantenwissenschaften},
  issn = {2521-327X},
  doi = {10.22331/q-2020-09-30-337},
  url = {https://doi.org/10.22331/q-2020-09-30-337}
}

@article{jia2025characterizing,
  title = {Characterizing the set of quantum correlations in prepare-and-measure quantum multichain-shaped networks},
  author = {Jia, Yanning and Guo, Fenzhuo and Wang, Yukun and Dong, Haifeng and Gao, Fei},
  journal = {Phys. Rev. A},
  volume = {111},
  issue = {2},
  pages = {022439},
  numpages = {11},
  year = {2025},
  month = {Feb},
  publisher = {American Physical Society},
  doi = {10.1103/PhysRevA.111.022439},
  url = {https://link.aps.org/doi/10.1103/PhysRevA.111.022439}
}

@article{clauser1969proposed,
  title = {Proposed Experiment to Test Local Hidden-Variable Theories},
  author = {Clauser, John F. and Horne, Michael A. and Shimony, Abner and Holt, Richard A.},
  journal = {Phys. Rev. Lett.},
  volume = {23},
  issue = {15},
  pages = {880--884},
  numpages = {0},
  year = {1969},
  month = {Oct},
  publisher = {American Physical Society},
  doi = {10.1103/PhysRevLett.23.880},
  url = {https://link.aps.org/doi/10.1103/PhysRevLett.23.880}
}

@article{wang2019characterising,
  title = {Characterising the correlations of prepare-and-measure quantum networks},
  author = {Wang, Yukun and Primaatmaja, Ignatius William and Lavie, Emilien and Varvitsiotis, Antonios and Lim, Charles Ci Wen},
  journal = {npj Quantum Information},
  volume = {5},
  number = {1},
  pages = {17},
  year = {2019},
  month = {Feb},
  publisher = {Nature Publishing Group},
  doi = {10.1038/s41534-019-0133-3},
  url = {https://doi.org/10.1038/s41534-019-0133-3}
}

@article{pauwels2022almost,
  title = {Almost Qudits in the Prepare-and-Measure Scenario},
  author = {Pauwels, Jef and Pironio, Stefano and Woodhead, Erik and Tavakoli, Armin},
  journal = {Phys. Rev. Lett.},
  volume = {129},
  issue = {25},
  pages = {250504},
  numpages = {7},
  year = {2022},
  month = {Dec},
  publisher = {American Physical Society},
  doi = {10.1103/PhysRevLett.129.250504},
  url = {https://link.aps.org/doi/10.1103/PhysRevLett.129.250504}
}

@article{tavakoli2021semi,
  title = {Semi-Device-Independent Framework Based on Restricted Distrust in Prepare-and-Measure Experiments},
  author = {Tavakoli, Armin},
  journal = {Phys. Rev. Lett.},
  volume = {126},
  issue = {21},
  pages = {210503},
  numpages = {6},
  year = {2021},
  month = {May},
  publisher = {American Physical Society},
  doi = {10.1103/PhysRevLett.126.210503},
  url = {https://link.aps.org/doi/10.1103/PhysRevLett.126.210503}
}

@article{chaves2015device,
  title = {Device-Independent Tests of Entropy},
  author = {Chaves, Rafael and Brask, Jonatan Bohr and Brunner, Nicolas},
  journal = {Phys. Rev. Lett.},
  volume = {115},
  issue = {11},
  pages = {110501},
  numpages = {5},
  year = {2015},
  month = {Sep},
  publisher = {American Physical Society},
  doi = {10.1103/PhysRevLett.115.110501},
  url = {https://link.aps.org/doi/10.1103/PhysRevLett.115.110501}
}

@article{tavakoli2020informationally,
  doi = {10.22331/q-2020-09-24-332},
  url = {https://doi.org/10.22331/q-2020-09-24-332},
  title = {Informationally restricted quantum correlations},
  author = {Tavakoli, Armin and Zambrini Cruzeiro, Emmanuel and Bohr Brask, Jonatan and Gisin, Nicolas and Brunner, Nicolas},
  journal = {{Quantum}},
  issn = {2521-327X},
  publisher = {{Verein zur F{\"{o}}rderung des Open Access Publizierens in den Quantenwissenschaften}},
  volume = {4},
  pages = {332},
  month = sep,
  year = {2020}
}

@article{van2017semi,
  doi = {10.22331/q-2017-11-18-33},
  url = {https://doi.org/10.22331/q-2017-11-18-33},
  title = {Semi-device-independent framework based on natural physical assumptions},
  author = {Van Himbeeck, Thomas and Woodhead, Erik and Cerf, Nicolas J. and Garc{\'{i}}a-Patr{\'{o}}n, Ra{\'{u}}l and Pironio, Stefano},
  journal = {{Quantum}},
  issn = {2521-327X},
  publisher = {{Verein zur F{\"{o}}rderung des Open Access Publizierens in den Quantenwissenschaften}},
  volume = {1},
  pages = {33},
  month = nov,
  year = {2017}
}

@article{ioannou2022receiver,
doi = {10.1088/1367-2630/ac71bc},
url = {https://doi.org/10.1088/1367-2630/ac71bc},
year = {2022},
month = {jun},
publisher = {IOP Publishing},
volume = {24},
number = {6},
pages = {063006},
author = {Ioannou, Marie and Sekatski, Pavel and Abbott, Alastair A and Rosset, Denis and Bancal, Jean-Daniel and Brunner, Nicolas},
title = {Receiver-device-independent quantum key distribution protocols},
journal = {New J. Phys}
}

@article{chaves2015information,
  title = {Information-theoretic implications of quantum causal structures},
  author = {Chaves, Rafael and Majenz, Christian and Gross, David},
  journal = {Nat. Commun},
  volume = {6},
  number = {1},
  pages = {5766},
  year = {2015},
  month = {Jan},
  publisher = {Nature Publishing Group},
  doi = {10.1038/ncomms6766},
  url = {https://doi.org/10.1038/ncomms6766}
}

@article{pawlowski2009information,
  title = {Information causality as a physical principle},
  author = {Paw{\l}owski, Marcin and Paterek, Tomasz and Kaszlikowski, Dagomir and Scarani, Valerio and Winter, Andreas and {\.Z}ukowski, Marek},
  journal = {Nature},
  volume = {461},
  number = {7267},
  pages = {1101--1104},
  year = {2009},
  month = {Oct},
  publisher = {Nature Publishing Group},
  doi = {10.1038/nature08400},
  url = {https://doi.org/10.1038/nature08400}
}

@phdthesis{woodhead2014imperfections,
  title = {Imperfections and self testing in prepare-and-measure quantum key distribution},
  author = {Woodhead, Erik},
  school = {Universit{\'e} libre de Bruxelles, Facult{\'e} des Sciences -- Physique},
  address = {Bruxelles},
  year = {2014},
  type = {PhD thesis},
  url = {http://hdl.handle.net/2013/ULB-DIPOT:oai:dipot.ulb.ac.be:2013/209185}
}

@article{lunghi2015self,
  title = {Self-Testing Quantum Random Number Generator},
  author = {Lunghi, Tommaso and Brask, Jonatan Bohr and Lim, Charles Ci Wen and Lavigne, Quentin and Bowles, Joseph and Martin, Anthony and Zbinden, Hugo and Brunner, Nicolas},
  journal = {Phys. Rev. Lett.},
  volume = {114},
  issue = {15},
  pages = {150501},
  numpages = {5},
  year = {2015},
  month = {Apr},
  publisher = {American Physical Society},
  doi = {10.1103/PhysRevLett.114.150501},
  url = {https://link.aps.org/doi/10.1103/PhysRevLett.114.150501}
}

@article{passaro2015optimal,
doi = {10.1088/1367-2630/17/11/113010},
url = {https://doi.org/10.1088/1367-2630/17/11/113010},
year = {2015},
month = {oct},
publisher = {IOP Publishing},
volume = {17},
number = {11},
pages = {113010},
author = {Passaro, Elsa and Cavalcanti, Daniel and Skrzypczyk, Paul and Acín, Antonio},
title = {Optimal randomness certification in the quantum steering and prepare-and-measure scenarios},
journal = {New J. Phys}
}

@misc{bohr2026quantum,
      title={Quantum correlations in prepare-and-measure scenarios and their semi-device-independent applications}, 
      author={Jonatan Bohr Brask and Nicolas Brunner and Jef Pauwels and Davide Rusca and Armin Tavakoli},
      year={2026},
      eprint={2603.23604},
      archivePrefix={arXiv},
      primaryClass={quant-ph},
      url={https://arxiv.org/abs/2603.23604}, 
}

@article{pauwels2022entanglement,
doi = {10.1088/1367-2630/ac724a},
url = {https://doi.org/10.1088/1367-2630/ac724a},
year = {2022},
month = {jun},
publisher = {IOP Publishing},
volume = {24},
number = {6},
pages = {063015},
author = {Pauwels, Jef and Tavakoli, Armin and Woodhead, Erik and Pironio, Stefano},
title = {Entanglement in prepare-and-measure scenarios: many questions, a few answers},
journal = {New J. Phys}
}

@article{vieira2023interplays,
doi = {10.1088/1367-2630/ad0526},
url = {https://doi.org/10.1088/1367-2630/ad0526},
year = {2023},
month = {nov},
publisher = {IOP Publishing},
volume = {25},
number = {11},
pages = {113004},
author = {Vieira, Carlos and de Gois, Carlos and Pollyceno, Lucas and Rabelo, Rafael},
title = {Interplays between classical and quantum entanglement-assisted communication scenarios},
journal = {New J. Phys}
}

@article{moreno2021semi,
  title = {Semi-device-independent certification of entanglement in superdense coding},
  author = {Moreno, George and Nery, Ranieri and de Gois, Carlos and Rabelo, Rafael and Chaves, Rafael},
  journal = {Phys. Rev. A},
  volume = {103},
  issue = {2},
  pages = {022426},
  numpages = {11},
  year = {2021},
  month = {Feb},
  publisher = {American Physical Society},
  doi = {10.1103/PhysRevA.103.022426},
  url = {https://link.aps.org/doi/10.1103/PhysRevA.103.022426}
}

@article{tavakoli2018semi,
  title = {Semi-device-independent characterization of multipartite entanglement of states and measurements},
  author = {Tavakoli, Armin and Abbott, Alastair A. and Renou, Marc-Olivier and Gisin, Nicolas and Brunner, Nicolas},
  journal = {Phys. Rev. A},
  volume = {98},
  issue = {5},
  pages = {052333},
  numpages = {9},
  year = {2018},
  month = {Nov},
  publisher = {American Physical Society},
  doi = {10.1103/PhysRevA.98.052333},
  url = {https://link.aps.org/doi/10.1103/PhysRevA.98.052333}
}

@article{gallego2010device,
  title = {Device-Independent Tests of Classical and Quantum Dimensions},
  author = {Gallego, Rodrigo and Brunner, Nicolas and Hadley, Christopher and Ac\'{\i}n, Antonio},
  journal = {Phys. Rev. Lett.},
  volume = {105},
  issue = {23},
  pages = {230501},
  numpages = {4},
  year = {2010},
  month = {Nov},
  publisher = {American Physical Society},
  doi = {10.1103/PhysRevLett.105.230501},
  url = {https://link.aps.org/doi/10.1103/PhysRevLett.105.230501}
}

@article{pawlowski2010entanglement,
  title = {Entanglement-assisted random access codes},
  author = {Paw\l{}owski, Marcin and \ifmmode \dot{Z}\else \.{Z}\fi{}ukowski, Marek},
  journal = {Phys. Rev. A},
  volume = {81},
  issue = {4},
  pages = {042326},
  numpages = {4},
  year = {2010},
  month = {Apr},
  publisher = {American Physical Society},
  doi = {10.1103/PhysRevA.81.042326},
  url = {https://link.aps.org/doi/10.1103/PhysRevA.81.042326}
}

@article{tavakoli2015quantum,
  title = {Quantum Random Access Codes Using Single $d$-Level Systems},
  author = {Tavakoli, Armin and Hameedi, Alley and Marques, Breno and Bourennane, Mohamed},
  journal = {Phys. Rev. Lett.},
  volume = {114},
  issue = {17},
  pages = {170502},
  numpages = {5},
  year = {2015},
  month = {Apr},
  publisher = {American Physical Society},
  doi = {10.1103/PhysRevLett.114.170502},
  url = {https://link.aps.org/doi/10.1103/PhysRevLett.114.170502}
}

@article{tavakoli2021correlations,
  title = {Correlations in Entanglement-Assisted Prepare-and-Measure Scenarios},
  author = {Tavakoli, Armin and Pauwels, Jef and Woodhead, Erik and Pironio, Stefano},
  journal = {PRX Quantum},
  volume = {2},
  issue = {4},
  pages = {040357},
  numpages = {17},
  year = {2021},
  month = {Dec},
  publisher = {American Physical Society},
  doi = {10.1103/PRXQuantum.2.040357},
  url = {https://link.aps.org/doi/10.1103/PRXQuantum.2.040357}
}

@article{zamora2025semi,
  title = {Semi-device-independent nonstabilizerness certification in the prepare-and-measure scenario},
  author = {Zamora, Santiago and Mac\^edo, Rafael A. and Sarubi, Tailan S. and Alves, Mois\'es and Poderini, Davide and Chaves, Rafael},
  journal = {Phys. Rev. A},
  volume = {112},
  issue = {4},
  pages = {042410},
  numpages = {14},
  year = {2025},
  month = {Oct},
  publisher = {American Physical Society},
  doi = {10.1103/wm7m-xnfq},
  url = {https://link.aps.org/doi/10.1103/wm7m-xnfq}
}

@article{Ambainis2002,
author = {Ambainis, Andris and Nayak, Ashwin and Ta-Shma, Amnon and Vazirani, Umesh},
title = {Dense quantum coding and quantum finite automata},
year = {2002},
issue_date = {July 2002},
publisher = {Association for Computing Machinery},
address = {New York, NY, USA},
volume = {49},
number = {4},
issn = {0004-5411},
url = {https://doi.org/10.1145/581771.581773},
doi = {10.1145/581771.581773},
journal = {J. ACM},
month = jul,
pages = {496–511},
numpages = {16}
}

@article{Nayak1999,
  title={Optimal lower bounds for quantum automata and random access codes},
  author={Ashwin Nayak},
  journal={40th Annual Symposium on Foundations of Computer Science (Cat. No.99CB37039)},
  year={1999},
  pages={369-376},
  url={https://api.semanticscholar.org/CorpusID:3486}
}

@article{Mohan2019,
doi = {10.1088/1367-2630/ab3773},
url = {https://doi.org/10.1088/1367-2630/ab3773},
year = {2019},
month = {aug},
publisher = {IOP Publishing},
volume = {21},
number = {8},
pages = {083034},
author = {Mohan, Karthik and Tavakoli, Armin and Brunner, Nicolas},
title = {Sequential random access codes and self-testing of quantum measurement instruments},
journal = {New J. Phys}
}

@article{Bruss2000,
  title = {Phase-covariant quantum cloning},
  author = {Bru\ss{}, Dagmar and Cinchetti, Mirko and Mauro D'Ariano, G. and Macchiavello, Chiara},
  journal = {Phys. Rev. A},
  volume = {62},
  issue = {1},
  pages = {012302},
  numpages = {7},
  year = {2000},
  month = {Jun},
  publisher = {American Physical Society},
  doi = {10.1103/PhysRevA.62.012302},
  url = {https://link.aps.org/doi/10.1103/PhysRevA.62.012302}
}

@article{DAriano2003,
  title = {Optimal phase-covariant cloning for qubits and qutrits},
  author = {D'Ariano, Giacomo Mauro and Macchiavello, Chiara},
  journal = {Phys. Rev. A},
  volume = {67},
  issue = {4},
  pages = {042306},
  numpages = {9},
  year = {2003},
  month = {Apr},
  publisher = {American Physical Society},
  doi = {10.1103/PhysRevA.67.042306},
  url = {https://link.aps.org/doi/10.1103/PhysRevA.67.042306}
}

@article{Rezakhani2005AsymmetricPhaseCovariant,
title = {Separability in asymmetric phase-covariant cloning},
journal = {Phys. Lett. A},
volume = {336},
number = {4},
pages = {278-289},
year = {2005},
issn = {0375-9601},
doi = {https://doi.org/10.1016/j.physleta.2004.12.015},
url = {https://www.sciencedirect.com/science/article/pii/S0375960104017219},
author = {A.T. Rezakhani and S. Siadatnejad and A.H. Ghaderi}
}

@book{BoydVandenberghe2004,
  author    = {Boyd, Stephen and Vandenberghe, Lieven},
  title     = {Convex Optimization},
  publisher = {Cambridge University Press},
  year      = {2004},
  doi       = {10.1017/CBO9780511804441}
}

@misc{sarubi2025pab,
  title        = {Randomness-Certification-and-Trade-offs-in-the-Prepare-and-Broadcast -- Numerical Codes},
  url = {https://github.com/Sarubi22/Randomness-Certification-and-Trade-offs-in-the-Prepare-and-Broadcast-Scenario},
  author ={ Tailan Sarubi},
  year = {2025},
}

@article{Chen2007AsymmetricPhaseCovariant,
  title = {Experimental realization of $\mathbf{1}\mathbf{\ensuremath{\rightarrow}}\mathbf{2}$ asymmetric phase-covariant quantum cloning},
  author = {Chen, Hongwei and Zhou, Xianyi and Suter, Dieter and Du, Jiangfeng},
  journal = {Phys. Rev. A},
  volume = {75},
  issue = {1},
  pages = {012317},
  numpages = {5},
  year = {2007},
  month = {Jan},
  publisher = {American Physical Society},
  doi = {10.1103/PhysRevA.75.012317},
  url = {https://link.aps.org/doi/10.1103/PhysRevA.75.012317}
}

@misc{Ambainis2008QRAC,
      title={Quantum Random Access Codes with Shared Randomness}, 
      author={Andris Ambainis and Debbie Leung and Laura Mancinska and Maris Ozols},
      year={2009},
      eprint={0810.2937},
      archivePrefix={arXiv},
      primaryClass={quant-ph},
      url={https://arxiv.org/abs/0810.2937}, 
}

\appendix

\section{Proof of Eqs.~\eqref{eq:criterion4prepA} -~\eqref{eq:criterion4prepB}}
\label{app:A}

In this appendix we show that a set of four preparations
$\overline\rho=\{\rho_0,\rho_1,\rho_2,\rho_3\}$ satisfies every relabeling of the inequalities
\begin{equation}
    S_4 = E_{00}+E_{01}+E_{10}-E_{11}-E_{20}+E_{21}-E_{30}-E_{31} \le 4,
    \label{eq:app:S4}
\end{equation}
and
\begin{equation}
    S_3 = E_{00}+E_{01}+E_{10}-E_{11}-E_{20} \le 3,
    \label{eq:app:S3}
\end{equation}
for arbitrary qubit measurements if and only if their Bloch vectors satisfy the following $15$ inequalities:
\begin{align}
    \|\vec{r}_0+\vec{r}_1-\vec{r}_2-\vec{r}_3\|_2 + \|\vec{r}_0-\vec{r}_1+\vec{r}_2-\vec{r}_3\|_2 &\leq 4,\nonumber\\
    \|\vec{r}_0+\vec{r}_1-\vec{r}_2-\vec{r}_3\|_2 +\|-\vec{r}_0+\vec{r}_1+\vec{r}_2-\vec{r}_3\|_2 &\leq 4,\label{eq:app:criterion4prepA}\\
    \|-\vec{r}_0+\vec{r}_1+\vec{r}_2-\vec{r}_3\|_2 +  \|\vec{r}_0-\vec{r}_1+\vec{r}_2-\vec{r}_3\|_2 &\leq 4,\nonumber
\end{align}
together with
\begin{align}
    \|\vec{r}_i+\vec{r}_j-\vec{r}_k\|_2+\|\vec{r}_i-\vec{r}_j\|_2 &\leq 3,\nonumber\\
    \|\vec{r}_i+\vec{r}_k-\vec{r}_j\|_2+\|\vec{r}_i-\vec{r}_k\|_2&\leq 3,\label{eq:app:criterion4prepB}\\
    \|\vec{r}_k+\vec{r}_j-\vec{r}_i\|_2+\|\vec{r}_k-\vec{r}_j\|_2 &\leq 3,\nonumber
\end{align}
for every $\{i,j,k\}\in\binom{\{0,1,2,3\}}{3}$. Since the last $12$ inequalities are exactly the three-preparation criterion of Ref.~\cite{Poderini2020} corresponding to all possible relabelings of $S_3$, applied to each subset of three preparations in $\overline\rho$, it remains to prove the three genuinely four-preparation inequalities. Consider the Bloch representation
$\rho_x=\frac12(\iden+\vec r_x\cdot\vec\sigma)$, with $x\in\{0,1,2,3\}$, and dichotomic qubit measurements
$B_{b|y}=\frac12(\iden+(-1)^b\vec q_y\cdot\vec\sigma)$, with $y\in\{0,1\}$. Then the correlators are given by $E_{xy}=\vec r_x\cdot\vec q_y$. Substituting this into Eq.~\eqref{eq:app:S4}, we obtain
\begin{equation}
    \vec q_0\cdot(\vec r_0+\vec r_1-\vec r_2-\vec r_3)  +  \vec q_1\cdot(\vec r_0-\vec r_1+\vec r_2-\vec r_3) \leq 4.
    \label{eq:app:S4_bloch}
\end{equation}
To find the maximal value for a fixed set of preparations, we optimize over Bob's measurement directions $\vec q_0$ and $\vec q_1$. Since these are Bloch vectors of dichotomic qubit observables, they satisfy $\|\vec q_y\|_2\leq 1$, with the maximum attained by projective measurements satisfying $\|\vec q_y\|_2=1$. The expression in Eq.~\eqref{eq:app:S4_bloch} is maximized by choosing $\vec q_0$ parallel to the first vector and $\vec q_1$ parallel to the second one. Therefore, Eq.~\eqref{eq:app:S4} is satisfied for all qubit measurements if and only if
\begin{equation}
     \|\vec r_0+\vec r_1-\vec r_2-\vec r_3\|_2 + \|\vec r_0-\vec r_1+\vec r_2-\vec r_3\|_2   \leq 4.
     \label{eq:app:first_4prep_condition}
\end{equation}
This is the first inequality in Eq.~\eqref{eq:app:criterion4prepA}. We now consider the remaining relabelings of Eq.~\eqref{eq:app:S4}. The relevant sign patterns have the form of a partition of the four preparations into two pairs, with one pair carrying positive signs and the other carrying negative signs. There are exactly three such pair partitions. Define
\begin{align}
    \vec w_1&:=\vec r_0+\vec r_1-\vec r_2-\vec r_3,\nonumber\\
    \vec w_2&:=\vec r_0-\vec r_1+\vec r_2-\vec r_3,\label{eq:app:w_vectors}\\
    \vec w_3&:=\vec r_0-\vec r_1-\vec r_2+\vec r_3.
    \nonumber
\end{align}
Thus, Eq.~\eqref{eq:app:first_4prep_condition} can be written as
\begin{equation}
    \|\vec w_1\|_2+\|\vec w_2\|_2\leq 4.
    \label{eq:app:w12_condition}
\end{equation}
Relabeling the preparations only permutes the three vectors
$\{\vec w_1,\vec w_2,\vec w_3\}$, up to irrelevant global signs inside the norms. Hence, the only inequivalent four-preparation conditions obtained from relabelings of Eq.~\eqref{eq:app:S4} are
\begin{align}
    \|\vec w_1\|_2+\|\vec w_2\|_2 &\leq 4,\nonumber\\
    \|\vec w_1\|_2+\|\vec w_3\|_2 &\leq 4,\label{eq:app:w_conditions}\\
    \|\vec w_2\|_2+\|\vec w_3\|_2 &\leq 4.\nonumber
\end{align}
After substituting the definitions in Eq.~\eqref{eq:app:w_vectors}, these three inequalities are exactly the three conditions in Eq.~\eqref{eq:criterion4prepA}. Together with the three-preparation conditions in Eq.~\eqref{eq:criterion4prepB}, they give the claimed necessary and sufficient criterion.

\section{Support-function methods}\label{app:support_function_lower_bounds}

The trade-off boundaries of Sec.~\ref{sec:broadcast_qrac_tradeoff} and the randomness curves of Sec.~\ref{sec:randomness} are produced by the same convex-analytic construction. In both cases the quantities of interest are mapped onto a convex subset of a two-dimensional plane, and linear support functions are maximized along a one-parameter family of directions. For the trade-offs, the two coordinates are linear scores evaluated on a single broadcast realization, and the support function is bounded from below by explicit strategies and from above by the moment relaxation. For randomness certification, one coordinate is the witness value and the other is the probability assigned to one guess of the adversary, and dualizing the observed witness constraint yields an upper envelope on the guessing probability, hence a lower bound on the certified min-entropy, valid at every witness value at once. Throughout, $\mathcal S_{q,t}(r)$ denotes the moment set of Sec.~\ref{sec:rand-framework}, at level $L(q,t)$ and total weight $r$, which is also the relaxation used for the trade-off upper bounds \cite{sarubi2026prepare}.

\subsection{Trade-off curves}
\label{app:tradeoff-support}

Let $G_1$ and $G_2$ be two linear scores of the broadcast behavior and let
\begin{equation}
    \mathcal R_Q
    =
    \operatorname{conv}
    \left\{
        \bigl(G_1[p],G_2[p]\bigr)\,:\,p\in\Qbr
    \right\}
    \label{eq:app-tradeoff-set}
\end{equation}
be the set of attainable pairs, convex because the devices may share classical randomness. Instead of fixing a target value of one coordinate and
maximizing the other, we optimize the support function
\begin{equation}
    h_Q(\mu)
    =
    \sup_{(u,v)\in\mathcal R_Q}\left(u+\mu v\right),
    \qquad
    \mu\geq0,
    \label{eq:app-tradeoff-support}
\end{equation}
so that varying $\mu\geq0$ exposes the upper Pareto boundary relevant to the simultaneous maximization of $G_1$ and $G_2$.

Lower bounds are obtained from an explicit qubit-output isometric ansatz, which generates physical strategies and is not the most general parametrization of the set constrained by the relaxation. Within this ansatz Alice prepares $\rho_x\in\mathcal D(\mathbb C^2)$, the broadcast map is an isometry $U:\mathbb C^2\to\mathbb C^2\otimes\mathbb C^2$ with $U^\dagger U=\id_2$, and Bob and Charlie perform the local projective measurements $B_y=\vec b_y\cdot\vec\sigma$ and $C_z=\vec c_z\cdot\vec\sigma$ on $\rho^{BC}_x=U\rho_xU^\dagger$, giving the correlators $\avg{B_y}{x}=\tr[\rho^{BC}_x(B_y\otimes\id_2)]$, $\avg{C_z}{x}=\tr[\rho^{BC}_x(\id_2\otimes C_z)]$, and $\corr{y}{z}{x}=\tr[\rho^{BC}_x(B_y\otimes C_z)]$. Each score is then written as
\begin{align}
G_i
={}&
\kappa_i
+
\sum_{x,y}a^i_{xy}\avg{B_y}{x}
+
\sum_{x,z}b^i_{xz}\avg{C_z}{x}
\nonumber\\
&+
\sum_{x,y,z}c^i_{xyz}\corr{y}{z}{x},
\qquad i=1,2,
\label{eq:app-tradeoff-coordinates}
\end{align}
where the constants $\kappa_i$ accommodate affine scores such as success
probabilities. The weighted score $G_\mu=G_1+\mu G_2$ has coefficients
$a^\mu_{xy}=a^1_{xy}+\mu a^2_{xy}$, and analogously for $b^\mu_{xz}$,
$c^\mu_{xyz}$, and $\kappa_\mu$.

At fixed $U$, $B_y$, and $C_z$ the dependence on each preparation is linear,
$G_\mu=\kappa_\mu+\sum_x\tr(\rho_xK^\mu_x)$, with the effective operators
\begin{align}
K^\mu_x
={}&
\sum_y a^\mu_{xy}\,U^\dagger(B_y\otimes\id_2)U
+
\sum_z b^\mu_{xz}\,U^\dagger(\id_2\otimes C_z)U
\nonumber\\
&+
\sum_{y,z} c^\mu_{xyz}\,U^\dagger(B_y\otimes C_z)U .
\label{eq:app-tradeoff-effective}
\end{align}
The optimization over the preparations is therefore analytic and returns $\max_{\rho_x}\tr(\rho_xK^\mu_x)=\lambda_{\max}(K^\mu_x)$, an optimal state being any state supported in the maximal eigenspace of $K^\mu_x$. The remaining search over the isometry, parametrized as $U=A(A^\dagger A)^{-1/2}$ with $A$ a complex $4\times2$ matrix, and over the unit Bloch vectors of the two receivers, is nonconvex and is performed numerically. For each multiplier we denote by $p_\mu$ the best explicit strategy found, whose support value
\begin{equation}
    \widehat h_{\mathrm{LB}}(\mu)
    :=
    G_1[p_\mu]+\mu G_2[p_\mu]
    \label{eq:app-tradeoff-lower}
\end{equation}
is attained by a feasible physical realization and therefore satisfies $\widehat h_{\mathrm{LB}}(\mu)\leq h_Q(\mu)$, with no claim that the
nonconvex search has reached its global maximum.

Upper bounds follow from the same support function evaluated over the moment relaxation. Writing $G_i[L]$ for the lifted scores,
\begin{equation}
    h^{(q,t)}_{\mathrm{UB}}(\mu)
    =
    \max_{L\in\mathcal S_{q,t}(1)}
    \left(
        G_1[L]+\mu G_2[L]
    \right),
    \label{eq:app-tradeoff-upper}
\end{equation}
which is again a single semidefinite program. Every physical strategy generates a feasible moment functional, so
\begin{equation}
    \widehat h_{\mathrm{LB}}(\mu)
    \leq
    h_Q(\mu)
    \leq
    h^{(q,t)}_{\mathrm{UB}}(\mu)
    \leq
    \widehat h^{(q,t)}_{\mathrm{UB}}(\mu),
    \label{eq:app-tradeoff-chain}
\end{equation}
the last inequality holding within the numerical accuracy stated above. Each multiplier thus supplies the outer supporting line
$u+\mu v\leq\widehat h^{(q,t)}_{\mathrm{UB}}(\mu)$ valid for every physical pair $(u,v)$. Agreement between the two ends of
Eq.~\eqref{eq:app-tradeoff-chain} within the stated tolerance identifies the support value in that direction at the relaxation level used, and, when the exposed optimizer is unique, the corresponding boundary point within the same tolerance. The markers displayed in Figs.~\ref{fig:combined_tradeoffs} and~\ref{fig:broadcast_qrac_tradeoff} are explicit-strategy points on the one hand and boundary points of the relaxed outer approximation, read off the optimizers of Eq.~\eqref{eq:app-tradeoff-upper}, on the other. In Sec.~\ref{sec:broadcast_qrac_tradeoff} the pair $(G_1,G_2)$ is either the two marginal QRAC scores $(P_B,P_C)$, or the marginal prepare-and-measure witnesses of the two receivers, or one marginal witness together with the conditioned Bell expression $\mathrm{CHSH}^{[x]}_{BC}$.

\subsection{Randomness envelopes}
\label{app:randomness-support}

We now apply the construction to the guessing-probability curves of Sec.~\ref{sec:randomness}, stating it for the joint randomness of the broadcast pair, the marginal case following by restriction. Solving the coupled branch program of Eq.~\eqref{eq:guessing-sdp-framework} pointwise certifies the guessing probability only at the sampled witness values, while evaluating the support function of each branch produces a single envelope valid for all of them at once. For each guess $g=(b_g,c_g)\in\mathcal G_{BC}$ and each multiplier $\tau\geq0$ we define
\begin{equation}
    F_g(\tau)
    =
    \max_{L\in\mathcal S_{q,t}(1)}
    \left[
        p_L(b_g,c_g|s)+\tau W[L]
    \right],
    \label{eq:support-F}
\end{equation}
a normalized single-branch semidefinite program of the same size as a standard witness optimization, which is the support function of the convex set $\bigl\{(W[L],p_L(b_g,c_g|s)):L\in\mathcal S_{q,t}(1)\bigr\}$ evaluated along the direction $(\tau,1)$. The construction rests on supporting hyperplanes and weak Lagrangian duality~\cite{BoydVandenberghe2004}, and dual bounds of this type, in which a linear functional yields an affine upper bound on the predictive power of the adversary, are standard in device-independent randomness certification~\cite{NietoSilleras2014,Bancal2014}.

Let $\{L_g\}_{g\in\mathcal G_{BC}}$ be feasible for Eq.~\eqref{eq:guessing-sdp-framework}, so that $L_g\in\mathcal S_{q,t}(r_g)$, $\sum_gr_g=1$, and $\sum_gW[L_g]\geq w$. The relaxed cones are homogeneous in the branch weight, in the sense that $L\in\mathcal S_{q,t}(r)$ with $r>0$ if and only if $L=rL'$ with $L'\in\mathcal S_{q,t}(1)$, the branch of vanishing weight being the zero functional. For every $\tau\geq0$,
\begin{align}
\sum_g p_{L_g}(b_g,c_g|s)
&\leq
\sum_g
\left[
    p_{L_g}(b_g,c_g|s)+\tau W[L_g]
\right]
-\tau w
\nonumber\\
&=
\sum_g r_g
\left[
    p_{L'_g}(b_g,c_g|s)+\tau W[L'_g]
\right]
-\tau w
\nonumber\\
&\leq
\max_{g\in\mathcal G_{BC}}F_g(\tau)-\tau w,
\label{eq:support-weak-duality}
\end{align}
the first line using the observed witness constraint and the last one using the normalization $\sum_gr_g=1$. Minimizing over the multiplier gives the
support-function envelope
\begin{equation}
    G_{q,t}(BC|\Lambda,s;w)
    \leq
    \inf_{\tau\geq0}
    \left[
        \max_{g\in\mathcal G_{BC}}F_g(\tau)-\tau w
    \right],
    \label{eq:support-envelope}
\end{equation}
which, combined with the inclusion of the physical set in its moment relaxation, yields the certified chain
\begin{align}
    G(BC|\Lambda,s;w)
    &\leq
    G_{q,t}(BC|\Lambda,s;w)
    \nonumber\\
    &\leq
    \inf_{\tau\geq0}
    \left[
        \max_{g\in\mathcal G_{BC}}F_g(\tau)-\tau w
    \right].
    \label{eq:certified-support-chain}
\end{align}
The first inequality holds because the relaxation is an outer approximation of the dimension-bounded PAB set, and the second follows from Eq.~\eqref{eq:support-weak-duality}, so strong duality is never invoked. At fixed $\tau$ the right-hand side is affine and nonincreasing in $w$, hence the envelope is a concave nonincreasing function of the observed witness value, and it is an upper bound on the guessing probability rather than an estimate of it from below. The certified min-entropy is therefore
\begin{equation}
    H_{\min}(BC|\Lambda,s)
    \geq
    -\log_2
    \left[
        \inf_{\tau\geq0}
        \left(
            \max_{g}F_g(\tau)-\tau w
        \right)
    \right],
    \label{eq:support-hmin}
\end{equation}
whenever the argument of the logarithm lies in $(0,1]$.

For the broadcast witness $W$ of Eq.~\eqref{eq:CC2}, an exact output symmetry halves the number of joint-guess support problems. The global relabeling $b\mapsto b\oplus1$ and $c\mapsto c\oplus1$, applied to every setting of both receivers, acts on the dichotomic observables as $B_y\mapsto-B_y$ and $C_z\mapsto-C_z$. Every correlator entering $W$ is a joint one and is therefore invariant under the simultaneous sign flip, while the transformation preserves positivity, normalization, the preparation constraints, and the measurement relations $B_y^2=C_z^2=\id$ and $[B_y,C_z]=0$, so it is a bijection of the feasible moment functionals at any relaxation level and exchanges the guesses $00\leftrightarrow11$ and $01\leftrightarrow10$ at the fixed generation input. Consequently
\begin{equation}
    F_{00}(\tau)=F_{11}(\tau),
    \qquad
    F_{01}(\tau)=F_{10}(\tau),
    \label{eq:support-guess-symmetry}
\end{equation}
and the maximum appearing in
Eqs.~\eqref{eq:support-envelope}--\eqref{eq:support-hmin} reduces to $\max\{F_{00}(\tau),F_{01}(\tau)\}$. The even-parity class $\{00,11\}$ and the odd-parity class $\{01,10\}$ remain inequivalent in general, since flipping the output of a single receiver reverses the sign of the joint correlators in $W$. The reduction is specific to this witness, and it is not needed in the marginal prepare-and-measure case, where the guess set has two elements from the start.

The certificates of Sec.~\ref{sec:randomness} are reported as functions of the visibility rather than of the raw witness value. For the noisy families used throughout, $p_v=v\,p_{\mathrm{opt}}+(1-v)\,p_{\mathrm{unif}}$ with $p_{\mathrm{unif}}$ the uniform behavior on the relevant outcome set, every term of the witnesses considered here vanishes on $p_{\mathrm{unif}}$, since $W$ contains only two-body correlators while $S_3$ and $S_4$ contain only single-party ones. Hence
\begin{equation}
    W[p_v]=v\,W_Q,
    \qquad
    v=\frac{w}{W_Q},
    \label{eq:app-visibility}
\end{equation}
so the visibility is the mixing parameter of the family for the fixed-distribution certificates and the observed witness value in units of
its maximal quantum value for the witness-only certificates, the two readings agreeing along the noisy family. The same identification applies to $S_3$, $S_4$, and $\mathrm{CHSH}^{[1]}$ with their respective quantum maxima, which turns the visibility into a common axis for witnesses of different classical bounds. A witness-only certificate is necessarily trivial whenever
\begin{equation}
    v\leq v_{\mathrm C}:=\frac{W^{\mathrm C}}{W_Q},
    \label{eq:app-classical-threshold}
\end{equation}
with $W^{\mathrm C}$ the classical bound of the witness in use, because a deterministic classical strategy attaining that bound satisfies the observed witness constraint while allowing the adversary to guess the outcome with certainty. The conclusion does not extend automatically to a fixed-distribution certificate, whose threshold depends on the complete observed behavior rather than on a single witness value.

In the numerical implementation the support functions are evaluated on a finite grid $\{\tau_k\}$, refined adaptively where the certificate varies fastest, and each value $\widehat F_g(\tau_k)$ is the conservative estimate described at the beginning of this appendix. Evaluating
Eq.~\eqref{eq:support-envelope} at $w=v\,W_Q$ and clipping to the range allowed on physical grounds, the envelope actually reported is
\begin{equation}
\begin{aligned}
    \widehat G^{\mathrm{gr}}_{q,t}(v)
    &=
    \min_k
    \left[
        \max_g\widehat F_g(\tau_k)-\tau_k v\,W_Q
    \right],
    \\
    \widehat G_{q,t}(v)
    &=
    \min\left\{
        1,\,
        \max\left\{
            \tfrac{1}{|\mathcal G|},\,
            \widehat G^{\mathrm{gr}}_{q,t}(v)
        \right\}
    \right\},
\end{aligned}
    \label{eq:finite-grid-envelope}
\end{equation}
with $|\mathcal G|=4$ for the joint randomness of the pair and $|\mathcal G|=2$ for a marginal. The lower clipping is legitimate because the adversary always guesses a most likely outcome, so $G\geq1/|\mathcal G|$ whenever the feasible set is nonempty, and it prevents residual numerical error from returning a min-entropy above the algebraic maximum. Restricting the minimization to finitely many multipliers can only increase the right-hand side relative to the continuous infimum, so a coarse grid weakens the certificate without ever strengthening it artificially. Once the witness constraint is dualized the branches decouple, and for the broadcast witness the symmetry above leaves two inequivalent guess classes, so a complete curve $v\mapsto\widehat G_{q,t}(v)$ at a fixed generation input follows from $2\,|\{\tau_k\}|$ single-branch programs rather than from one coupled four-branch program per sampled visibility. The saving matters most near the maximal violation, where the certified randomness varies rapidly and a pointwise treatment would demand fine sampling precisely in the numerically hardest regime.

For the quantum adversary of Sec.~\ref{sec:qeve} the same construction applies to a single normalized moment matrix rather than to a family of guess-labeled branches, in keeping with the coherent nature of the side information, and it is convenient to center the support function at the maximal violation, which in the parametrization of Eq.~\eqref{eq:app-visibility} amounts to centering at $v=1$. With $G_{s_\star}[L^{(\ell)}]$ the lifted objective of Eq.~\eqref{eq:qeve-guessing-correlator} and $v[L^{(\ell)}]=S_3[L^{(\ell)}]/S_3^{\mathrm Q}$ the lifted visibility, for each level $\ell$, each fixed generation input $s_\star$, and each $\tau\geq0$ we evaluate
\begin{equation}
q_{\ell,s_\star}(\tau)
=
\max_{L^{(\ell)}}
\left[
    G_{s_\star}[L^{(\ell)}]
    +
    \tau\bigl(v[L^{(\ell)}]-1\bigr)
\right],
\label{eq:qeve-centered-support}
\end{equation}
which shares its optimizer with the uncentered support function while avoiding the subtraction of two large and nearly equal quantities close to the endpoint. Weak duality then gives the uniform bound
\begin{equation}
G^{\mathrm Q}_{\ell}(B|E,s_\star;v)
\leq
\inf_{\tau\geq0}
\left[
    q_{\ell,s_\star}(\tau)
    +
    \tau(1-v)
\right].
\label{eq:qeve-support-envelope}
\end{equation}
The infimum is again evaluated over an adaptively refined finite grid, whichcan only weaken the certificate, and the reported curve is clipped from belowat $1/2$ for the same reason as in Eq.~\eqref{eq:finite-grid-envelope}. Direct pointwise semidefinite programs close to the maximal violation serve as independent conservative anchors, the exact endpoint is excluded from the numerical grid, and no analytic endpoint value is inserted.

\section{Two-bit certification at the maximal violation of \texorpdfstring{$W_{\mathcal{PAB}}$}{W	extsubscript{PAB}}}
\label{app:endpoint}
 
This appendix proves Theorem~\ref{rand:thm:main} and Corollary~\ref{rand:cor:orbit}. Throughout, $s^*=(1,0,0)$ is the generation input, and all statements refer to the adversarial model of Definition~\ref{rand:def:eve}. We write the witness \eqref{eq:CC2} as a sum of three preparation blocks,
\begin{equation}\label{rand:eq:blocks}
W_{\mathcal{PAB}} = W^{[0]}+W^{[1]}+W^{[2]},\qquad W^{[x]}=\avg{M_x}{x},
\end{equation}
with block operators and block bounds
\begin{equation}\label{rand:eq:blockops}
\begin{aligned}
M_0&=2\,(B_0C_0+B_1C_1), &\lambda_0&=4,\\
M_1&=2\,(B_0C_1-B_1C_0), &\lambda_1&=4,\\
M_2&=\B, &\lambda_2&=2\sqrt2,\\
\B&:=-B_0C_0-B_0C_1+B_1C_0-B_1C_1, &&
\end{aligned}
\end{equation}
where $\avg{O}{x}$ denotes the expectation of $O$ on the dilated preparation $x$ (defined below). The block deficits are $\delta_x=\lambda_x-W^{[x]}\ge0$, their nonnegativity being part of Lemma~\ref{rand:lem:bound}, and the total deficit $\sum_x\delta_x=\WQ-W_{\mathcal{PAB}}[p]$ vanishes exactly when $p$ saturates the maximal violation.
 
\subsection{Exact dilation to the projective isometric form}
 
\begin{lemma}[exact dilation]\label{rand:lem:dilation}
Every $p\in\Qbr$ admits a representation
\begin{equation}\label{rand:eq:dilrep}
p(b,c|x,y,z)=\tr\!\big[V\rho_x V^\dagger\,\big(\Pi^B_{b|y}\,\Pi^C_{c|z}\otimes \id_R\big)\big],
\end{equation}
with qubit states $\rho_x$, an isometry $V:\mathbb{C}^2\to \mathcal{H}_B\otimes\mathcal{H}_C\otimes\mathcal{H}_R$ between finite-dimensional spaces, and projective measurements whose reflections $B_y=\Pi^B_{0|y}-\Pi^B_{1|y}$ and $C_z=\Pi^C_{0|z}-\Pi^C_{1|z}$ act on the $B$ and $C$ tensor factors and obey
\begin{equation}\label{rand:eq:algebra}
B_y^2=C_z^2=\id,\qquad [B_y,C_z]=0 .
\end{equation}
The witness value and the generation distribution are unchanged.
\end{lemma}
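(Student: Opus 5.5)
The plan is to start from the physical form of a branch in $\Qbr$, Eq.~\eqref{eq:pab-physical-set}: a qubit state $\rho_x$, a broadcast channel $\Lambda:\mathcal L(\mathbb C^2)\to\mathcal L(\mathcal H_B'\otimes\mathcal H_C')$, and POVMs $\{B_{b|y}\}$, $\{C_{c|z}\}$ on the finite-dimensional receiver spaces. Then apply three exact dilations in sequence, checking at each stage that every probability $p(b,c|x,y,z)$ is reproduced identically. Once that holds, the witness value, which is linear in $p$, and the generation distribution at $s^*$ are automatically unchanged.

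\textbf{Step 1: Stinespring.} Write $\Lambda(\rho)=\tr_R[V_0\rho V_0^\dagger]$, with $V_0:\mathbb C^2\to\mathcal H_B'\otimes\mathcal H_C'\otimes\mathcal H_R$ an isometry. The environment dimension can be taken as $\dim\mathcal H_R\le 2\dim\mathcal H_B'\dim\mathcal H_C'$, so all spaces stay finite-dimensional. We then have $\tr[\Lambda(\rho_x)(B_{b|y}\otimes C_{c|z})]=\tr[V_0\rho_xV_0^\dagger(B_{b|y}\otimes C_{c|z}\otimes\id_R)]$.

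\textbf{Step 2: Naimark on each receiver separately.} For Bob, use the finitely many binary POVMs $\{B_{b|y}\}_{b}$, $y\in\{0,1\}$. Embed $\mathcal H_B'$ into $\mathcal H_B=\mathcal H_B'\otimes\mathbb C^2\otimes\mathbb C^2$, with one ancilla qubit in state $\ket0$ per setting. For each $y$, take a unitary $W_y$ on $\mathcal H_B'\otimes\mathbb C^2_y$ satisfying $(\bra{0}\otimes\id)W_y^\dagger(\id\otimes\ket{b}\!\bra{b})W_y(\ket{0}\otimes\id)=B_{b|y}$. One explicit choice is the standard dilation built from $\sqrt{B_{0|y}}$ and $\sqrt{B_{1|y}}$. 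Set $\Pi^B_{b|y}=W_y^\dagger(\id\otimes\ket b\!\bra b)W_y$, acting trivially on the other ancilla; this is a projector on $\mathcal H_B$. Do the same for Charlie, giving $\mathcal H_C$ and $\Pi^C_{c|z}$.

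\textbf{Step 3: assemble.} Define $V=(J_B\otimes J_C\otimes\id_R)V_0$, where $J_B:\mathcal H_B'\to\mathcal H_B$, $\ket\phi\mapsto\ket\phi\ket0\ket0$, and similarly for $J_C$. A composition of isometries is an isometry. Each compression reproduces its POVM element, $J_B^\dagger\Pi^B_{b|y}J_B=B_{b|y}$ and $J_C^\dagger\Pi^C_{c|z}J_C=C_{c|z}$. Because $\Pi^B$ acts on the $B$ factor and $\Pi^C$ on the $C$ factor, the compression of the product factorizes, and Eq.~\eqref{rand:eq:dilrep} follows from Step 1. The reflections $B_y=\Pi^B_{0|y}-\Pi^B_{1|y}$ satisfy $B_y^2=\Pi^B_{0|y}+\Pi^B_{1|y}=\id$, since the two projectors are orthogonal and complete; the same holds for $C_z$. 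Since they act on distinct tensor factors, $[B_y,C_z]=0$, which is Eq.~\eqref{rand:eq:algebra}.

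\textbf{Subtle point.} Bob's two settings must not be dilated with the same ancilla. The Naimark unitaries for $y=0$ and $y=1$ differ, so a shared ancilla would spoil the single-setting compression identity. Giving each setting its own ancilla, with no requirement that $\Pi^B_{b|0}$ and $\Pi^B_{b|1}$ commute, avoids this. Commutativity is needed only across receivers, and the tensor structure supplies it.

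Finally, a mixed $\rho_x$ can stay mixed, since the statement allows qubit states $\rho_x$. Purifying it is unnecessary here, and would in any case break the two-dimensional input.
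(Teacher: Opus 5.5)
Your proof is correct and follows essentially the same route as the paper: a Stinespring dilation of the broadcast channel, a Naimark dilation on each receiver, and assembly through $V=(J_B\otimes J_C\otimes\id_R)V_0$, so every probability is reproduced exactly, and with it the witness value and the generation distribution. However, your ``subtle point'' is wrong: the paper uses a \emph{single} ancilla qubit shared by both of Bob's settings (block projectors $\hat\Pi_{0|y}$ built from $E_{0|y}$ and $\sqrt{E_{0|y}(\id-E_{0|y})}$, compressed by the same $J_B=\id\otimes|0\rangle$), and this works because the compression identity for setting $y$ involves only that setting's own projector (or your $W_y$), never both settings at once, so your extra per-setting ancilla is harmless but unnecessary.
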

 
\begin{proof}
By the Stinespring theorem \cite{Stinespring1955}, the broadcast channel dilates as $\Lambda(\rho)=\tr_E[V_0\rho V_0^\dagger]$ with an isometry $V_0:\mathbb{C}^2\to\mathcal{H}_B'\otimes\mathcal{H}_C'\otimes\mathcal{H}_E$ and $\dim\mathcal{H}_E$ finite, since the Kraus rank of a channel between finite-dimensional spaces is finite. For Bob's binary POVMs $\{E_{0|y},\id-E_{0|y}\}$ we use a single qubit ancilla, common to both settings, following Naimark \cite{Naimark1940}: on $\mathcal{H}_B'\otimes\mathbb{C}^2$, in $2\times2$ block form,
\begin{equation}\label{rand:eq:naimark}
\hat\Pi_{0|y}=\begin{pmatrix} E_{0|y} & D_y\\ D_y & \id-E_{0|y}\end{pmatrix},\quad
D_y=\sqrt{E_{0|y}(\id-E_{0|y})},
\end{equation}
and $\hat\Pi_{1|y}=\id-\hat\Pi_{0|y}$. Using $[E_{0|y},D_y]=0$ and $D_y^2=E_{0|y}(\id-E_{0|y})$, the diagonal blocks of $\hat\Pi_{0|y}^2$ are $E_{0|y}^2+D_y^2=E_{0|y}$ and its complement, and the off-diagonal blocks are $E_{0|y}D_y+D_y(\id-E_{0|y})=D_y$, so $\hat\Pi_{0|y}$ is a projector; moreover $J_B^\dagger\hat\Pi_{0|y}J_B=E_{0|y}$ for the embedding $J_B=\id\otimes|0\rangle$, which is the same for both settings and all inputs. The analogous construction for Charlie uses $J_C=\id\otimes|0\rangle$ on $\mathcal{H}_C'\otimes\mathbb{C}^2$. Setting $\mathcal{H}_B=\mathcal{H}_B'\otimes\mathbb{C}^2$, $\mathcal{H}_C=\mathcal{H}_C'\otimes\mathbb{C}^2$, $\mathcal{H}_R=\mathcal{H}_E$, and $V=(J_B\otimes J_C\otimes\id_E)V_0$, every joint probability is reproduced exactly. The dilated reflections act on different tensor factors, so \eqref{rand:eq:algebra} holds.
\end{proof}
 
From now on we work in the representation \eqref{rand:eq:dilrep}. We write $|\phi_x\rangle=V|\psi_x\rangle$ when $\rho_x=|\psi_x\rangle\langle\psi_x|$ is pure, $\avg{O}{x}=\tr[V\rho_xV^\dagger O]$ in general, and $\mathcal{S}=V(\mathbb{C}^2)$ for the two-dimensional code subspace, which contains every $|\phi_x\rangle$. Since the measurements are projective,
\begin{equation}\label{rand:eq:probform}
\begin{aligned}
p(b,c|x,y,z)=\tfrac14\big[1&+(-1)^b\avg{B_y}{x}+(-1)^c\avg{C_z}{x}\\
&+(-1)^{b+c}\avg{B_yC_z}{x}\big].
\end{aligned}
\end{equation}
 
\subsection{Block bounds, attainability, and saturation}
 
\begin{lemma}[bounds, attainability, saturation]\label{rand:lem:bound}
In the representation \eqref{rand:eq:dilrep} with receivers of arbitrary finite dimension, the three blocks obey $W^{[0]}\le4$, $W^{[1]}\le4$, and $W^{[2]}\le2\sqrt2$, so that $W_{\mathcal{PAB}}\le\WQ$, and the value $\WQ$ is attained by an explicit qubit strategy for which $p(b,c|s^*)=\tfrac14$. Moreover, if the preparations are pure and $W_{\mathcal{PAB}}=\WQ$, then
\begin{align}
B_0C_0|\phi_0\rangle&=|\phi_0\rangle, & B_1C_1|\phi_0\rangle&=|\phi_0\rangle, \label{rand:eq:S1}\\
B_0C_1|\phi_1\rangle&=|\phi_1\rangle, & B_1C_0|\phi_1\rangle&=-|\phi_1\rangle, \label{rand:eq:S2}\\
\{B_0,B_1\}|\phi_2\rangle&=0, & \{C_0,C_1\}|\phi_2\rangle&=0. \label{rand:eq:S3}
\end{align}
\end{lemma}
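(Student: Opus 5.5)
The plan is to prove the three block bounds with operator inequalities in the dilated representation of Lemma~\ref{rand:lem:dilation}. The saturation relations then come from the equality cases of the same inequalities, and attainability is settled by an explicit two-qubit construction.

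\textbf{Bounds for $x=0,1$.} By \eqref{rand:eq:algebra}, each product $B_yC_z$ is Hermitian and squares to $\id$. Hence $-\id\le B_yC_z\le\id$, which gives $M_0\le4\,\id$ and $M_1\le4\,\id$. These give $W^{[0]},W^{[1]}\le4$ for every (possibly mixed) dilated preparation, in arbitrary finite dimension.

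\textbf{Bound for $x=2$.} $\B=-B_0(C_0+C_1)+B_1(C_0-C_1)$ is a relabelled CHSH operator. I will use the sum-of-squares identity
\begin{equation*}
2\sqrt2\,\id-\B=\tfrac{1}{\sqrt2}\Big[\big(B_0+\tfrac{C_0+C_1}{\sqrt2}\big)^2+\big(B_1-\tfrac{C_0-C_1}{\sqrt2}\big)^2\Big].
\end{equation*}
It follows from $B_y^2=\id$, from $(C_0+C_1)^2+(C_0-C_1)^2=4\,\id$, and from the commutation $[B_y,C_z]=0$, which lets the cross terms combine into $-\sqrt2\,\B$. Summing the three blocks gives $W_{\mathcal{PAB}}\le 8+2\sqrt2=\WQ$. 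It also shows that $\delta_x\ge0$.

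\textbf{Saturation.} If $W_{\mathcal{PAB}}=\WQ$, then every deficit $\delta_x$ vanishes, since each is nonnegative. For $x=0$ with a pure preparation, $\langle\phi_0|B_0C_0|\phi_0\rangle+\langle\phi_0|B_1C_1|\phi_0\rangle=2$ forces each term to equal $1$. Because $(\id-B_yC_z)/2$ is a projector, $\langle(\id-B_0C_0)/2\rangle=0$ gives $B_0C_0|\phi_0\rangle=|\phi_0\rangle$, and likewise for $B_1C_1$. This is \eqref{rand:eq:S1}; \eqref{rand:eq:S2} follows identically, using $(\id+B_1C_0)/2$ for the minus sign.

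For $x=2$, the vanishing of the sum of squares gives
\[
B_0|\phi_2\rangle=-\tfrac{C_0+C_1}{\sqrt2}|\phi_2\rangle,\qquad B_1|\phi_2\rangle=\tfrac{C_0-C_1}{\sqrt2}|\phi_2\rangle .
\]
Applying $B_0$ after $B_1$, and using commutation to move the $B$'s onto $|\phi_2\rangle$, gives
\[
\{B_0,B_1\}|\phi_2\rangle=-\tfrac12\big[(C_0-C_1)(C_0+C_1)+(C_0+C_1)(C_0-C_1)\big]|\phi_2\rangle=-(C_0^2-C_1^2)|\phi_2\rangle=0 .
\]
For the Charlie relation I will use the mirror identity obtained from $\B=-C_0(B_0-B_1)-C_1(B_0+B_1)$. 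That identity yields $C_0|\phi_2\rangle=-\tfrac{B_0-B_1}{\sqrt2}|\phi_2\rangle$ and $C_1|\phi_2\rangle=-\tfrac{B_0+B_1}{\sqrt2}|\phi_2\rangle$, and the same computation gives $\{C_0,C_1\}|\phi_2\rangle=0$.

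\textbf{Attainability.} I would take the isometry onto two qubits with $B_0=C_0=Z$ and $B_1=C_1=X$. Then $|\phi_0\rangle=|\Phi^+\rangle$ is stabilized by $ZZ$ and $XX$, and $|\phi_1\rangle$ is the maximally entangled joint eigenvector with $Z\otimes X=+1$ and $X\otimes Z=-1$. The two-dimensional code subspace is $\mathcal S=\mathrm{span}\{|\phi_0\rangle,|\phi_1\rangle\}$, and $|\phi_2\rangle$ is taken as the top eigenvector of $\B$ compressed to $\mathcal S$.

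The main obstacle is that this compression must actually reach $2\sqrt2$. If it does not, I will rotate Charlie's (or Bob's) bases by a local unitary and re-choose $|\phi_1\rangle$ accordingly, until a $2\sqrt2$-eigenvector of $\B$ lies in the span of the two stabilized states. In practice I would read the strategy off the numerical optimizer and then verify it symbolically.

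Uniformity at $s^*$ then follows from \eqref{rand:eq:probform}. Since $|\phi_1\rangle$ is maximally entangled, all single-party marginals vanish. In addition, $\langle Z\otimes Z\rangle^{[1]}=0$ because $Z\otimes Z$ anticommutes with the stabilizer $Z\otimes X$. Together these give $p(b,c|s^*)=\tfrac14$.
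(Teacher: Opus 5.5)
Your block bounds and saturation relations are correct and follow the paper closely. Your sum of squares is the paper's Eq.~\eqref{rand:eq:SOS} with $A_0=-B_0$; the paper gets the bound itself from $\B^2=4\id-[A_0,A_1][C_0,C_1]$ and uses the sum of squares only for the equality case. For $\{C_0,C_1\}|\phi_2\rangle=0$ you use a mirrored sum of squares, whereas the paper applies $A_0$ twice to obtain $u^2|\phi_2\rangle=|\phi_2\rangle$; both are valid.

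The gap is in attainability, which is part of the statement. You correctly reduce it to whether the compression of $\B$ to $\mathcal S=\mathrm{span}\{|\phi_0\rangle,|\phi_1\rangle\}$ reaches $2\sqrt2$, but you never establish this, and your fallback cannot settle it. On two qubits, $\|\B\|=2\sqrt2$ forces both local pairs to anticommute, and a local unitary carries every such pair to $(Z,X)$. Then $|\phi_0\rangle$, $|\phi_1\rangle$ and the (one-dimensional) $2\sqrt2$-eigenspace of $\B$ are unique up to phase. Rotating bases and re-choosing $|\phi_1\rangle$ therefore only reproduces a locally equivalent configuration with the same yes/no answer. Reading the strategy off a numerical optimizer is not a proof either.

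The missing step is the explicit check in Eq.~\eqref{rand:eq:eigvec}. With $|\phi_1\rangle=\tfrac12(|00\rangle+|01\rangle-|10\rangle+|11\rangle)$, which is exactly your stabilized state, one has $\B(|\phi_0\rangle-|\phi_1\rangle)=2\sqrt2(|\phi_0\rangle-|\phi_1\rangle)$.

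You can also close it structurally within your own framework:
\begin{enumerate}
\item $Y\otimes Y$ commutes with $\B$, since every term $P\otimes Q$ with $P,Q\in\{X,Z\}$ anticommutes with $Y$ on both factors.
\item $|\phi_0\rangle$ is stabilized by $(Z\otimes Z)(X\otimes X)=-Y\otimes Y$, and $|\phi_1\rangle$ by $-(Z\otimes X)(X\otimes Z)=-Y\otimes Y$. Both therefore lie in the two-dimensional $-1$ eigenspace of $Y\otimes Y$.
\item They span that eigenspace, because $\langle\phi_0|\phi_1\rangle=1/\sqrt2$ makes them linearly independent.
\item With $A_0=-Z$, $A_1=X$, $C_0=Z$, $C_1=X$, one gets $\B^2=4\id-[A_0,A_1][C_0,C_1]=4\id-4\,Y\otimes Y$. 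So the restriction of $\B$ to that eigenspace squares to $8\id$, and it is traceless.
\item Its eigenvalues there are therefore $\pm2\sqrt2$, so the $+2\sqrt2$ eigenvector lies in $\mathcal S$.
\end{enumerate}

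With this inserted, $|\phi_2\rangle$ is well defined and $W_{\mathcal{PAB}}=\WQ$ holds. Your stabilizer argument for $p(b,c|s^*)=\tfrac14$ (vanishing marginals of the maximally entangled $|\phi_1\rangle$, and $Z\otimes Z$ anticommuting with $Z\otimes X$) then completes the lemma.
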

 
\begin{proof}
We first establish the three block bounds. Each $B_yC_z$ is a product of two commuting Hermitian involutions, hence a Hermitian involution of unit norm, and the triangle inequality gives $\|B_0C_0+B_1C_1\|\le2$ and $\|B_0C_1-B_1C_0\|\le2$, so $W^{[0]},W^{[1]}\le4$. For the third block put $A_0=-B_0$ and $A_1=B_1$, so that $\B=A_0(C_0+C_1)+A_1(C_0-C_1)$. Using $A_i^2=C_z^2=\id$, $[A_i,C_z]=0$, $(C_0+C_1)^2+(C_0-C_1)^2=4\,\id$, and $(C_0\pm C_1)(C_0\mp C_1)=\mp[C_0,C_1]$,
\begin{equation}\label{rand:eq:Bsquared}
\B^2 = 4\,\id-[A_0,A_1]\,[C_0,C_1].
\end{equation}
Since both commutators have norm at most $2$, $\|\B\|\le2\sqrt2$ and $W^{[2]}\le2\sqrt2$. The three blocks involve three different preparations, so each expectation is bounded separately and $W_{\mathcal{PAB}}\le8+2\sqrt2$.
 
The bound is attained on $\mathbb{C}^2\otimes\mathbb{C}^2$ by $B_0=C_0=Z$, $B_1=C_1=X$ (acting on the respective factors) and
\begin{equation}\label{rand:eq:strategy}
\begin{aligned}
|\phi_0\rangle&=\tfrac{1}{\sqrt2}\big(|00\rangle+|11\rangle\big),\\
|\phi_1\rangle&=\tfrac12\big(|00\rangle+|01\rangle-|10\rangle+|11\rangle\big),\\
|\phi_2\rangle&=\big(|\phi_0\rangle-|\phi_1\rangle\big)\big/\sqrt{2-\sqrt2}.
\end{aligned}
\end{equation}
Relations \eqref{rand:eq:S1}--\eqref{rand:eq:S2} hold by inspection, $\langle\phi_0|\phi_1\rangle=1/\sqrt2$ makes $|\phi_2\rangle$ a unit vector, and a direct evaluation in the computational basis gives the exact eigenvector relation
\begin{equation}\label{rand:eq:eigvec}
\begin{split}
\B\,\big(|\phi_0\rangle-|\phi_1\rangle\big)
&=\big(2-\sqrt2,\,-\sqrt2,\,\sqrt2,\,2-\sqrt2\big)^{\!\top}\\
&=2\sqrt2\,\big(|\phi_0\rangle-|\phi_1\rangle\big),
\end{split}
\end{equation}
so $W^{[2]}=2\sqrt2$ and $W_{\mathcal{PAB}}=\WQ$. The three vectors span the two-dimensional $\mathcal{S}=\operatorname{span}\{|\phi_0\rangle,|\phi_1\rangle\}$, so any isometry $V$ with range $\mathcal{S}$ and $|\psi_x\rangle=V^\dagger|\phi_x\rangle$ realizes a legitimate $d=2$ PAB strategy at the endpoint. At $s^*$ all three moments $\avg{B_0}{1}$, $\avg{C_0}{1}$, $\avg{B_0C_0}{1}$ vanish, so $p(b,c|s^*)=\tfrac14$ by \eqref{rand:eq:probform}, and the feasible set of \eqref{eq:guessing-general-framework} at $w=\WQ$ is nonempty.
 
Finally, we derive the saturation relations. Because the blocks are bounded independently, $W_{\mathcal{PAB}}=\WQ$ forces $W^{[0]}=W^{[1]}=4$ and $W^{[2]}=2\sqrt2$. Then $\avg{B_0C_0}{0}+\avg{B_1C_1}{0}=2$ with each term in $[-1,1]$ forces both to equal $1$; for a Hermitian involution $M$ and a unit vector with $\langle M\rangle=1$, the spectral split $M=M_+-M_-$ gives $\|M_-\phi\|=0$ and $M|\phi\rangle=|\phi\rangle$, which proves \eqref{rand:eq:S1}, and the signs in \eqref{eq:CC2} give \eqref{rand:eq:S2}. For \eqref{rand:eq:S3}, set $u=\tfrac{1}{\sqrt2}(C_0+C_1)$ and $v=\tfrac{1}{\sqrt2}(C_0-C_1)$ and verify the operator identity
\begin{equation}\label{rand:eq:SOS}
2\sqrt2\,\id-\B=\tfrac{1}{\sqrt2}\,(A_0-u)^2+\tfrac{1}{\sqrt2}\,(A_1-v)^2,
\end{equation}
which follows from $(A_i-w)^2=\id+w^2-2A_iw$ for $w\in\{u,v\}$, $u^2+v^2=2\,\id$, and $A_0u+A_1v=\tfrac{1}{\sqrt2}\B$. At $W^{[2]}=2\sqrt2$ the left side has zero expectation on $|\phi_2\rangle$ and both summands are positive semidefinite, so $A_0|\phi_2\rangle=u|\phi_2\rangle$ and $A_1|\phi_2\rangle=v|\phi_2\rangle$. Applying $A_0$ once more and using $u^2=\id+\tfrac12\{C_0,C_1\}$ gives $\{C_0,C_1\}|\phi_2\rangle=0$; and $\{A_0,A_1\}|\phi_2\rangle=\{u,v\}|\phi_2\rangle=\tfrac12\big(-[C_0,C_1]+[C_0,C_1]\big)|\phi_2\rangle=0$, so $\{B_0,B_1\}|\phi_2\rangle=0$.
\end{proof}
 
\subsection{Generation moments and the guessing bound}
 
The distribution at $s^*$ involves the three moments $\avg{B_0}{1}$, $\avg{C_0}{1}$, and $\avg{B_0C_0}{1}$, none of which appears in $W_{\mathcal{PAB}}$. The next lemma ties all three to two anticommutator defects, so that their simultaneous vanishing forces a uniform generation distribution. Let
\begin{equation}\label{rand:eq:defects}
\begin{aligned}
\Delta_B&=\|\{B_0,B_1\}\phi_1\|, & \Delta_C&=\|\{C_0,C_1\}\phi_1\|,\\
e_1&=\|(B_0C_1-\id)\phi_1\|, & e_1'&=\|(B_1C_0+\id)\phi_1\|.
\end{aligned}
\end{equation}
 
\begin{lemma}[moment reduction]\label{rand:lem:moments}
For any pure generation state $|\phi_1\rangle$,
\begin{equation}\label{rand:eq:momentbounds}
\begin{aligned}
\big|\avg{B_0}{1}\big|&\le \tfrac{\Delta_B}{2}+e_1', &
\big|\avg{C_0}{1}\big|&\le \tfrac{\Delta_C}{2}+e_1,\\
\big|\avg{B_0C_0}{1}\big|&\le \tfrac{\Delta_B}{2}+e_1', &&
\end{aligned}
\end{equation}
and consequently
\begin{equation}\label{rand:eq:Gbound}
\max_{b,c}\,p(b,c|s^*)\ \le\ \tfrac14+\tfrac18\,(2\Delta_B+\Delta_C)+\tfrac14\,(e_1+2e_1').
\end{equation}
In particular, if \eqref{rand:eq:S2} holds and $\Delta_B=\Delta_C=0$, then $p(b,c|s^*)=\tfrac14$ for every $(b,c)$.
\end{lemma}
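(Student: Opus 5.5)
The plan is to bound each of the three generation moments by one ``sandwich'' trick. Use the near-stabilizer relations of Eq.~\eqref{rand:eq:S2}, measured by $e_1$ and $e_1'$, to insert an extra operator on either side of the generation state $|\phi_1\rangle$. Then average the two insertions so that the anticommutators $\{B_0,B_1\}$ or $\{C_0,C_1\}$ appear. Throughout, only Eq.~\eqref{rand:eq:algebra} is used: every $B_y$, $C_z$ is a unitary Hermitian involution and $[B_y,C_z]=0$. Consequently $\{B_0,B_1\}$ is Hermitian and commutes with $C_0$, and any product of the observables has unit operator norm. This lets us control each error term by Cauchy--Schwarz as $|\langle\phi_1|X(\id+B_1C_0)|\phi_1\rangle|\le\|X\|\,e_1'$ with $\|X\|\le1$.

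For $\avg{B_0}{1}$ I would write two expressions. First, $\langle\phi_1|B_0|\phi_1\rangle=-\langle\phi_1|B_0B_1C_0|\phi_1\rangle+\langle\phi_1|B_0(\id+B_1C_0)|\phi_1\rangle$. Second, taking the adjoint form on the left, $\langle\phi_1|B_0|\phi_1\rangle=-\langle\phi_1|C_0B_1B_0|\phi_1\rangle+\langle\phi_1|(\id+C_0B_1)B_0|\phi_1\rangle$, where $\|(\id+C_0B_1)\phi_1\|=\|(\id+B_1C_0)\phi_1\|=e_1'$ since the two operators commute. Averaging the two expressions and using $[B_y,C_0]=0$ leaves
\begin{equation*}
-\tfrac12\langle\phi_1|C_0\{B_0,B_1\}|\phi_1\rangle ,
\end{equation*}
whose modulus is at most $\tfrac12\Delta_B$. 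The error is at most $e_1'$, which gives the first bound.

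The same pattern handles the other two moments.
\begin{itemize}
\item For $\avg{C_0}{1}$, insert $B_0C_1\approx\id$ (error $e_1$) on the right and on the left. Averaging yields $\tfrac12\langle\phi_1|B_0\{C_0,C_1\}|\phi_1\rangle$, so the bound is $\tfrac12\Delta_C+e_1$.
\item For $\avg{B_0C_0}{1}$, insert $-B_1C_0$ (error $e_1'$) on each side. This gives $-\langle\phi_1|B_0B_1|\phi_1\rangle$ and $-\langle\phi_1|B_1B_0|\phi_1\rangle$ respectively, using $C_0^2=\id$ and commutativity. Their average is $-\tfrac12\langle\{B_0,B_1\}\rangle_1$, so the bound is $\tfrac12\Delta_B+e_1'$.
\end{itemize}

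The guessing bound \eqref{rand:eq:Gbound} then follows from the projective form \eqref{rand:eq:probform} and the triangle inequality:
\begin{equation*}
p(b,c|s^*)\le\tfrac14+\tfrac14\big(|\avg{B_0}{1}|+|\avg{C_0}{1}|+|\avg{B_0C_0}{1}|\big).
\end{equation*}
Substituting the three moment bounds reproduces the stated coefficients, namely $\tfrac18(2\Delta_B+\Delta_C)+\tfrac14(e_1+2e_1')$. Finally, \eqref{rand:eq:S2} means precisely $e_1=e_1'=0$. Together with $\Delta_B=\Delta_C=0$, all three moments vanish, and \eqref{rand:eq:probform} gives exactly $\tfrac14$ for every $(b,c)$.

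The main point requiring care is the left-sided insertion. Its error must be expressed through the same vector norms $e_1,e_1'$, which needs the adjoint identity $\langle\phi_1|(\id+C_0B_1)=\big((\id+B_1C_0)|\phi_1\rangle\big)^\dagger$. In addition, the operator moved to the side during averaging must commute with the anticommutator, so that the remaining term is bounded by $\|\{B_0,B_1\}\phi_1\|$ rather than by a norm of an unrelated vector.
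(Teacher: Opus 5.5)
Your proof is correct and is essentially the paper's argument. Both use the near-stabilizer relations of \eqref{rand:eq:S2}, quantified by $e_1,e_1'$, together with $[B_y,C_z]=0$, to turn each generation moment into half an anticommutator expectation plus an error $e_1$ or $e_1'$; the paper does this with a two-sided conjugation (the identities \eqref{rand:eq:conjugation} and the stability estimate \eqref{rand:eq:stability}), whereas you insert on each side separately and average, which produces the anticommutator directly and gives the same constants.
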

 
\begin{proof}
$R=B_1C_0$ and $S=B_0C_1$ are Hermitian involutions, and the commutation relations \eqref{rand:eq:algebra} give the exact operator identities
\begin{equation}\label{rand:eq:conjugation}
\begin{aligned}
R\,B_0\,R&=\{B_0,B_1\}B_1-B_0,\\
S\,C_0\,S&=\{C_0,C_1\}C_1-C_0,\\
R\,(B_0C_0)\,R&=\big(\{B_0,B_1\}B_1-B_0\big)C_0 .
\end{aligned}
\end{equation}
For a Hermitian involution $R$ with $\|(R-\sigma\id)\psi\|\le e$, $\sigma=\pm1$, and any bounded $X$, writing $r=R\psi$ and $q=\sigma\psi$ so that $\|r-q\|\le e$,
\begin{equation}\label{rand:eq:stability}
\begin{split}
\big|\langle\psi|RXR|\psi\rangle-\langle\psi|X|\psi\rangle\big|
&=\big|\langle r-q|X|r\rangle+\langle q|X|r-q\rangle\big|\\
&\le 2e\,\|X\|.
\end{split}
\end{equation}
Taking $R=B_1C_0$ with $\sigma=-1$ and $X=B_0$, the identity \eqref{rand:eq:conjugation} gives $|2\avg{B_0}{1}-\langle\phi_1|\{B_0,B_1\}B_1|\phi_1\rangle|\le 2e_1'$, and Cauchy--Schwarz with $\|B_1\phi_1\|=1$ bounds the anticommutator term by $\Delta_B$, so $|\avg{B_0}{1}|\le\Delta_B/2+e_1'$. The same $R$ with $X=B_0C_0$, together with $\|B_1C_0\phi_1\|=1$, gives the third bound; and $S=B_0C_1$ with $\sigma=+1$, $X=C_0$ gives the second. Inserting \eqref{rand:eq:momentbounds} into \eqref{rand:eq:probform} and maximizing each sign yields \eqref{rand:eq:Gbound}. The bound \eqref{rand:eq:Gbound} is an upper estimate and is not claimed attainable, since the three signs $(-1)^b$, $(-1)^c$, $(-1)^{b+c}$ cannot be chosen independently. When $\Delta_B=\Delta_C=e_1=e_1'=0$ all three moments vanish and \eqref{rand:eq:probform} is exactly uniform.
\end{proof}
 
\subsection{Exact rigidity at the endpoint}
 
\begin{proposition}[endpoint rigidity]\label{rand:prop:rigidity}
In the representation \eqref{rand:eq:dilrep} with pure preparations, $W_{\mathcal{PAB}}=\WQ$ implies $P_{\mathrm{bad}}|\phi_0\rangle=P_{\mathrm{bad}}|\phi_1\rangle=0$, where $P_{\mathrm{bad}}$ projects onto the joint Jordan blocks on which the two local anticommutators do not both vanish. In particular $\Delta_B=\Delta_C=0$ and $p(b,c|s^*)=\tfrac14$.
\end{proposition}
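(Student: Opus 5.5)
The plan is to use Jordan's lemma to reduce the rigidity statement to a finite-dimensional linear-algebra argument inside the code subspace $\mathcal S$. First, apply Jordan's lemma to the pair of Hermitian involutions $B_0,B_1$ on $\mathcal H_B$. This splits $\mathcal H_B$ into invariant blocks of dimension one or two. On a two-dimensional block $j$, $\{B_0,B_1\}=2\cos\theta_j\,\id$. On a one-dimensional block, $B_0,B_1=\pm1$, so $\{B_0,B_1\}=\pm2\,\id$.

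Do the same for $C_0,C_1$ on $\mathcal H_C$, and take tensor products of the block projectors with $\id_R$ to get the joint Jordan blocks. Let $P_{\mathrm{good}}$ be the sum of the joint blocks on which $\cos\theta^B_j=\cos\theta^C_k=0$, and set $P_{\mathrm{bad}}=\id-P_{\mathrm{good}}$. Two structural facts follow.
\begin{enumerate}
\item The block projectors are spectral projectors of the central elements $\{B_0,B_1\}$ and $\{C_0,C_1\}$. So $P_{\mathrm{bad}}$ commutes with every $B_y$, every $C_z$, and hence every product $B_yC_z$.
\item Both anticommutators act as scalars on each joint block. So the joint kernel of $\{B_0,B_1\}$ and $\{C_0,C_1\}$ is exactly $\operatorname{ran}P_{\mathrm{good}}$.
\end{enumerate}
By \eqref{rand:eq:S3} of Lemma~\ref{rand:lem:bound}, the second fact gives $P_{\mathrm{bad}}|\phi_2\rangle=0$. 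By the first fact, $P_{\mathrm{bad}}|\phi_0\rangle$ still satisfies \eqref{rand:eq:S1} and $P_{\mathrm{bad}}|\phi_1\rangle$ still satisfies \eqref{rand:eq:S2}.

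The key algebraic observation is that no nonzero vector $v$ satisfies both sets of relations \eqref{rand:eq:S1} and \eqref{rand:eq:S2}. Indeed, $B_0C_0v=v$ and $B_0C_1v=v$ give $C_0v=B_0v=C_1v$. On the other hand, $B_1C_1v=v$ and $B_1C_0v=-v$ give $C_1v=B_1v=-C_0v$. Together these force $C_0v=0$, hence $v=0$. Two consequences follow.
\begin{enumerate}
\item $|\phi_0\rangle$ and $|\phi_1\rangle$ are linearly independent, so $\mathcal S=\operatorname{span}\{|\phi_0\rangle,|\phi_1\rangle\}$. This is where the qubit-message (rank-two) constraint enters: $|\phi_2\rangle=\alpha|\phi_0\rangle+\beta|\phi_1\rangle$.
\item Applying $P_{\mathrm{bad}}$ gives $\alpha P_{\mathrm{bad}}|\phi_0\rangle=-\beta P_{\mathrm{bad}}|\phi_1\rangle$. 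If $\alpha\beta\neq0$, then $P_{\mathrm{bad}}|\phi_0\rangle$ is a multiple of $P_{\mathrm{bad}}|\phi_1\rangle$, so it satisfies both \eqref{rand:eq:S1} and \eqref{rand:eq:S2}. It therefore vanishes, and so does $P_{\mathrm{bad}}|\phi_1\rangle$.
\end{enumerate}

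The degenerate cases are where I expect the main care to be needed, and I would exclude them using the saturation $W^{[2]}=2\sqrt2$.
\begin{itemize}
\item If $\beta=0$, then $|\phi_2\rangle\propto|\phi_0\rangle$. On $|\phi_0\rangle$, \eqref{rand:eq:S1} gives $\langle\B\rangle=-2-\langle B_0C_1\rangle+\langle B_1C_0\rangle\le0<2\sqrt2$, a contradiction.
\item If $\alpha=0$, then $|\phi_2\rangle\propto|\phi_1\rangle$. By \eqref{rand:eq:S2}, $\langle\B\rangle=-\langle B_0C_0\rangle-1-1-\langle B_1C_1\rangle\le0$, again a contradiction.
\end{itemize}
Hence $P_{\mathrm{bad}}|\phi_0\rangle=P_{\mathrm{bad}}|\phi_1\rangle=0$. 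I would double-check that global phases and normalization do not affect this case split.

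Finally, $|\phi_1\rangle\in\operatorname{ran}P_{\mathrm{good}}$, so both anticommutators annihilate it, giving $\Delta_B=\Delta_C=0$. Relations \eqref{rand:eq:S2} give $e_1=e_1'=0$. Lemma~\ref{rand:lem:moments} then yields $p(b,c|s^*)=\tfrac14$ for all $(b,c)$. The subtle point I would verify most carefully is fact 1: that the Jordan block projectors genuinely commute with all $B_y$, $C_z$ in the dilated tensor-product representation of Lemma~\ref{rand:lem:dilation}, including the trivial one-dimensional blocks, which must be classified as bad.
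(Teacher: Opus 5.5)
Your proof is correct, and its skeleton matches the paper's. Both arguments use good and bad joint Jordan blocks. Both get linear independence of $|\phi_0\rangle,|\phi_1\rangle$ from the incompatibility of \eqref{rand:eq:S1} and \eqref{rand:eq:S2}. Both exclude $\alpha=0$ and $\beta=0$ via $\langle\B\rangle\le0$ on $|\phi_0\rangle$ and on $|\phi_1\rangle$, and both finish with the rank-two transfer through $P_{\mathrm{bad}}$.

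The one step you do differently is showing $P_{\mathrm{bad}}|\phi_2\rangle=0$. The paper computes $\|\B_{jk}\|=2\sqrt{1+|\sin\theta_j\sin\vartheta_k|}$ on each joint block and then runs an equality-in-convexity argument on $\avg{\B}{2}=2\sqrt2$. You instead reuse the anticommutator relations \eqref{rand:eq:S3}, already obtained from the sum-of-squares identity in Lemma~\ref{rand:lem:bound}. You then note that $\{B_0,B_1\}$ and $\{C_0,C_1\}$ act as scalars on each joint block, so their joint kernel is exactly $\operatorname{ran}P_{\mathrm{good}}=\ker\{B_0,B_1\}\otimes\ker\{C_0,C_1\}\otimes\mathcal H_R$, with kernels taken on the local factors.

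Your route is shorter and makes $P_{\mathrm{good}}$ intrinsic. It is the product of the kernel projectors of the two anticommutators, and these commute with $B_0,B_1$ (resp.\ $C_0,C_1$) because the anticommutators are central in the local algebras. So it does not depend on the non-unique Jordan decomposition, and Jordan's lemma serves only to identify it with the projector named in the statement. The paper's route, in exchange, characterizes the good blocks as exactly those on which $\B$ can reach $2\sqrt2$. It also fixes the block bases that are reused in Lemma~\ref{rand:lem:decoupling} and Corollary~\ref{rand:cor:local}.

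One minor imprecision: individual Jordan block projectors need not be spectral projectors of the anticommutators, since distinct blocks can share an angle. This is harmless, because each block is invariant under its local pair of reflections and $P_{\mathrm{good}}$ itself is spectral.
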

 
\begin{proof}
Jordan's lemma \cite{Jordan1875,Masanes2006} applied to the pair of reflections $(B_0,B_1)$ on $\mathcal{H}_B$ yields an orthogonal decomposition into invariant subspaces of dimension at most two. On each two-dimensional block a basis can be chosen in which $B_0=Z$ and $B_1=\cos\theta_j\,Z+\sin\theta_j\,X$ with $\sin\theta_j\neq0$; blocks on which $B_0$ and $B_1$ commute split further into one-dimensional blocks, on which each observable is a sign $\pm1$. The same holds for $(C_0,C_1)$ on $\mathcal{H}_C$ with angles $\vartheta_k$. The joint blocks are $\mathcal{H}^B_j\otimes\mathcal{H}^C_k\otimes\mathcal{H}_R$, and their projectors $P_{jk}$ commute with every $B_y$ and $C_z$, hence with $\B$. On a joint block with both factors two-dimensional, $[A_0,A_1]=-2i\sin\theta_j\,Y$ and $[C_0,C_1]=2i\sin\vartheta_k\,Y$, so \eqref{rand:eq:Bsquared} gives $\B_{jk}^2=4\,\id-4\sin\theta_j\sin\vartheta_k\,(Y\otimes Y)\otimes\id_R$ and
\begin{equation}\label{rand:eq:blocknorm}
\|\B_{jk}\|=2\sqrt{1+|\sin\theta_j\sin\vartheta_k|},
\end{equation}
which equals $2\sqrt2$ if and only if $|\sin\theta_j|=|\sin\vartheta_k|=1$. If either local block is one-dimensional the corresponding commutator vanishes on it and $\|\B_{jk}\|\le2$. We call a joint block \emph{good} when $\|\B_{jk}\|=2\sqrt2$; on good blocks $\{B_0,B_1\}=2\cos\theta_j\,\id=0$ and $\{C_0,C_1\}=0$, whereas on every other block, including those with a one-dimensional factor, at least one of the two anticommutators is nonzero. Let $P_{\mathrm{good}}$ be the sum of the good joint block projectors and $P_{\mathrm{bad}}=\id-P_{\mathrm{good}}$.
 
First, $|\phi_2\rangle$ is supported on good blocks. Decomposing $|\phi_2\rangle=\sum_{jk}|\phi_2^{jk}\rangle$ along the joint blocks,
\begin{equation}\label{rand:eq:convexity}
\begin{split}
2\sqrt2=\avg{\B}{2}&=\sum_{jk}\langle\phi_2^{jk}|\B_{jk}|\phi_2^{jk}\rangle\\
&\le\sum_{jk}\|\B_{jk}\|\,\|\phi_2^{jk}\|^2\le2\sqrt2,
\end{split}
\end{equation}
and equality forces $\|\phi_2^{jk}\|=0$ on every block with $\|\B_{jk}\|<2\sqrt2$, that is, $P_{\mathrm{bad}}|\phi_2\rangle=0$.
 
Second, $|\phi_0\rangle$ and $|\phi_1\rangle$ are linearly independent. Otherwise $|\phi_1\rangle$ would satisfy the four relations \eqref{rand:eq:S1}--\eqref{rand:eq:S2} simultaneously; multiplying the two relations that involve $B_0$ by $B_0$ gives $(C_0-C_1)|\phi_1\rangle=0$, multiplying the two that involve $B_1$ by $B_1$ gives $(C_0+C_1)|\phi_1\rangle=0$, and adding yields $C_0|\phi_1\rangle=0$, contradicting the unitarity of $C_0$. Hence $\{|\phi_0\rangle,|\phi_1\rangle\}$ is a basis of $\mathcal{S}$ and $|\phi_2\rangle=\alpha|\phi_0\rangle+\beta|\phi_1\rangle$.
 
Third, $\alpha\neq0$ and $\beta\neq0$. From \eqref{rand:eq:S2} and $|\avg{B_yC_z}{1}|\le1$, the expectation of $\B$ on $|\phi_1\rangle$ is $-\avg{B_0C_0}{1}-1-1-\avg{B_1C_1}{1}\le0$, and from \eqref{rand:eq:S1} its expectation on $|\phi_0\rangle$ is $-1-\avg{B_0C_1}{0}+\avg{B_1C_0}{0}-1\le0$; if $|\phi_2\rangle$ were proportional to either vector, then $\avg{\B}{2}\le0$, contradicting $\avg{\B}{2}=2\sqrt2$.
 
Finally, the rank-two transfer. Set $v_0=P_{\mathrm{bad}}|\phi_0\rangle$ and $v_1=P_{\mathrm{bad}}|\phi_1\rangle$; applying $P_{\mathrm{bad}}$ to $|\phi_2\rangle=\alpha|\phi_0\rangle+\beta|\phi_1\rangle$ gives $\alpha v_0+\beta v_1=0$, so $v_1=-(\alpha/\beta)\,v_0$ with a nonzero factor. Since $P_{\mathrm{bad}}$ commutes with every $B_yC_z$, projecting \eqref{rand:eq:S1}--\eqref{rand:eq:S2} yields $B_0C_0v_0=v_0$, $B_1C_1v_0=v_0$, $B_0C_1v_0=v_0$, and $B_1C_0v_0=-v_0$. These are the same four incompatible relations as above, so $C_0v_0=0$ and $v_0=v_1=0$. Then $P_{\mathrm{good}}|\phi_1\rangle=|\phi_1\rangle$, and since $\{B_0,B_1\}$ and $\{C_0,C_1\}$ vanish on the good subspace, $\Delta_B=\Delta_C=0$. Lemma~\ref{rand:lem:moments} concludes.
\end{proof}
 
\begin{remark}
In infinite dimension the block sum becomes a direct integral and the equality argument in \eqref{rand:eq:convexity} applies fiberwise, but the tensor form of the commuting algebras is then an assumption rather than a theorem. All results in this appendix are stated for finite-dimensional tensor receivers, as in Definition~\ref{rand:def:eve}; the block operator bound of Lemma~\ref{rand:lem:bound}, in contrast, uses only the algebra \eqref{rand:eq:algebra} and holds in any dimension.
\end{remark}
 
\begin{lemma}[mixed preparations at the endpoint]\label{rand:lem:mixed}
In the representation \eqref{rand:eq:dilrep} with preparations $\rho_x$ possibly mixed, $W_{\mathcal{PAB}}=\WQ$ implies $p(b,c|s^*)=\tfrac14$.
\end{lemma}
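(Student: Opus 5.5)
The plan is to purify the mixed preparations, land in the pure-state case, and then invoke Proposition~\ref{rand:prop:rigidity}. The obstacle is that a naive purification $|\psi_x\rangle\in\mathbb{C}^2\otimes\mathbb{C}^2$ would enlarge the message dimension and break the rank-two structure the rigidity argument relies on. I would therefore avoid purifying and instead use a spectral decomposition.

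Write $\rho_x=\sum_k \mu_{x,k}|\psi_{x,k}\rangle\langle\psi_{x,k}|$ with at most two terms per $x$, and set $|\phi_{x,k}\rangle=V|\psi_{x,k}\rangle\in\mathcal S$. Each block expectation is then a convex combination, $W^{[x]}=\sum_k\mu_{x,k}\langle\phi_{x,k}|M_x|\phi_{x,k}\rangle$. By Lemma~\ref{rand:lem:bound} each term is at most $\lambda_x$, since those bounds are operator-norm bounds. Hence $W_{\mathcal{PAB}}=\WQ$ forces every eigenvector of positive weight to saturate its own block. Consequently every such $|\phi_{0,k}\rangle$ obeys \eqref{rand:eq:S1}, every $|\phi_{1,k}\rangle$ obeys \eqref{rand:eq:S2}, and every $|\phi_{2,k}\rangle$ attains $\langle\B\rangle=2\sqrt2$.

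Next I rerun the proof of Proposition~\ref{rand:prop:rigidity} with one fixed choice: $|\phi_0\rangle:=|\phi_{0,k}\rangle$, $|\phi_2\rangle:=|\phi_{2,k}\rangle$ (any positive-weight components), and $|\phi_1\rangle:=|\phi_{1,k}\rangle$ with $k$ arbitrary. None of its steps used the fact that the three vectors come from a single PAB strategy; they only used the saturation relations and membership in the two-dimensional $\mathcal S$. So the linear-independence step shows $\phi_0,\phi_1$ span $\mathcal S$, the third step gives $\alpha,\beta\ne0$, the convexity step gives $P_{\mathrm{bad}}\phi_2=0$, and the rank-two transfer yields $P_{\mathrm{bad}}\phi_1=0$. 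Hence $\Delta_B=\Delta_C=0$ for that component, and together with \eqref{rand:eq:S2} Lemma~\ref{rand:lem:moments} makes its generation moments vanish. Averaging over $k$ with weights $\mu_{1,k}$ makes the three moments at $x=1$ vanish, so \eqref{rand:eq:probform} gives $p(b,c|s^*)=\tfrac14$.

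The point to verify carefully is the linear-independence step when $\rho_1$ has full rank. If both $\phi_{1,0}$ and $\phi_{1,1}$ satisfy \eqref{rand:eq:S2}, then all of $\mathcal S$ does, and in particular $\phi_{0}$ does. But $\phi_0$ also satisfies \eqref{rand:eq:S1}, which yields the same contradiction $C_0\phi_0=0$. So $\rho_1$ (and likewise $\rho_0$) is automatically pure at the endpoint, and $\rho_2$ may be taken component-wise. The pure-case argument then applies to each component, concluding the lemma.
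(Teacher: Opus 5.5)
Your proposal is correct and follows essentially the paper's argument. You decompose each $\rho_x$ into pure components, use the operator bounds $M_x\preceq\lambda_x\id$ to force every positive-weight component to saturate its own block, rerun the four steps of Proposition~\ref{rand:prop:rigidity} on an arbitrary triple of components, and average over the components of $\rho_1$. Your extra observation that $\rho_0$ and $\rho_1$ must in fact be pure at the endpoint is valid but not needed, since the linear-independence step already applies verbatim to any pair of components $\phi_{0,j},\phi_{1,k}$.
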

 
\begin{proof}
Let $K_x=V^\dagger M_x V$, so that $W^{[x]}=\tr(\rho_x K_x)$. By the block bounds of Lemma~\ref{rand:lem:bound}, $M_x\preceq\lambda_x\id$, and conjugation by an isometry is positive and unital on the qubit, so $K_x\preceq\lambda_x\id_2$ and $A_x=\lambda_x\id_2-K_x\succeq0$. Saturation $W^{[x]}=\lambda_x$ gives $\tr(\rho_xA_x)=0$ with both factors positive semidefinite, hence $A_x\rho_x=0$ and $\operatorname{supp}\rho_x\subseteq\ker A_x$. Writing $\rho_x=\sum_k p_{x,k}|\psi_{x,k}\rangle\langle\psi_{x,k}|$ with $p_{x,k}>0$, every component satisfies $K_x|\psi_{x,k}\rangle=\lambda_x|\psi_{x,k}\rangle$, so $|\phi_{x,k}\rangle=V|\psi_{x,k}\rangle\in\mathcal{S}$ saturates block $x$ and obeys the corresponding saturation relations of Lemma~\ref{rand:lem:bound}; in particular $\langle\phi_{2,\ell}|\B|\phi_{2,\ell}\rangle=2\sqrt2$ for every component of $\rho_2$. Now fix an arbitrary component $|\phi_{1,k}\rangle$ of the generation preparation and arbitrary components $|\phi_{0,j}\rangle$, $|\phi_{2,\ell}\rangle$ of the other two. The three vectors lie in $\mathcal{S}$ and satisfy their block relations, so the four paragraphs of the proof of Proposition~\ref{rand:prop:rigidity} apply verbatim to this triple and give $\{B_0,B_1\}|\phi_{1,k}\rangle=\{C_0,C_1\}|\phi_{1,k}\rangle=0$, hence, by Lemma~\ref{rand:lem:moments}, uniform outcome statistics for each $k$. Averaging over $k$ gives $p(b,c|s^*)=\tfrac14$ for the mixed generation state.
\end{proof}
 
\subsection{Proof of Theorem~\ref{rand:thm:main}}
 
For a classical adversary, the independence $q(\lambda|x,y,z)=q(\lambda)$ makes $W_{\mathcal{PAB}}$ affine over the adversarial decomposition, and each branch obeys $W_{\mathcal{PAB}}[p_\lambda]\le\WQ$ by the block bounds of Lemma~\ref{rand:lem:bound}. Hence $W_{\mathcal{PAB}}[p]=\WQ$ forces $W_{\mathcal{PAB}}[p_\lambda]=\WQ$ on every branch of nonzero weight. Dilating each branch by Lemma~\ref{rand:lem:dilation}, which is exact and requires no saturation hypothesis, Lemma~\ref{rand:lem:mixed} gives $p_\lambda(b,c|s^*)=\tfrac14$ for all $(b,c)$ on every branch. Therefore $G(BC|\Lambda,s^*;\WQ)=\sum_\lambda q_\lambda\cdot\tfrac14=\tfrac14$ and $\Hmin=2$. The explicit endpoint strategy of Lemma~\ref{rand:lem:bound} shows the feasible set at $w=\WQ$ is nonempty, so the statement is not vacuous. \hfill$\blacksquare$
 
\subsection{Symmetry and the four-input orbit}
 
The witness \eqref{eq:CC2} is the coefficient tensor $\gamma_{xyz}$ on the correlators $\corr{y}{z}{x}$, with $\gamma_{000}=\gamma_{011}=\gamma_{101}=2$, $\gamma_{110}=-2$, $\gamma_{200}=\gamma_{201}=\gamma_{211}=-1$, $\gamma_{210}=1$, and all other entries zero; no single-party terms appear. A relabeling of the experiment is a tuple $(\pi,\sigma_B,\sigma_C,u,v;\text{swap})$, with $\pi\in S_3$ acting on preparations, $\sigma_B,\sigma_C\in S_2$ on settings, outcome flips $u_y,v_z\in\{0,1\}$ for Bob and Charlie, and an optional exchange of the two receivers. Elements without the exchange act on correlators as
\begin{equation}\label{rand:eq:action}
\corr{y}{z}{x}\ \longmapsto\ (-1)^{u_y+v_z}\,\corr{\sigma_B(y)}{\sigma_C(z)}{\pi(x)},
\end{equation}
and transport the generation inputs as $(x,y,z)\mapsto(\pi(x),\sigma_B(y),\sigma_C(z))$. Elements with the exchange act as the receiver transposition $(y,z)\mapsto(z,y)$ \emph{followed} by \eqref{rand:eq:action}, that is,
\begin{equation}\label{rand:eq:actionswap}
\corr{y}{z}{x}\ \longmapsto\ (-1)^{u_z+v_y}\,\corr{\sigma_B(z)}{\sigma_C(y)}{\pi(x)},
\end{equation}
with input transport $(x,y,z)\mapsto(\pi(x),\sigma_B(z),\sigma_C(y))$. The composition order matters for the exchange elements, and Table~\ref{rand:tab:autos} refers to the convention \eqref{rand:eq:actionswap}; under the opposite order the flip vectors of elements 9, 12, 14, and 15 must be modified. We also note that the induced action on the twelve two-point correlators is not faithful: element 4 of Table~\ref{rand:tab:autos}, the simultaneous flip of all outputs of both receivers, acts trivially on every $\corr{y}{z}{x}$ but nontrivially on behaviors. All group statements below refer to the action on behaviors.
 
\begin{table}[t]
\caption{\label{rand:tab:autos}The sixteen relabelings that leave the coefficient tensor of $W_{\mathcal{PAB}}$ invariant, in the convention of Eqs.~\eqref{rand:eq:action}--\eqref{rand:eq:actionswap}, obtained by exhaustive enumeration of the $768$ candidates. Elements 2, 5, and 10 generate the group.}
\begin{ruledtabular}
\begin{tabular}{ccccccc}
\# & $\pi$ & $\sigma_B$ & $\sigma_C$ & $u=(u_0,u_1)$ & $v=(v_0,v_1)$ & swap\\
\hline
1  & id     & id     & id     & (0,0) & (0,0) & no\\
2  & id     & id     & id     & (0,1) & (0,1) & yes\\
3  & id     & id     & id     & (1,0) & (1,0) & yes\\
4  & id     & id     & id     & (1,1) & (1,1) & no\\
5  & id     & (01)   & (01)   & (0,0) & (0,0) & yes\\
6  & id     & (01)   & (01)   & (0,1) & (0,1) & no\\
7  & id     & (01)   & (01)   & (1,0) & (1,0) & no\\
8  & id     & (01)   & (01)   & (1,1) & (1,1) & yes\\
9  & (01)   & id     & (01)   & (0,0) & (0,1) & yes\\
10 & (01)   & id     & (01)   & (0,1) & (0,0) & no\\
11 & (01)   & id     & (01)   & (1,0) & (1,1) & no\\
12 & (01)   & id     & (01)   & (1,1) & (1,0) & yes\\
13 & (01)   & (01)   & id     & (0,0) & (1,0) & no\\
14 & (01)   & (01)   & id     & (0,1) & (1,1) & yes\\
15 & (01)   & (01)   & id     & (1,0) & (0,0) & yes\\
16 & (01)   & (01)   & id     & (1,1) & (0,1) & no\\
\end{tabular}
\end{ruledtabular}
\end{table}
 
\begin{proposition}[symmetry transport]\label{rand:prop:symmetry}
The relabelings that leave $\gamma$ invariant form a group of order 16, listed in Table~\ref{rand:tab:autos} and generated by the elements 2, 5, and 10, whose action on the twelve generation inputs has the three orbits \eqref{rand:eq:orbits}. If such a relabeling $T$ sends $s\mapsto s'$, then $G(BC|\Lambda,s;w)=G(BC|\Lambda,s';w)$ for every $w\le\WQ$, against the adversaries of Definition~\ref{rand:def:eve}. Consequently Theorem~\ref{rand:thm:main} holds for every $s\in O_B$.
\end{proposition}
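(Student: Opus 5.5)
The proof has three parts: show the sixteen listed relabelings form a group, compute its orbits on the generation inputs, and show that guessing probabilities are invariant along each orbit.

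\emph{Group structure.} First, each of the sixteen tuples in Table~\ref{rand:tab:autos} is checked to leave $\gamma$ invariant. Under the action \eqref{rand:eq:action} or \eqref{rand:eq:actionswap}, each nonzero coefficient $\gamma_{xyz}$ must be carried to an equal coefficient once the sign $(-1)^{u+v}$ is included. Every other relabeling must move a nonzero entry onto a zero entry or flip a sign. This completeness claim is the exhaustive enumeration over the $3!\cdot2\cdot2\cdot4\cdot4\cdot2=768$ candidates, a finite check. Some invariance constraints can also be seen by hand: $\pi$ must fix preparation $2$, since block $2$ has four nonzero entries of modulus $1$ while blocks $0,1$ have two entries of modulus $2$.

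Closure and the order-16 claim then follow because the set of $\gamma$-invariant relabelings is the stabilizer of $\gamma$ inside the finite relabeling group. The only subtlety is that the action must be defined on behaviors rather than on correlators, because element 4 acts trivially on correlators. I would fix the composition rule on behaviors, with the swap applied first, and verify that elements 2, 5 and 10 generate all sixteen by multiplying out words.

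\emph{Orbits.} Next I would compute the orbits on the twelve inputs $(x,y,z)$ from the input-transport maps. Since $\pi\in\{\mathrm{id},(01)\}$, input $x=2$ is fixed, which gives $O_C$ as the four inputs with $x=2$. For $x\in\{0,1\}$, the parity of $y\oplus z$ combined with $x$ is preserved. Element 10, for example, sends $(x,y,z)\mapsto(1-x,y,1-z)$ and so flips both $x$ and $y\oplus z$; elements 5 and 6 fix $x$ and swap or flip both settings. This splits the eight remaining inputs into $O_A$ (where $x\oplus y\oplus z=0$) and $O_B$. Orbit sizes are read off directly.

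\emph{Transport of guessing probability.} This is the main content. Given a $\gamma$-invariant $T$ with $T(s)=s'$, I would map any adversarial decomposition $\{q_\lambda,p_\lambda\}$ feasible at $(s,w)$ to $\{q_\lambda,T\cdot p_\lambda\}$. Three facts must be checked.
\begin{itemize}
\item Each $T\cdot p_\lambda$ lies in $\Qbr$. Permuting preparations, relabeling settings and outcomes, and exchanging the receivers by composing the broadcast channel with a swap of the output factors are all implementable within Definition~\ref{rand:def:eve} with a qubit message.
\item The witness is preserved: $W[T\cdot p_\lambda]=W[p_\lambda]$, by invariance of $\gamma$ and linearity, so the constraint $\sum_\lambda q_\lambda W\ge w$ still holds.
\item The guessing term transforms correctly. $T$ induces a bijection between outcome pairs at $s$ and at $s'$ (outcome flips, possibly combined with the exchange $(b,c)\mapsto(c,b)$), so $\max_{b,c}(T\cdot p_\lambda)(b,c|s')=\max_{b,c}p_\lambda(b,c|s)$.
\end{itemize}
These give $G(s';w)\ge G(s;w)$. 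Applying $T^{-1}$, which is also in the group, gives the reverse inequality.

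Finally, $s^*=(1,0,0)\in O_B$, so Theorem~\ref{rand:thm:main} transfers verbatim to every $s\in O_B$, including the claim that every branch is uniform, since uniformity is preserved by outcome bijections. I expect the main obstacle to be bookkeeping rather than ideas: keeping the swap-composition convention consistent, so that the table entries and the generated group agree, and correctly handling the non-faithful correlator action.
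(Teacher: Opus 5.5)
Your proposal is correct and follows essentially the same route as the paper. The group, its generators and its orbits come from the finite enumeration over the $768$ candidate relabelings. Invariance of $G$ follows by mapping each feasible decomposition $\{q_\lambda,p_\lambda\}$ to $\{q_\lambda,Tp_\lambda\}$: the witness is preserved by invariance of $\gamma$, the guessing term by the induced bijection of output pairs, and $T^{-1}$ gives the reverse inequality. Your hand arguments that $\pi$ must fix preparation $2$ and that $x\oplus y\oplus z$ is an invariant separating $O_A$ from $O_B$ are correct and make explicit what the paper leaves to computation.
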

 
\begin{proof}
The group and its orbits are a finite computation. Of the $|S_3|\cdot|S_2|^2\cdot2^4\cdot2=768$ candidate relabelings, exactly the sixteen of Table~\ref{rand:tab:autos} map $\gamma$ to itself, their action on the inputs produces the three orbits \eqref{rand:eq:orbits} in the convention stated above, and the subgroup generated by elements 2, 5, and 10 exhausts the sixteen elements. The guessing probability is invariant because a relabeling maps PAB behaviors to PAB behaviors of the same class: permutations of preparations, settings, and outcomes act on the classical processing of a PAB realization, and the receiver exchange maps a realization to another with $\mathcal{H}_B$ and $\mathcal{H}_C$ interchanged, in all cases preserving the message bound $d=2$, the tensor structure, and finite-dimensionality. Exact invariance of $\gamma$ gives $W_{\mathcal{PAB}}[Tp]=W_{\mathcal{PAB}}[p]$ for every behavior $p$, and the output relabeling permutes the four output pairs at the transported input, so $\max_{b,c}p(b,c|s)=\max_{b',c'}(Tp)(b',c'|s')$. Hence any feasible decomposition $\{q_\lambda,p_\lambda\}$ at $(s,w)$ maps to a feasible decomposition $\{q_\lambda,Tp_\lambda\}$ at $(s',w)$ with the same guessing value, and conversely with $T^{-1}$; taking suprema gives the equality of the two guessing probabilities. Explicit maps carrying $(1,0,0)$ across $O_B$ are elements 5, 9, and 13 of Table~\ref{rand:tab:autos}, with
\begin{equation}\label{rand:eq:orbitmaps}
(1,0,0)\ \mapsto\ (1,1,1),\ (0,0,1),\ (0,1,0),
\end{equation}
respectively. Combined with Theorem~\ref{rand:thm:main} at $s^*=(1,0,0)$, this proves the endpoint statement for all of $O_B$.
\end{proof}
 
\begin{remark}[the orbit $O_A$]\label{rand:rem:OA}
At $W_{\mathcal{PAB}}=\WQ$, every adversarial branch obeys the saturation relations of Lemma~\ref{rand:lem:bound}, applied componentwise as in Lemma~\ref{rand:lem:mixed}, so $\corr{0}{0}{0}=\corr{1}{1}{0}=\corr{0}{1}{1}=1$ and $\corr{1}{0}{1}=-1$ on every branch. At each $s\in O_A$ the output pair is therefore perfectly correlated or anticorrelated, $\max_{b,c}p_\lambda(b,c|s)\ge\tfrac12$, and $\Hmin(BC|\Lambda,s)\le1$. The two-bit certificate is thus a property of the orbit $O_B$, not of the maximal violation alone.
\end{remark}
 
\subsection{Decoupling of the channel environment at the endpoint}
 
\begin{lemma}[environment decoupling]\label{rand:lem:decoupling}
In the representation \eqref{rand:eq:dilrep} with pure preparations and $W_{\mathcal{PAB}}=\WQ$, the conditional state of the Stinespring environment $\mathcal{H}_R$ given the outputs at $s^*$ is outcome-independent: writing $\rho_R^{(b,c)}$ for the state of $\mathcal{H}_R$ conditioned on the pair $(b,c)$,
\begin{equation}\label{rand:eq:decouple}
p(b,c|s^*)\;\rho_R^{(b,c)}=\tfrac14\,\rho_R\qquad\text{for all }(b,c).
\end{equation}
In particular $\Hmin(BC|R,s^*)=2$ in this restricted pure-input model.
\end{lemma}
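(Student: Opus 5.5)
We have to show that, at the generation input, the reduced state of $\mathcal{H}_R$ is the same for every outcome pair. Write $\Pi_{bc}=\Pi^B_{b|0}\Pi^C_{c|0}\otimes\id_R$, the projector for the pair $(b,c)$ at $s^*$. The unnormalized conditional state is
\[
p(b,c|s^*)\,\rho_R^{(b,c)}=\tr_{BC}\!\big[\Pi_{bc}|\phi_1\rangle\langle\phi_1|\big].
\]
Using $\Pi^B_{b|0}=\tfrac12(\id+(-1)^bB_0)$ and the analogous formula for Charlie, this equals
\[
\tfrac14\,\tr_{BC}\!\big[(\id+(-1)^bB_0+(-1)^cC_0+(-1)^{b+c}B_0C_0)\,|\phi_1\rangle\langle\phi_1|\big].
\]
The identity term gives exactly $\tfrac14\rho_R$. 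So the claim reduces to showing that the three operators $\tr_{BC}[O|\phi_1\rangle\langle\phi_1|]$ vanish for $O\in\{B_0,C_0,B_0C_0\}$. This is the operator-valued strengthening of Lemma~\ref{rand:lem:moments}: there only the traces of these operators were shown to vanish.

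For the strengthening, I would reuse the conjugation identities \eqref{rand:eq:conjugation}, which no longer need to be approximate. By Proposition~\ref{rand:prop:rigidity}, $\{B_0,B_1\}|\phi_1\rangle=\{C_0,C_1\}|\phi_1\rangle=0$. By \eqref{rand:eq:S2}, $R|\phi_1\rangle=-|\phi_1\rangle$ with $R=B_1C_0$, and $S|\phi_1\rangle=|\phi_1\rangle$ with $S=B_0C_1$. Take $X=B_0$ and let $Y$ be any operator on $\mathcal{H}_R$, so $Y$ commutes with $R$, $B_0$ and $B_1$. Then
\[
\langle\phi_1|B_0Y|\phi_1\rangle=\langle\phi_1|RB_0YR|\phi_1\rangle=\langle\phi_1|RB_0R\,Y|\phi_1\rangle,
\]
where the first equality uses $R|\phi_1\rangle=-|\phi_1\rangle$ twice (the signs cancel) and the second uses $[Y,R]=0$. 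By \eqref{rand:eq:conjugation}, $RB_0R=\{B_0,B_1\}B_1-B_0$. The anticommutator term is Hermitian and annihilates $|\phi_1\rangle$, so $\langle\phi_1|\{B_0,B_1\}B_1Y|\phi_1\rangle=\langle\{B_0,B_1\}\phi_1|B_1Y\phi_1\rangle=0$. Hence $\langle\phi_1|B_0Y|\phi_1\rangle=-\langle\phi_1|B_0Y|\phi_1\rangle=0$.

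The same argument works for $X=C_0$ using $S$ and the $C$ anticommutator. It also works for $X=B_0C_0$ using $R$, since $R(B_0C_0)R=(\{B_0,B_1\}B_1-B_0)C_0$ and $\{B_0,B_1\}$ commutes with $C_0$, so it can still be moved to act first on $\phi_1$. Since $\langle\phi_1|XY|\phi_1\rangle=\tr[Y\,\tr_{BC}(X|\phi_1\rangle\langle\phi_1|)]$ for every $Y$, each partial trace vanishes. This gives \eqref{rand:eq:decouple}, with $p(b,c|s^*)=\tfrac14$ recovered by taking the trace.

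For the min-entropy consequence: the classical--quantum state on the output pair and $R$ is then $\sum_{b,c}\tfrac14|bc\rangle\langle bc|\otimes\rho_R$, a product state with a uniform four-valued classical register. Any measurement on $R$ therefore guesses correctly with probability $\tfrac14$, so $H_{\min}(BC|R,s^*)=2$. The step needing the most care is the exact commutation bookkeeping. Every operator on $\mathcal{H}_R$ must commute with all $B_y,C_z$, which holds by the tensor structure in Lemma~\ref{rand:lem:dilation}. Each anticommutator also has to be placed so that, as a Hermitian operator, it acts directly on $|\phi_1\rangle$; in the $B_0C_0$ case this relies on $\{B_0,B_1\}$ commuting with $C_0$.
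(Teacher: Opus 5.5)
Your proof is correct, but it takes a different route from the paper's.

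\textbf{The paper's route.} The paper goes back to the Jordan decomposition from Proposition~\ref{rand:prop:rigidity}. On each good joint block it fixes bases with $B_0=Z\otimes\id$, $B_1=\sigma^B_jX\otimes\id$, $C_0=\id\otimes Z$, $C_1=\sigma^C_k\,\id\otimes X$. It then observes that the projected relations \eqref{rand:eq:S2} make $P_{jk}|\phi_1\rangle$ a $+1$ eigenvector of two independent commuting two-qubit Paulis. This forces the factorization $P_{jk}|\phi_1\rangle=|\chi_{jk}\rangle\otimes|\eta_{jk}\rangle$ with a fixed stabilizer state $|\chi_{jk}\rangle$. The uniform statistics are read off block by block, after checking that cross-block terms drop out of the partial trace.

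\textbf{Your route.} You use only the four vector relations $B_1C_0\phi_1=-\phi_1$, $B_0C_1\phi_1=\phi_1$, $\{B_0,B_1\}\phi_1=0$ and $\{C_0,C_1\}\phi_1=0$. You lift the scalar argument of Lemma~\ref{rand:lem:moments} to an operator-valued one by inserting an arbitrary $Y$ on $\mathcal{H}_R$, which commutes with the conjugating involutions. Your identities all check out, including $R(B_0C_0)R=(\{B_0,B_1\}B_1-B_0)C_0$. In every case the anticommutator ends up leftmost, where it annihilates the bra. So the remark about $\{B_0,B_1\}$ commuting with $C_0$ is harmless but not actually needed.

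\textbf{What each buys.} Your argument is basis-free and avoids the sign bookkeeping and the cross-block step. It also shows that decoupling holds for any vector satisfying those four relations, against any environment operators commuting with Bob's and Charlie's algebras. Combined with \eqref{rand:eq:stability}, applied with $X$ replaced by $XY$ and $\|Y\|\le1$, it gives for free a robust trace-norm bound on $\tr_{BC}[X|\phi_1\rangle\langle\phi_1|]$ in terms of $\Delta_B,\Delta_C,e_1,e_1'$. The paper's blockwise route is exact-only and does not directly give this. In exchange, the paper obtains the stronger structural statement that $|\phi_1\rangle$ factorizes blockwise into a fixed stabilizer state times an environment vector. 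It reuses this factorization in Corollary~\ref{rand:cor:local} to get $\avg{B_1}{1}=\avg{C_1}{1}=\corr{1}{1}{1}=0$. Your method recovers these as well, for example by conjugating $B_1$ with $B_0C_1$ and $C_1$ with $B_1C_0$.
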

 
\begin{proof}
By Proposition~\ref{rand:prop:rigidity}, $|\phi_1\rangle$ is supported on the good joint blocks, and on each such block the Jordan bases give $B_0=Z\otimes\id$, $B_1=\sigma^B_{j}\,X\otimes\id$, $C_0=\id\otimes Z$, $C_1=\sigma^C_{k}\,\id\otimes X$ with signs $\sigma^B_j,\sigma^C_k\in\{\pm1\}$, acting on $\mathbb{C}^2\otimes\mathbb{C}^2\otimes\mathcal{H}_R$. Since the block projectors commute with every $B_yC_z$, the relations \eqref{rand:eq:S2} project blockwise: the component $|\psi_{jk}\rangle=P_{jk}|\phi_1\rangle$ satisfies $S_1|\psi_{jk}\rangle=|\psi_{jk}\rangle$ and $S_2|\psi_{jk}\rangle=|\psi_{jk}\rangle$ for the two commuting and independent two-qubit Pauli operators $S_1=Z\otimes\sigma^C_kX$ and $S_2=-\sigma^B_jX\otimes Z$, both acting trivially on $\mathcal{H}_R$. Their common $+1$ eigenspace in $\mathbb{C}^4$ is one-dimensional, spanned by a stabilizer state $|\chi_{jk}\rangle$, so
\begin{equation}\label{rand:eq:factorize}
|\psi_{jk}\rangle=|\chi_{jk}\rangle\otimes|\eta_{jk}\rangle,\qquad
\sum_{jk}\||\eta_{jk}\rangle\|^2=1 .
\end{equation}
The generation observables restrict on each good block to $Z\otimes Z$, $Z\otimes\id$, and $\id\otimes Z$, each of which anticommutes with $S_1$ or $S_2$, so all three have zero expectation on $|\chi_{jk}\rangle$ and $\langle\chi_{jk}|\Pi^B_{b|0}\Pi^C_{c|0}|\chi_{jk}\rangle=\tfrac14$ for every $(b,c)$. Since the measurement projectors preserve the blocks, cross-block terms vanish in the partial trace over $\mathcal{H}_B\otimes\mathcal{H}_C$, and
\begin{equation}\label{rand:eq:decoupleproof}
\begin{split}
p(b,c|s^*)\,\rho_R^{(b,c)}
&=\sum_{jk}\langle\chi_{jk}|\Pi^B_{b|0}\Pi^C_{c|0}|\chi_{jk}\rangle\,
|\eta_{jk}\rangle\langle\eta_{jk}|\\
&=\tfrac14\sum_{jk}|\eta_{jk}\rangle\langle\eta_{jk}|,
\end{split}
\end{equation}
independent of $(b,c)$.
\end{proof}

\begin{corollary}[Bell locality of the generation slice]\label{rand:cor:local}
In the representation \eqref{rand:eq:dilrep} with pure preparations and $W_{\mathcal{PAB}}=\WQ$, the conditional behavior at $x=1$ has vanishing single-party marginals,
\begin{equation}\label{rand:eq:localmarginals}
\avg{B_0}{1}=\avg{B_1}{1}=\avg{C_0}{1}=\avg{C_1}{1}=0,
\end{equation}
and correlators
\begin{equation}\label{rand:eq:localcorrelators}
\corr{0}{1}{1}=1,\qquad \corr{1}{0}{1}=-1,\qquad\\
\corr{0}{0}{1}=\corr{1}{1}{1}=0 .
\end{equation}
Every CHSH expression evaluated on this behavior has modulus at most the local bound $2$, and the behavior is reproduced exactly by an explicit local hidden-variable model. The same conclusion holds for possibly mixed preparations, by the componentwise argument of Lemma~\ref{rand:lem:mixed}.
\end{corollary}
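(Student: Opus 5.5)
The marginals come from the earlier results directly, the correlators from the saturation relations plus the rigidity. Start from Proposition~\ref{rand:prop:rigidity}: at $W_{\mathcal{PAB}}=\WQ$ with pure preparations, $|\phi_1\rangle$ lies in the good subspace. There $\{B_0,B_1\}=\{C_0,C_1\}=0$, so $\Delta_B=\Delta_C=0$. Lemma~\ref{rand:lem:bound} supplies \eqref{rand:eq:S2}, so $e_1=e_1'=0$. Lemma~\ref{rand:lem:moments} then gives $\avg{B_0}{1}=\avg{C_0}{1}=\avg{B_0C_0}{1}=0$.

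For the remaining quantities I would use the conjugation trick of \eqref{rand:eq:conjugation} with the roles of the settings exchanged. Take $S=B_0C_1$, which fixes $|\phi_1\rangle$. Since $B_0$ anticommutes with $B_1$ on the good subspace, $SB_1S=-B_1$ there. Hence $\avg{B_1}{1}=\langle\phi_1|SB_1S|\phi_1\rangle=-\avg{B_1}{1}$, so it vanishes. Similarly $R=B_1C_0$ acts as $-1$ on $|\phi_1\rangle$ and satisfies $RC_1R=-C_1$, giving $\avg{C_1}{1}=0$. For $\avg{B_1C_1}{1}$, note that $S$ commutes with $B_1C_1$ on good blocks: each of the two factors picks up a sign. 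So instead I would use $RB_1C_1R$. Here $B_1$ commutes with itself, while $C_1$ anticommutes with $C_0$, so $RB_1C_1R=-B_1C_1$ and the expectation vanishes. The correlators $\corr{0}{1}{1}=1$ and $\corr{1}{0}{1}=-1$ are just \eqref{rand:eq:S2}.

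For locality, the slice has zero marginals and correlator matrix $E=\begin{pmatrix}0&1\\-1&0\end{pmatrix}$. Every CHSH relabeling has the form $\pm E_{00}\pm E_{01}\pm E_{10}\pm E_{11}$ with an odd number of minus signs, so its modulus is at most $|E_{01}|+|E_{10}|=2$. An explicit LHV model is the following. Let $\lambda=(\beta_0,\beta_1)$ be uniform on $\{\pm1\}^2$, set Bob's outputs to $B_y=\beta_y$, and set Charlie's outputs to $C_1=\beta_0$ and $C_0=-\beta_1$. This reproduces $\langle B_0C_1\rangle=1$, $\langle B_1C_0\rangle=-1$, $\langle B_0C_0\rangle=\langle B_1C_1\rangle=0$, and all marginals zero. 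With binary outcomes, marginals and correlators determine the full behavior via \eqref{rand:eq:probform}.

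For mixed preparations I would proceed as in Lemma~\ref{rand:lem:mixed}. Each component of $\rho_1$ saturates block $1$ and, combined with any components of $\rho_0$ and $\rho_2$, lands in the good subspace. All of the identities above therefore hold componentwise, and averaging preserves them.

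The main obstacle is ensuring that the anticommutation relations hold on the vectors actually needed, not merely as statements about $|\phi_1\rangle$. This is handled by working blockwise: the good blocks are invariant under all $B_y,C_z$, so $S|\phi_1\rangle$ and $R|\phi_1\rangle$ remain in the good subspace. There the operator identities $\{B_0,B_1\}=\{C_0,C_1\}=0$ hold exactly, not just on a single vector.
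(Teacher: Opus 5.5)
Your proof is correct and follows the paper's route. The moments $\avg{B_0}{1}$, $\avg{C_0}{1}$, $\corr{0}{0}{1}$ come from Proposition~\ref{rand:prop:rigidity} with Lemma~\ref{rand:lem:moments}. The remaining three vanish because, on the invariant good subspace, they anticommute with $B_0C_1$ or $B_1C_0$, which fix $|\phi_1\rangle$ up to sign; these are the paper's $S_1$ and $-S_2$. The paper routes this through the block factorization $|\chi_{jk}\rangle\otimes|\eta_{jk}\rangle$ of Lemma~\ref{rand:lem:decoupling}, whereas you conjugate directly on the good subspace, which is a harmless streamlining. One slip: $S=B_0C_1$ anticommutes rather than commutes with $B_1C_1$ on good blocks, since only $B_1$ picks up a sign. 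So $S$ would also work there, although your argument via $R$ is already correct.
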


\begin{proof}
The values $\corr{0}{1}{1}=1$ and $\corr{1}{0}{1}=-1$ are the saturation relations \eqref{rand:eq:S2}, while $\avg{B_0}{1}=\avg{C_0}{1}=\corr{0}{0}{1}=0$ follow from Proposition~\ref{rand:prop:rigidity} together with Lemma~\ref{rand:lem:moments}. For the three remaining moments we use the block decomposition of the proof of Lemma~\ref{rand:lem:decoupling}, in which $|\phi_1\rangle$ is supported on the good joint blocks and the Jordan bases give $B_0=Z\otimes\id$, $B_1=\sigma^B_j\,X\otimes\id$, $C_0=\id\otimes Z$, and $C_1=\sigma^C_k\,\id\otimes X$, with the component $|\psi_{jk}\rangle=P_{jk}|\phi_1\rangle$ factorizing as $|\chi_{jk}\rangle\otimes|\eta_{jk}\rangle$ for the stabilizer state $|\chi_{jk}\rangle$ fixed by $S_1=Z\otimes\sigma^C_kX$ and $S_2=-\sigma^B_jX\otimes Z$. On such a block the operators $B_1C_1=\sigma^B_j\sigma^C_k\,X\otimes X$, $B_1=\sigma^B_j\,X\otimes\id$, and $C_1=\sigma^C_k\,\id\otimes X$ each anticommute with $S_1$ or with $S_2$, hence have vanishing expectation on $|\chi_{jk}\rangle$, and since the block projectors commute with $B_1$ and with $C_1$ no cross-block term survives, which gives $\corr{1}{1}{1}=\avg{B_1}{1}=\avg{C_1}{1}=0$ and completes \eqref{rand:eq:localmarginals}--\eqref{rand:eq:localcorrelators}. Since all marginals vanish, the four inequivalent CHSH functionals evaluate on \eqref{rand:eq:localcorrelators} to $0$, $2$, $-2$ and $0$, so the maximum modulus over the eight relabelings equals $2$ and the behavior sits on the boundary of the local set. For the explicit model, let $\beta_y$ and $\gamma_z$ in $\{\pm1\}$ denote the deterministic values assigned to the reflections $B_y$ and $C_z$, related to the outputs by $\beta_y=(-1)^{b}$ and $\gamma_z=(-1)^{c}$, and take the uniform mixture of the four assignments in which $\beta_0$ and $\beta_1$ are drawn independently while $\gamma_1=\beta_0$ and $\gamma_0=-\beta_1$. Then $\corr{0}{1}{1}=\langle\beta_0\gamma_1\rangle=\langle\beta_0^2\rangle=1$ and $\corr{1}{0}{1}=\langle\beta_1\gamma_0\rangle=-\langle\beta_1^2\rangle=-1$, whereas $\corr{0}{0}{1}=-\langle\beta_0\beta_1\rangle$ and $\corr{1}{1}{1}=+\langle\beta_0\beta_1\rangle$, and the product $\beta_0\beta_1$ is itself uniformly distributed on $\{\pm1\}$ because the two signs are drawn independently, so both correlators vanish. The four marginals vanish for the same reason, and the model reproduces \eqref{rand:eq:localmarginals}--\eqref{rand:eq:localcorrelators} exactly. \end{proof}
 
\begin{remark}[scope]\label{rand:rem:scope}
Lemma~\ref{rand:lem:decoupling} is a statement about the Stinespring dilation of the broadcast channel in the pure-input model. It is not a security statement against a quantum adversary who holds a purification $|\Psi_x\rangle_{QE}$ of the preparations with only $Q$ sent to the channel: in that case the relevant global states $(V\otimes\id_E)|\Psi_x\rangle$ need not lie in a two-dimensional subspace, the rank-two transfer of Proposition~\ref{rand:prop:rigidity} does not apply, and quantum-adversary security remains open.
\end{remark}

\end{document}